     \newtheorem*{lem*}{\protect\lemmaname}
  \providecommand{\lemmaname}{Lemma}
  \newtheorem*{prop*}{\protect\propositionname}
\newtheorem{theorem}{Theorem}
\newtheorem{proposition}[theorem]{Proposition}
\newtheorem{example}[theorem]{Example}
\newtheorem{lemma}[theorem]{Lemma}
\newtheorem{prop}[theorem]{Proposition}
\numberwithin{equation}{section}
\numberwithin{theorem}{section}
\newcommand{\be}{\begin{equation}}
\newcommand{\ee}{\end{equation}}
\def\G{\mathbb{G}}
\def\H{\mathbb{H}}
\def\R{\mathbb{R}}
\def\Q{\mathbb{Q}}
\def\F{\mathcal{F}}
  \providecommand{\propositionname}{Proposition}
\begin{document}

\title{Accounting for  Earnings Announcements in  the Pricing of \\
Equity Options}

\author{Tim Leung\thanks{{Industrial Engineering \& Operations Research Department, Columbia University, New York, NY 10027, USA. \mbox{E-mail: leung@\,ieor.columbia.edu}.
Corresponding author. }} \and Marco Santoli\thanks{{Industrial Engineering \& Operations Research Department, Columbia
University, New York, NY 10027, USA.   \mbox{E-mail: ms4164@columbia.edu}. 
} }}

\date{\today}

\maketitle

\begin{small}
\begin{abstract}
We study an option pricing framework that accounts for the price impact of an earnings announcement (EA), and analyze the behavior of the implied volatility surface prior to  the event. On the announcement date, we incorporate  a random jump to the stock price to represent the  shock due to earnings. We consider  different  distributions of the scheduled
earnings jump as well as different underlying stock price dynamics before and after the EA date.    Our main contributions include analytical option pricing formulas when the underlying stock price  follows the Kou model  along with a double-exponential or  Gaussian EA jump on the   announcement date.  Furthermore, we  derive  analytic  bounds and asymptotics for the pre-EA implied volatility   under various models.  The calibration results demonstrate adequate fit of the entire implied volatility surface prior to an announcement. We also compare  the risk-neutral  distribution of the EA jump  to its historical distribution. Finally, we discuss the valuation and exercise strategy of pre-EA American options, and illustrate an analytical approximation and numerical results.

\end{abstract}
 
\noindent \textbf{Keywords:} earnings announcement,  equity options,  pre-earnings announcement implied volatility  \\
 \textbf{JEL Classification:} G12, G13, G14

\tableofcontents
\end{small}
\section{Introduction}\label{sec:introduction}
Public companies routinely  release  summaries of their operations and  performance,  including   income statements,  balance sheets, and other reports.  Such  events are  commonly referred to as  \textit{earnings announcements}.   In the   market session following a scheduled announcement\footnote{Earnings reports are commonly released after the market closes. Therefore,  the   price impact of earnings are first reflected  at the beginning of the  next market session.},  empirical studies (see, for example, \cite{Patell1984}) suggest that  the opening  stock price can  move drastically. For instance, \cite{Dubinsky2006}  study a sample of stocks and report that the variance of  stock price returns on the earnings announcement date is over  five   times greater than those on other dates.  In addition to the immediate price impact, earnings releases may  also   affect the drift of the
stock price over a longer horizon. This empirical  observation is commonly  referred to as the post-earnings-announcement drift; see \cite{Chordia2006} and references therein.

Since many public companies also have options written on their stock prices, this motivates us to investigate the problem of pricing equity options prior to an earnings announcement (EA).  As options  are intrinsically  {forward-looking}   contracts, their prices should account for  the  uncertain stock price impact of an upcoming  earnings release. A natural question is how to extract  some information on such an impact from observed options prices, especially a few days before the announcement. Since traders  often quote or study {the} implied volatility (IV) for each option, in practice it is important to  better understand the  behavior of the \textit{pre-earnings announcement implied volatility} (PEAIV).

\subsection{Market Observations\label{sub:market_observations}}
One main feature of the  PEAIV is that it tends to rise, often  rather  drastically, up until the announcement date.  In Figure \ref{fig:vol_example},  we show  the price and IV  of the front month ({i.e.} with the nearest maturity date) at-the-money (ATM) call option written on    IBM.   Following an earnings announcement on July 17, 2013, the option expires  on the Friday   in  the same week.  As we can see on the right panel, the IV increases rapidly from 30\% to over 70\%  as time  approaches  the earnings announcement, and then drops significantly to close to 20\% after the earnings report is released.   At the same time, the ATM call price   stays well above zero  even though the time to maturity is very short (Figure \ref{fig:vol_example} (left)).   Moreover, we  observe  that the IV reaches a maximum just before an earnings event.

Let us briefly discuss the intuitions behind these observations.  First, the market   price of an  option expiring after earnings announcement  reflects the possibility of a  jump  in the stock price on the announcement date.  Since the Black-Scholes formula, which is  used to convert option prices to IVs,  assumes a log-normal  {model for the} underlying price without jumps,  the only way to account for the deterministically timed yet  random jump in the stock price  is to apply a higher
volatility parameter. Furthermore, since the volatility parameter in the Black-Scholes formula is coupled with time-to-maturity, an even higher volatility parameter is needed as time progresses.  As a result, even though the stock price may not experience  a higher (historical) volatility or clear directional movements before earnings, the (risk-neutral)  IV tends to increase.    However, as soon as     the  earnings announcement is made, the jump in the stock price is realized and the volatility effect will also disappear. 

To illustrate that the increasing PEAIV is more than a coincidence, in Figure \ref{fig:IV_time_series} (left) we plot the time series of the IV of the ATM call option for
IBM from January 1996 to August 2013. In the same figure,  we mark in red the IV values on days during which earnings announcements were made. As we can see, spikes in volatility most often  coincide with the  earnings release dates. We can also gain some insight from the shape of the implied volatility surface.  In Figure \ref{fig:IV_time_series} (right), we
show the IV surface of IBM on July 15, 2013, two days before an earnings
release. It is also clear that the front month options, which have 1 day to expiration, have
very high volatility. The implied volatility is  significantly lower for longer-term options, suggesting a  less pronounced effect of the upcoming earnings announcement.  

\begin{figure}[th]
\centering
\includegraphics[scale=0.43]{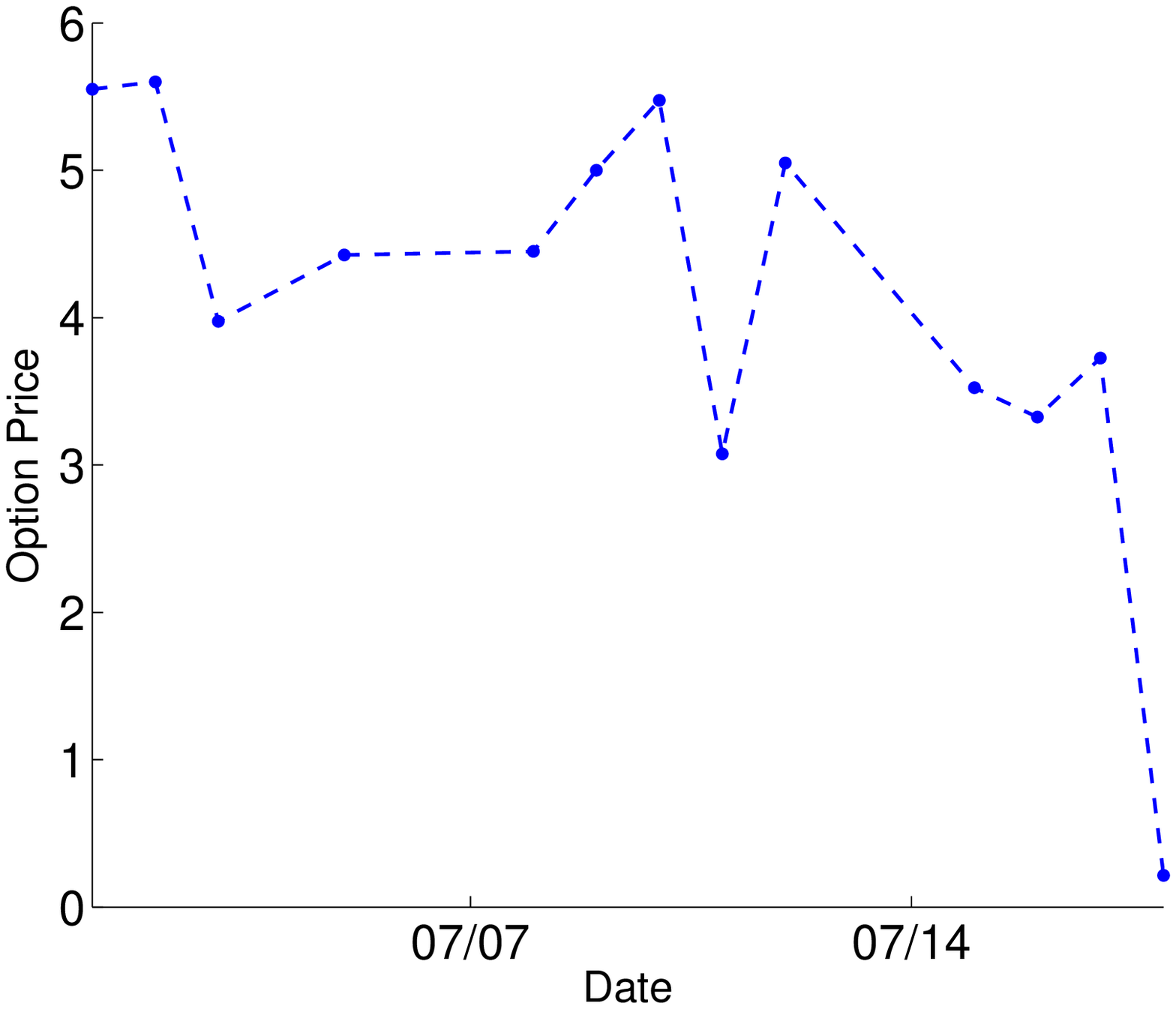}   \includegraphics[scale=0.43]{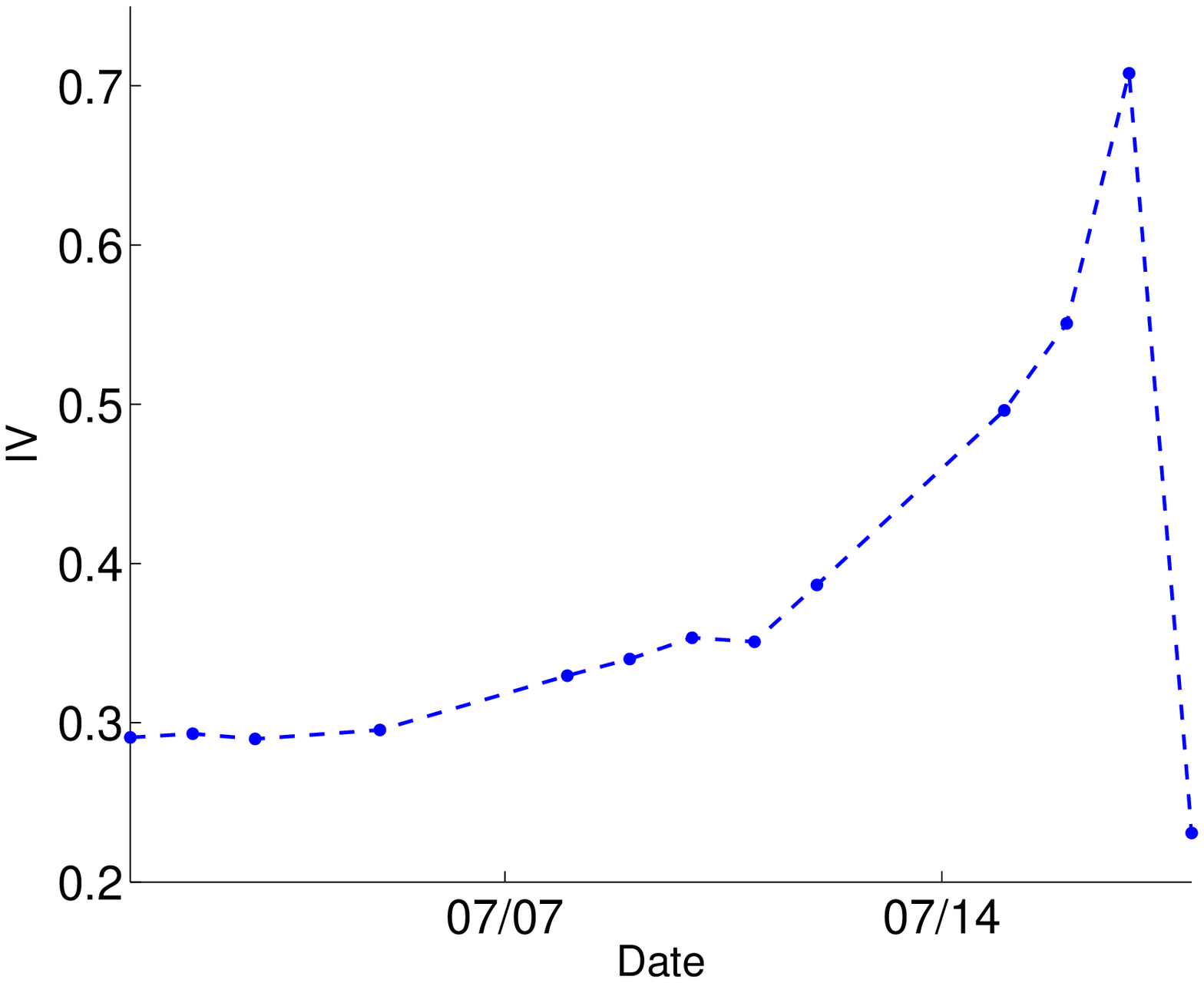}\\
\caption{\small The price (left) and implied volatility (right) of the front month ATM
option  on IBM with expiration date July 18, 2013.  The earnings announcement date is July 17, 2013.\label{fig:vol_example}
}\end{figure}

\begin{figure}[th]
\centering{}\includegraphics[scale=0.44]{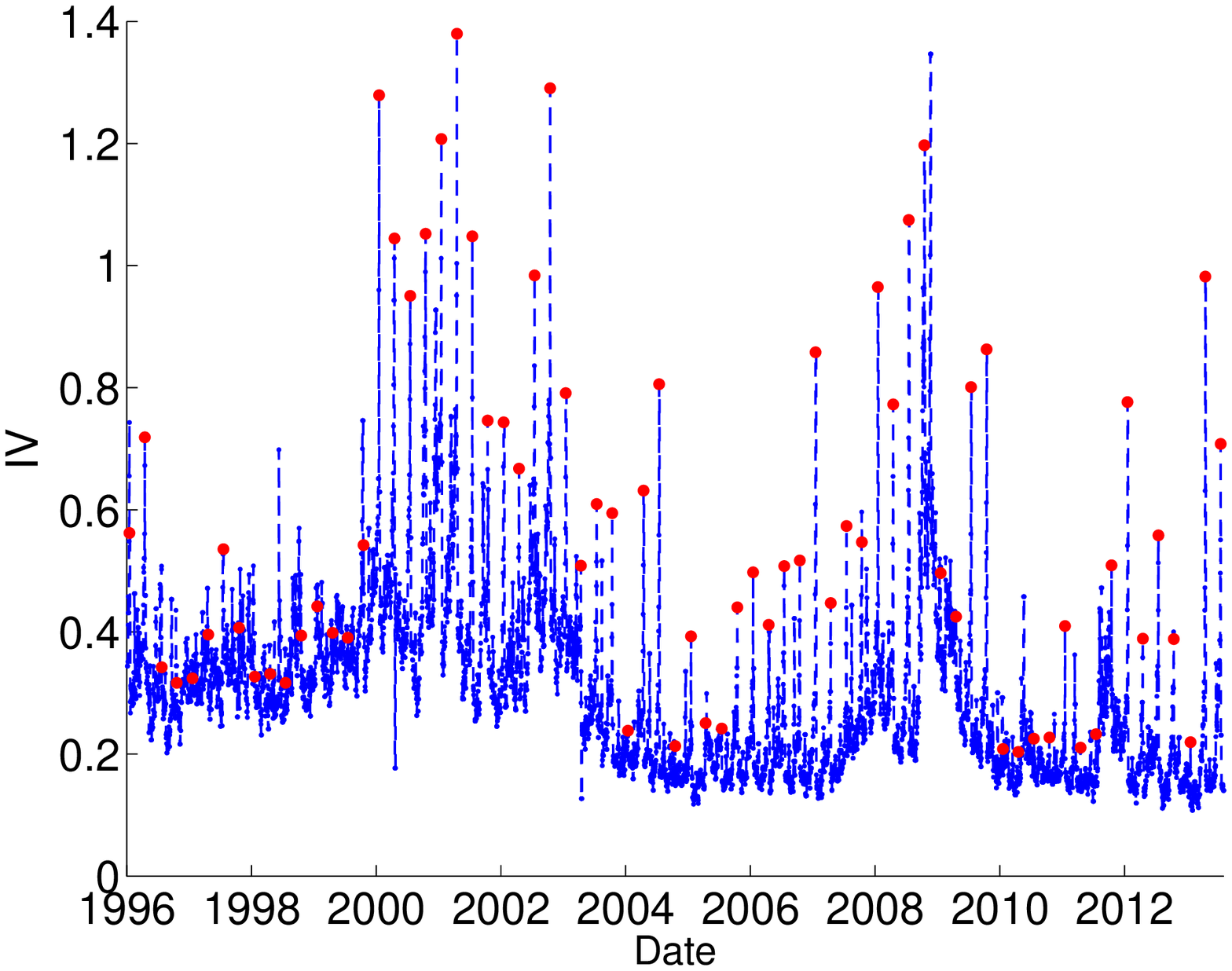}\includegraphics[scale=0.43]{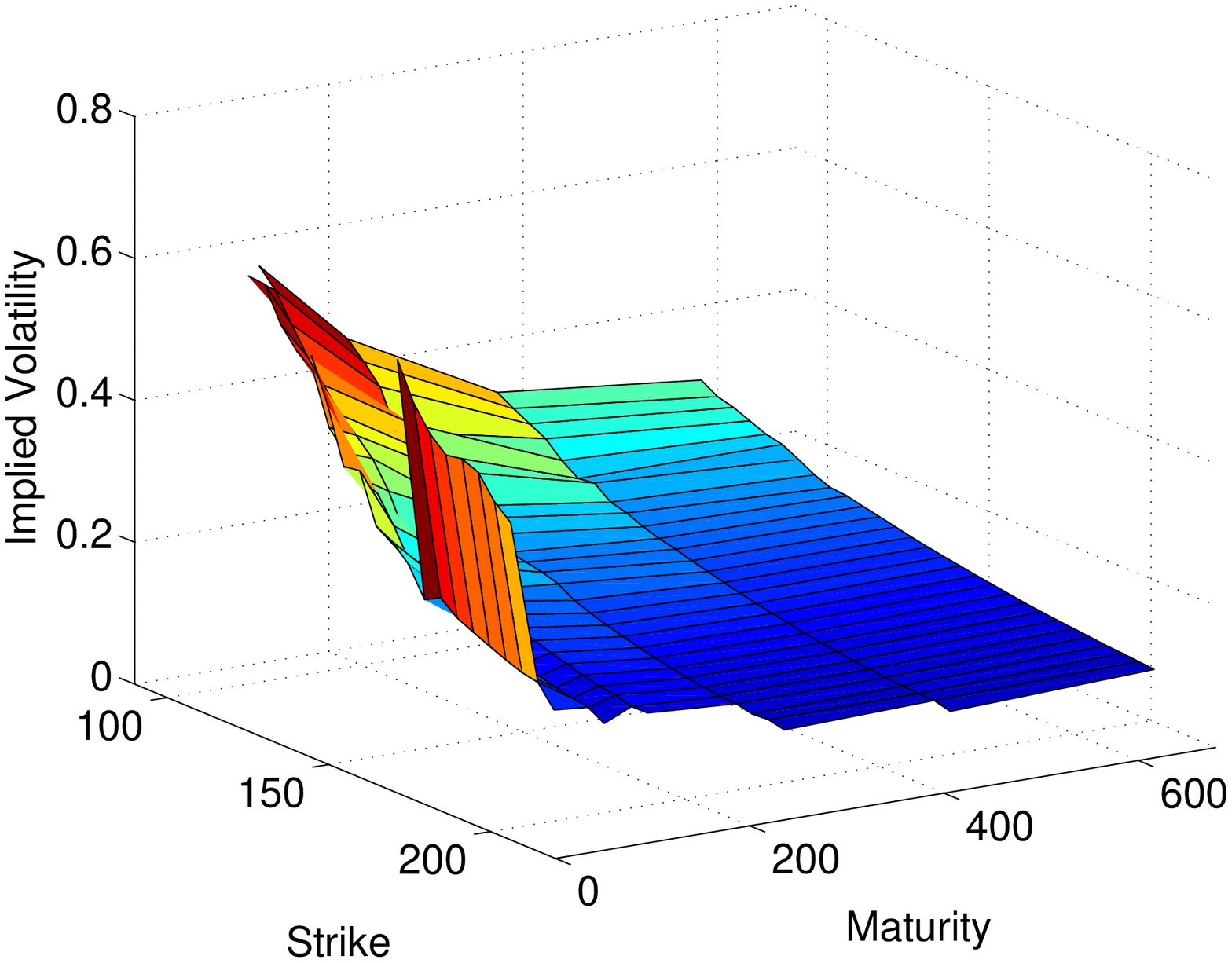}\caption{\small The time series (left) of the IV of the front month
ATM call on IBM. The red dots mark the value of the IV on EA dates. The IV surface (right) of IBM at
the market close on July 15, 2013, two days before an EA.
\label{fig:IV_time_series} \label{fig:Cal_example_surface}}
\end{figure}

\subsection{Objectives and Related Studies\label{sub:related_literature}}
These market observations call for a  pricing framework  that  explicitly incorporates the jump to be realized on the earnings announcement date.   In this paper, we analyze an option pricing framework that accounts for the price impact of an earnings announcement, with emphasis on the  behavior of the implied volatility  prior to  the event. Specifically, we   introduce  a random-sized jump scheduled on the earnings announcement date to represent the shock to the company stock price. The introduction of a jump related to earnings is consistent
with prior  empirical studies (\citet{Maheu2004,Piazzesi2005,Dubinsky2006,Lee2008}). We first apply this idea to the \cite{Black1973} (BS) model, where we  obtain formulas for  the pre-EA European option price, its IV and various Greeks.  In this model, we find that the IV increases with a specific pattern as time approaches the announcement date.  We further incorporate additional  random-time jumps in the stock price by extending the \cite{Kou2002} double-exponential (DE) jump-diffusion model to incorporate an EA jump. Specifically, we assume that the EA jump size for the log stock price is either a DE or Gaussian random variable.  In both cases,  analytic option pricing formulas are derived and used to generate  IVs.  
 
To better understand the behavior of the PEAIV, we  derive analytic bounds and asymptotics for extreme strikes for the IV under a general class of stock price models with an EA jump.  In the bounds, we observe the explicit dependence on EA jump parameters and a common time-dependence of the PEAIV across these models.  In addition, we  calibrate a number of models, namely, the extended Heston and Kou models with an EA jump,  to market option prices.  Our results demonstrate a more adequate fit of the entire PEAIV surface, as compared to these models without the EA jump. Furthermore, we    extract  the risk-neutral  distribution of the EA jump and compare it to its historical distribution.  The discrepancy between the two distributions  can be interpreted as a risk premium
for the  jump risk due to earnings. We conclude the paper analyzing the pricing of American options before and after an EA date. For implementation, we  discuss  a numerical  method, as well as an analytical approximation  based on the  work of \cite{Barone-Adesi1987}.

There is a wealth of literature in empirical  finance and accounting that examines how earnings announcements affect stock
and option prices.  Most of the studies focus on  the informational
content, market reactions and  price patterns associated with  earnings releases. Systematic reviews of the empirical studies  can be found in  \citet{Dubinsky2006},
\citet{Barth2011}, \citet{Rogers2009} and \citet{Billings2010}. For the price impact of news on futures, we refer to \cite{Kiseop} and references therein.   Below, we summarize a number of related studies that focus on option prices  prior to an earnings announcement. 

The model by  \citet{Patell1981} (PW)  is among  the first empirical studies on  the prices of  options in the presence of an earnings announcement. They build on the pricing models of \cite{Black1973}  and \citet{Merton1973}, and assume the instantaneous volatility
as a deterministic piece-wise constant  function of time, with   two volatility levels over  two time windows around the earnings release in order to  reflect the uncertainty surrounding  the earnings impact.  Since their seminal work, other papers have adopted the same model or its variations and performed  empirical tests
(see \citet{Donders1996}, \citet{Isakov2001}, \citet{Barth2011}, among others). In particular, \citet{Barth2011}
extend PW's model by allowing the deterministic piece-wise constant volatility to take different values before, during, and after the announcement date.  In all these empirical studies, there are no jumps incorporated in the stock price at anytime.  Moreover, without stochastic volatility or jumps in the  stock price, the IV will not exhibit any skew, though this  is commonly observed in the options market. 

In this paper, we adopt  a different approach by incorporating a deterministic-time random jump in the  stock price dynamics. We consider  different  distributions of   the scheduled
earnings jump as well as different underlying stock price dynamics before and after the EA date.  Our study generalizes that of  \citet{Dubinsky2006}, which  focuses on   the BS  model with a Gaussian EA jump and provide a formula for the time-deterministic  implied volatility function.    Our main contributions include analytical option pricing formulas when the underlying stock price admits doubly exponential jumps over time (the Kou model), along with a double-exponential or  Gaussian EA   jump on the   announcement date (see Theorem \ref{prop:KOU_extension}). Furthermore, we  derive  analytic  bounds and asymptotics for the pre-EA implied volatility under various models (see Propositions \ref{prop:lower_bound} and \ref{prop:upper_bound}). In addition, we study the  pricing of pre-earnings  American options  under the Kou model and  discuss an analytic approximation to the    option price by extending the method of  \cite{Barone-Adesi1987}.

  Our analysis  draws motivation from  empirical studies and market observations to  introduce more sophisticated models that  help reproduce   the entire implied volatility surface  accurately  
and tractably.   For  calibration, we employ recent  options  data  from the whole observed IV surface, instead of ATM only options used in \citet{Dubinsky2006}.  Using earnings data from 1994 to 2013, we   also compare  the risk-neutral  distribution of the EA jump  to its historical distribution. The framework and methods  introduced herein can be readily  generalized to other option pricing models. They can also be useful for monitoring the PEAIV, and  estimating the historical and risk-neutral  distributions of the EA jump and the associated risk premium.

The rest of the paper is structured  as follows. In Section \ref{sec:BS}, we extend the Black-Scholes model by  incorporating an  EA jump in the  stock price, and derive the corresponding implied volatility function and Greeks. In Section \ref{sec:other_models}, we discuss the extensions of  other models, including  the Kou and Heston models. In Section \ref{sec:IV}, we study the  bounds and  asymptotics  for the IV surface under various models.  In Section \ref{sec:calibration}, we discuss our calibration results from  the extended BS,
Kou, and Heston models, and compare the risk-neutral  and historical distributions of  EA jumps. In Section \ref{sec:american}, we study  the valuation of
American options prior to an earnings announcement. The Appendix contains the proofs and details for our analytical results.

\section{Black-Scholes Model with an EA Jump\label{sec:BS}}\label{sect-BS}
The effect of a scheduled earnings  announcement on the stock price is modeled by a deterministic-time
random-size price  jump in the price dynamics. Naturally, this modification can be applied to virtually any model, and we start with  an extension of the  \citet{Black1973} model.

Let     $(W_{t})_{t \ge 0}$  be a standard Brownian motion defined
on the risk-neutral probability space $\left(\Omega,\mathcal{F},\mathbb{Q}\right)$. Throughout, we assume a constant risk-free rate $r\ge 0$.  Let $T_{e}$ be the
EA date, and we call the r.v.  $Z_{e}$ the EA jump, which is the jump size of the  log stock price immediately after the earnings announcement. We assume  $Z_{e}$ to be  independent of
$W$. Under  the risk-neutral measure, the company stock price $(S_{t})_{t \ge 0}$ evolves according to 
\begin{equation}
\frac{dS_{t}}{S_{t_{-}}}=rdt+\sigma dW_{t}+d\left(1_{\left\{ t\geq T_{e}\right\} }\left(e^{Z_{e}}-1\right)\right),\label{eq:GBMe}
\end{equation}
where $1_{\left\{ \cdot\right\} }$ denotes the indicator function,
$r$ is  the positive  constant interest rate, and $\sigma$
represents the constant volatility. The martingale condition $S_0=\mathbb{E}\left\{ e^{-rt}S_{t}\right\} ,\ 0\leq t\leq T,$ where $\mathbb{E}\left\{ \cdot\right\} $ denotes the expectation
under $\mathbb{Q}$, implies that $\mathbb{E}\left\{ e^{Z_{e}}\right\} =1$. 

In this section, we assume that $Z_{e}$ is \textit{normally} distributed.
This yields closed form expressions for the price and the IV
surface. The martingale
condition $\mathbb{E}\left\{ e^{Z_{e}}\right\} =1$ implies that $Z_{e}\sim N\left(-\frac{\sigma_{e}^{2}}{2},\sigma_{e}^{2}\right)$, so the EA jump is parametrized by $\sigma_{e}$ only. We notice
that, for $T\geq T_{e}$, 
\begin{equation}
{\log}\left(\frac{S_{T}}{S_{t}}\right) \sim N\left(\left(r-\frac{\sigma^{2}}{2}-\frac{\sigma_{e}^{2}}{2(T-t)}\right)(T-t)\,,\,{\sigma^{2}(T-t)+\sigma_{e}^{2}}\right).
\end{equation}
Therefore, the price of a European call with strike $K$ and maturity $T$ is
given by
\begin{equation}
C\left(t,S_{t}\right)=C_{BS}\left(T-t,S_{t};\sqrt{\sigma^{2}+\frac{\sigma_{e}^{2}}{T-t}},K,r\right),\quad0\leq t<T_{e},\label{eq:GBMe_price}
\end{equation}
where $C_{BS}\left(\tau,S;\sigma,K,r\right)$ represents the usual
BS formula with time to maturity $\tau$ and spot price $S$.

\subsection{Implied Volatility\label{sub:BS_IV}}
The
price formula (\ref{eq:GBMe_price}) allows us to express the implied
volatility (IV) as the deterministic function
\begin{equation}
I\left(t;K,T\right)=\begin{cases}
\sqrt{\sigma^{2}+\frac{\sigma_{e}^{2}}{T-t}} & \quad0\leq t<T_{e},\\
\sigma & \quad T_{e}\leq t<T.
\end{cases}\label{eq:BS_IV_ts}
\end{equation}
As we can see, although the IV surface, for a fixed time $t$, is
flat across strikes, it has a decreasing term structure in $T$. In
addition, the IV for any particular option increases in time as we
approach the earnings announcement. Alternatively, the option price
formula (\ref{eq:GBMe_price}) can be obtained as if the stock has
no jump and follows the dynamics 
\[
\frac{dS_{t}}{S_{t}}=rdt+I\left(t;K,T\right)dW_{t}.
\]
This model  is used in \citet{Patell1981}, who assume $I\left(t;K,T\right)$
to be a deterministic, piece-wise constant function to reflect the
impact of earnings on the implied volatility.

In Figure \ref{fig:BS_cal_example} (left), we compare the IV function
in (\ref{eq:BS_IV_ts}) for fixed $\left(K,T\right)$, against the
IV time series of the front month ATM IBM option\footnote{The IVs of the ATM call and put are observed to be very close but not identical. Each point of the time series represents the   average of the two IVs. } with expiration date 7/18/2013. The parameters of the model have been
chosen by a least-square regression. As we can see, the observed ATM IV
and the model IV increase in a similar fashion. In the same figure (right), we
also compare the IV function in (\ref{eq:BS_IV_ts}) for fixed $\left(t,K\right)$,
against the term structure of the IV of the ATM IBM call option
on 7/17/2013---just before the earnings announcement and one day before
the expiration date. Again, the parameters of the model are obtained via an additional least-square regression and they differ from the one obtained from the IV time series analysis. The model term structure approximates the observed one even if the IV surface presents no skew and a term structure which allows only for the functional
form (\ref{eq:BS_IV_ts}). 

\begin{figure}[th] \centering{}
\includegraphics[scale=0.42]{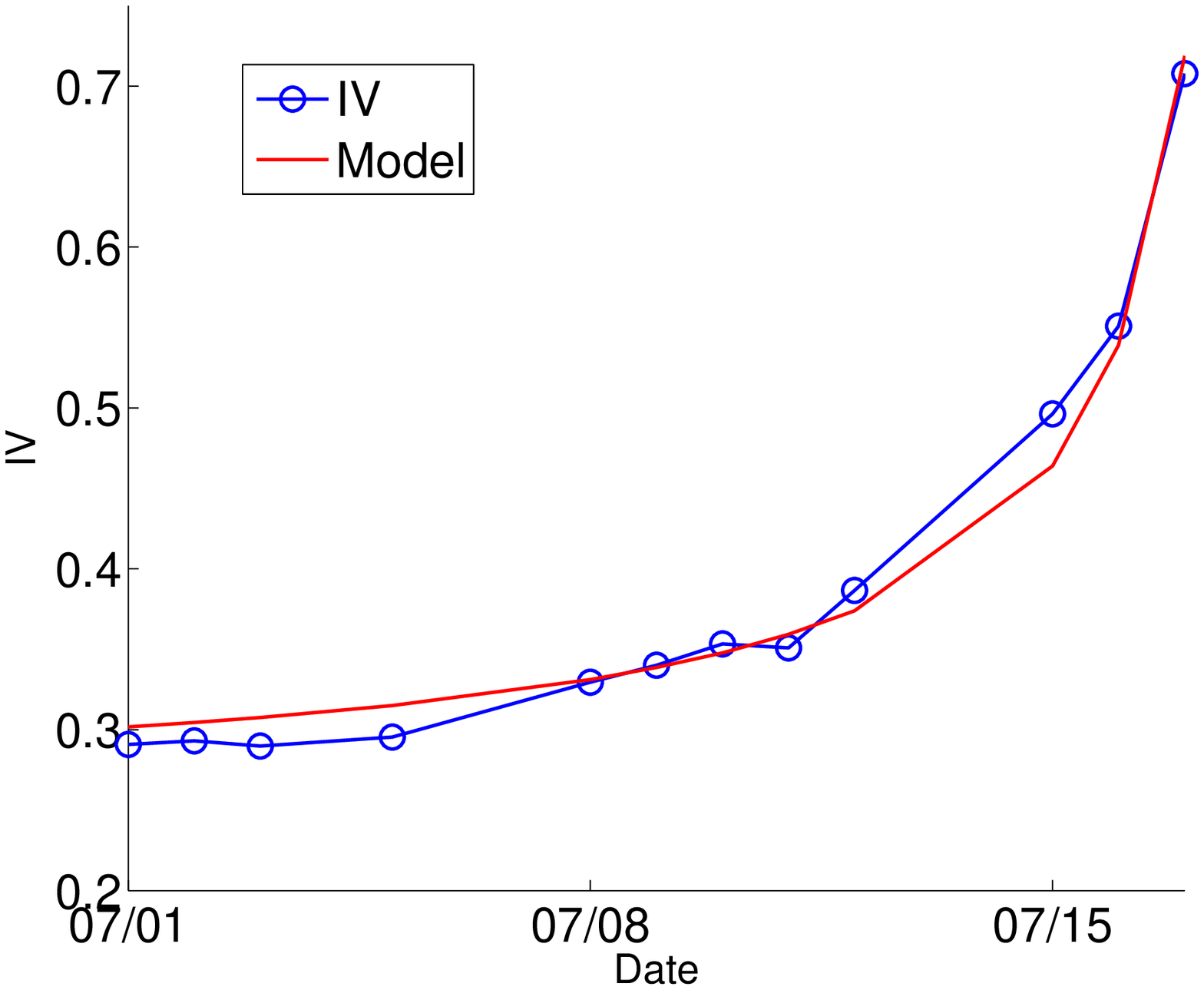}\includegraphics[scale=0.42]{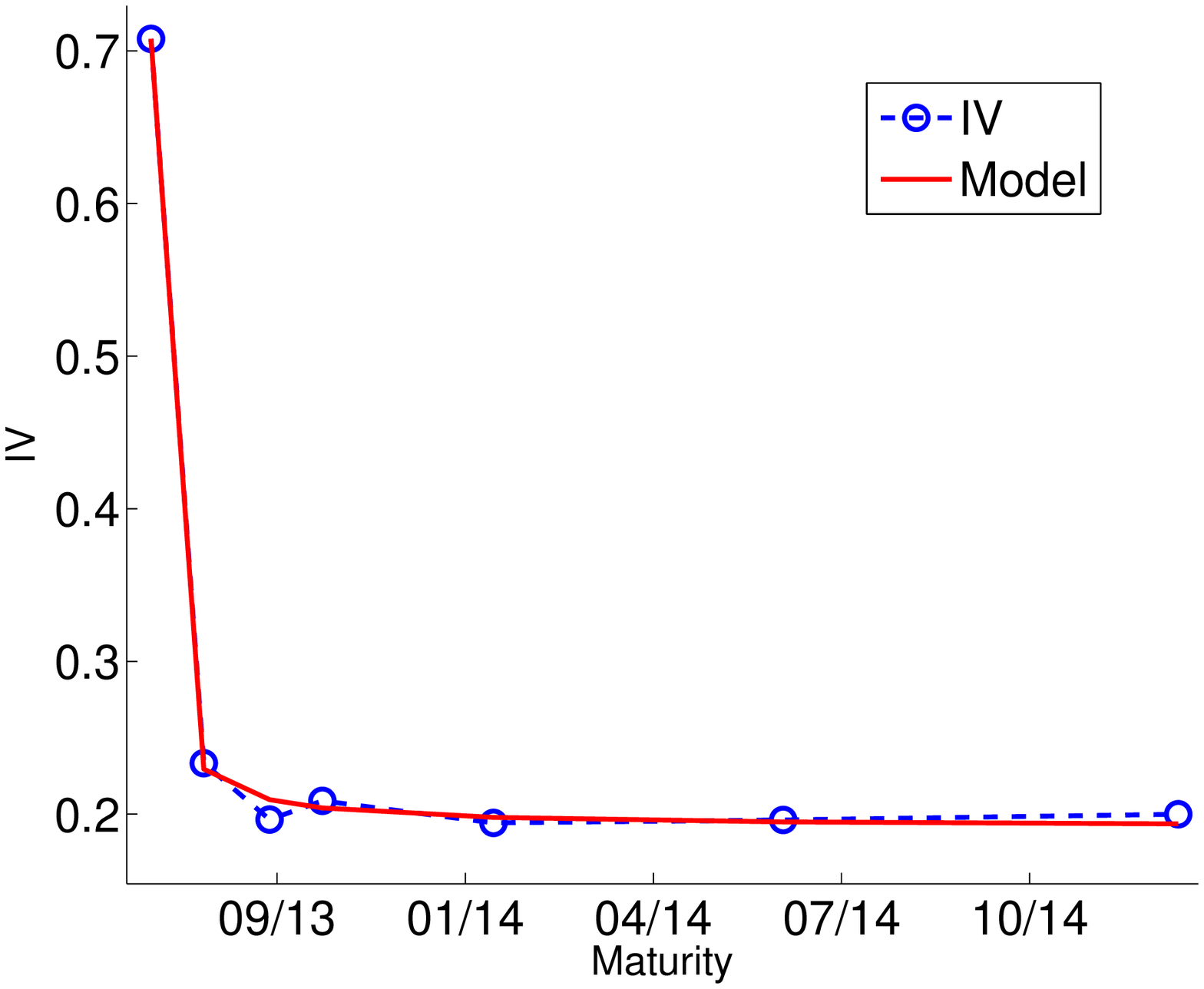}\caption{\small \small{Left: The IV function (\ref{eq:BS_IV_ts}) (solid line) prior to an EA, with $\sigma=0.2538,\ \sigma_{e}=0.0424$ and
expiration date 7/18/2013, as compared  to the market-observed IV (circle)
of the front month IBM ATM option with the same expiration
date. Right: The term structure of the IV (solid) according to (\ref{eq:BS_IV_ts}),
with $\sigma=0.1912,\ \sigma_{e}=0.0429$, on 7/17/2013, compared
to the market-observed term structure (circle) for ATM IBM
options on the same date.\label{fig:BS_cal_example}}}
\end{figure}

%\[
%\Delta =\Delta_{BS}\left(T-t,S_{t};I\left(t\right),K,r\right), \quad \Gamma\equiv\frac{\partial^{2}C}{\partial S^{2}}=\Gamma_{BS}\left(T-t,S_{t};I\left(t\right),K,r\right),\]
%where $\Delta_{BS}\left(\tau,S;\sigma,K,r\right)=\Phi\left(d_{1}\right)$
%denotes the usual Black-Scholes Delta function, with $\Phi\left(\cdot\right)$
%denoting the standard normal  c.d.f., and $\Gamma_{BS}\left(\tau,S;\sigma,K,r\right)=\frac{1}{S\sigma\sqrt{\tau}}\phi\left(d_{1}\right)$, with $d_{1}\equiv\frac{{log}\left(\frac{S}{K}\right)+\left(r+\sigma^{2}\right)\tau}{\sigma\sqrt{\tau}}$
%These Greeks are computed  Black-Scholes 

%This Delta is applicable to hedging the option up to, but not including,
%$T_{e}$. In this case, the functional form of the Delta is thus the
%same of the

% Black-Scholes one but the volatility at which it is calculated
%is in fact the observed implied volatility, which is higher than that
%of volatility driving the diffusion in (\ref{eq:GBMe}). A similar result holds for \textbf{Gamma}
%\[
%,\]
%where denotes the Black-Scholes gamma, with $\phi\left(\cdot\right)$ denoting
%the standard normal PDF. 

\subsection{Greeks\label{sub:BS_Greeks}}
In order to understand the sensitivity of options prices approaching
the earnings announcement, we derive and analyze the Greeks based
on the price function (\ref{eq:GBMe_price}). 

For a call or put with maturity $T$ after the EA date $T_{e}$, the  {Delta} and  {Gamma} are the simply the Black-Scholes Delta and Gamma functions, but with  the volatility parameter  set at higher value $I(t)$, for $t< T_e$.  On the other hand, we notice that there are two parameters related to the volatility
of the stock price. Consequently, in addition to the standard Black-Scholes   {Vega}, 
%\[
%\mathcal{V}\equiv\frac{\partial C}{\partial\sigma}=\frac{1}{\sqrt{1+\frac{\sigma_{e}^{2}}{\sigma^{2}\left(T-t\right)}}}\mathcal{V}_{BS}\left(T-t,S;I\left(t\right),K,r\right), 
%\] for $S>0$ and $0\leq t<T$, 
we   introduce the \emph{EA-Vega}
\begin{eqnarray*}
\mathcal{V}_{e} & \equiv & \frac{\partial C}{\partial\sigma_{e}}=\frac{1}{\sqrt{T-t}\sqrt{\frac{\sigma^{2}\left(T-t\right)}{\sigma_{e}^{2}}+1}}\mathcal{V}_{BS}\left(T-t,S;I\left(t\right),K,r\right),
\end{eqnarray*}
where $\mathcal{V}_{BS}\left(\tau,S;\sigma,K,r\right)=S\phi\left(d_{1}\right)\sqrt{\tau}$
represents the usual Black-Scholes Vega function with   spot price $S$ and time-to-maturity $\tau$.

For the  Theta of a call, we obtain
\begin{equation}
\Theta\equiv\frac{\partial C}{\partial t}=\Theta_{BS}\left(T-t,S;I\left(t\right),K,r\right)+\frac{1}{2I(
t)}\left(\frac{\sigma_{e}}{T-t}\right)^{2}\mathcal{V}_{BS}\left(T-t,S;I(t),K,r\right),\quad0\leq t<T.\label{eq:GBMe_theta}
\end{equation}
where $\Theta_{BS}\left(\tau ,S;\sigma,K,r\right)=-S\frac{\sigma}{2\sqrt{\tau}}\phi\left(d_{1}\right)-rKe^{-r\tau}\Phi\left(d_{2}\right)$
is  the Black-Scholes Theta function. First, we note that  $\Theta_{BS}\left(T-t,S;I(t),K,r\right)\leq\Theta\leq0.$
The left part of the inequality is a direct consequence of (\ref{eq:GBMe_theta})
while the right part is a consequence of the fact that the option
price is decreasing in time. On the other hand,  it is not true in general that $\Theta\leq\Theta_{BS}\left(S,K,r,T-t,\sigma\right).$
This means that the option may   lose   value less rapidly  over time 
than an option priced with a lower volatility.  To  illustrate this point, we suppose  $r=0$, and    the Black-Scholes PDE implies that 
\[
\frac{\partial C}{\partial t}=-\sigma^{2}S^{2}\frac{\partial^{2}C}{\partial S^{2}}=-\sigma^{2}S^{2}\Gamma_{BS}\left(T-t,S;I(t),K,0\right).
\]
On the other hand, we also have $\Theta_{BS}\left(T-t,S;\sigma,K,0\right)=-\sigma^{2}S^{2}\Gamma_{BS}\left(T-t,S;\sigma,K,0\right).$
Therefore,
\[
\frac{\partial C}{\partial t}-\Theta_{BS}\left(T-t,S;\sigma,K,0\right)=-\sigma^{2}S^{2}\left(\Gamma_{BS}\left(T-t,S;I(t),K,0\right)-\Gamma_{BS}\left(T-t,S;\sigma,K,0\right)\right).
\]
For ATM options, we have $\Gamma_{BS}\left(T-t,S;\sigma,S,0\right)=\frac{1}{\sqrt{2\pi}\sigma S}e^{-\frac{\sigma{}^{2}(T-t)}{2}},$
which is decreasing in $\sigma$.  This implies that $\frac{\partial C}{\partial t}>\Theta_{BS}\left(T-t,S;\sigma,S,0\right)$ for 
 $\sigma_{e}>0$, implying   a less rapid time decay in the option value. In fact, the same is true for  other puts and calls whose $\Gamma_{BS}$ is  decreasing in $\sigma$. 

In Figure \ref{fig:Greeks} (left),  we illustrate   $\Theta$ as a function of  the spot price for a call  with 5 days to maturity.  For comparison, we plot two additional benchmarks based on the BS Theta with different volatility parameter values $I(0)$ and $\sigma$, respectively. As expected from (\ref{eq:GBMe_theta}), the  time-decay of the call price is  less severe than  $\Theta_{BS}\left(T,S;I(0),K,r\right)$. In addition, for spot prices near  the strike $K$, we observe that  $\Theta>\Theta_{BS}\left(T,S;\sigma,S,r\right)>\Theta_{BS}\left(T,S;I(0),S,r\right)$.
Figure \ref{fig:Greeks} (right) shows how $\Theta$ changes in time for an
ATM ($S=K=100$) call, up to one day before  earnings. Again, we notice the same dominance of Theta's, but their differences increase as time approaches the EA date. Interestingly,   $\Theta$, which accounts for the EA jump, appears to be significantly more  constant as compared to  the other two BS Theta's. 
\\
\begin{figure}[H]
\centering{}\includegraphics[scale=0.45]{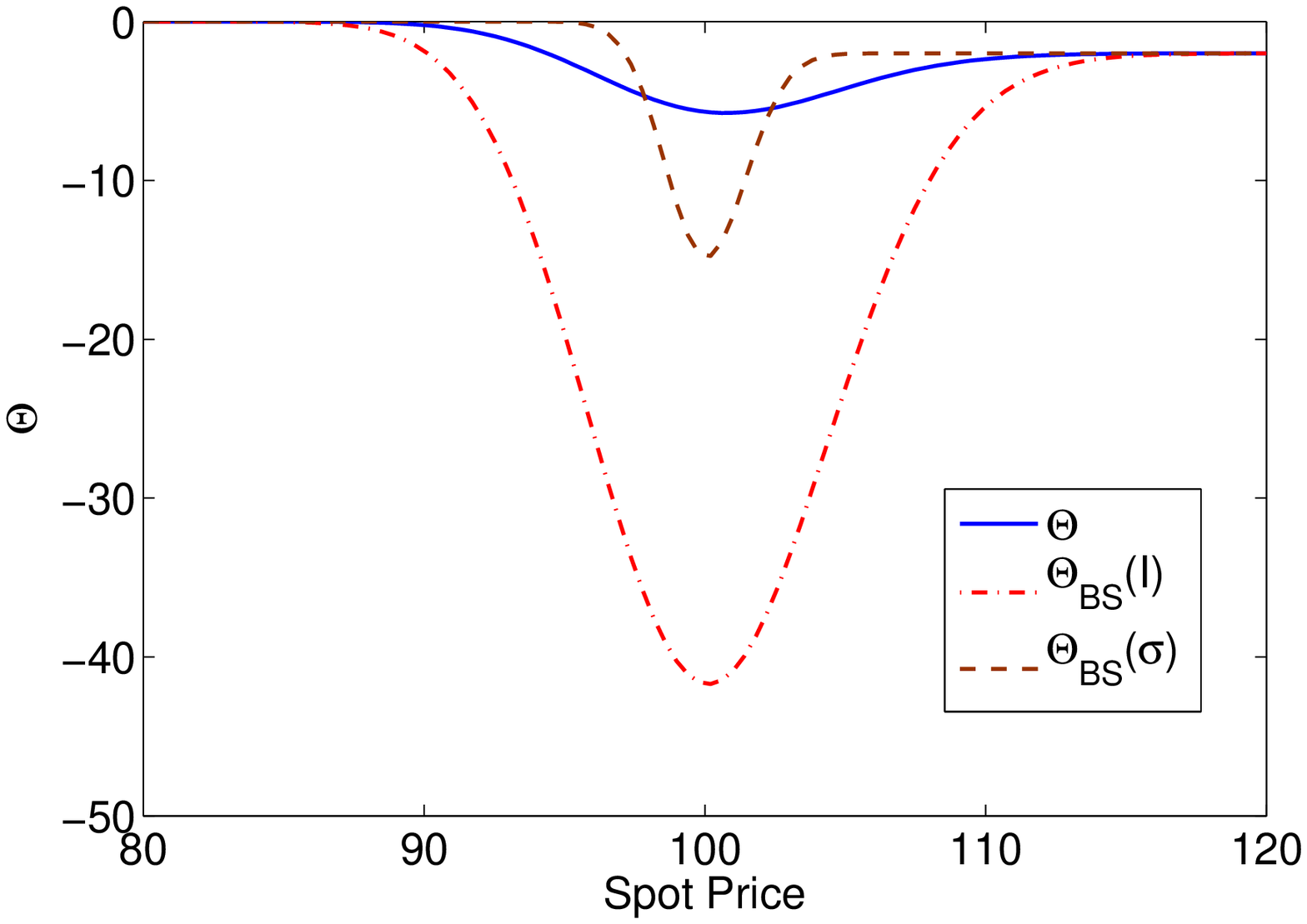}\includegraphics[scale=0.45]{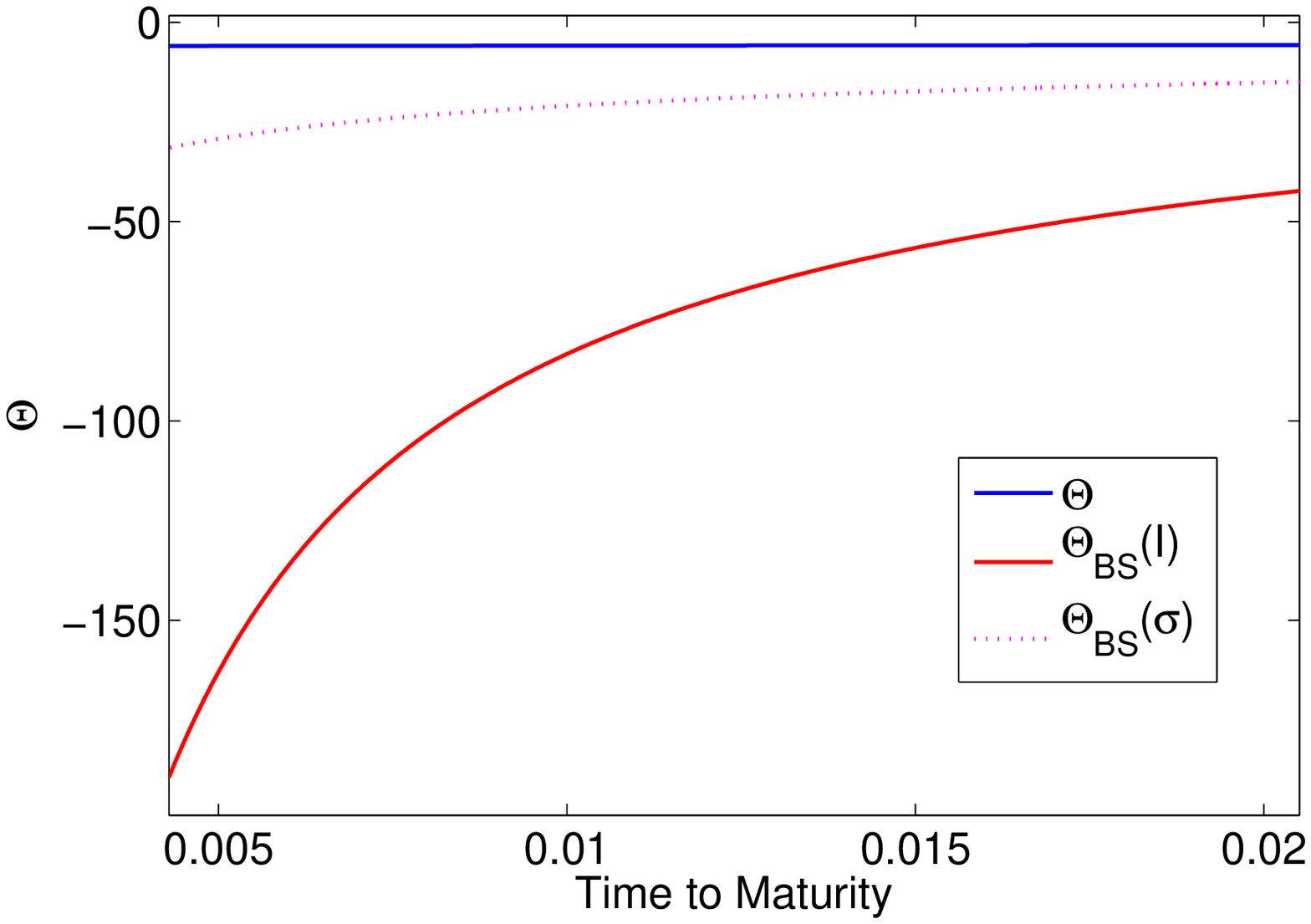}\\
\caption{\small{The shape of $\Theta$ and $\Theta_{BS}$ changes
w.r.t. the spot price (with $T=\frac{5}{252}$) and time to maturity (with $S=K=100$). In both panels, we see that $\Theta$ for ATM options is less negative than the ordinary BS Theta $\Theta_{BS}$. Common parameters: $r=0.02,\ \sigma=0.1,\ \sigma_{e}=0.04$.} \label{fig:Greeks}}
\end{figure}

\section{Incorporating EA Jumps into Other Models\label{sec:other_models}}
Although the extended BS model (\ref{eq:GBMe}) is capable of showing an increasing IV approaching an earnings announcement, the IV 
has no skew and its term structure only admits  the  particular two-parameter functional form $\sqrt{\sigma^{2}+\frac{\sigma_{e}^{2}}{T-t}}$. On the other hand, in addition to the scheduled jump, the stock price may also experience randomly timed jumps which cannot be adequately captured by diffusion.  This has motivated many models that incorporate jumps of various distributions  into the stock price dynamics, with notable examples in the \cite{Merton1976}, \cite{Kou2002}, Variance Gamma (\cite{Madan1998}), CGMY (\cite{Carr2002})) models. In this section, we  present an  analytic formula for  pricing European options under an extension  of the jump-diffusion model of \citet{Kou2002}, and also discuss the  pricing of options prior to an earnings announcement using a transform method. 

\subsection{Extensions of the Kou Model\label{sub:KOU}}
We consider an extension of the Kou model and derive an
analytic formula for the price of a European option with an EA jump. Under this model, the risk-neutral terminal stock price follows
\begin{equation}
{\log}\left(\frac{S_{T}}{S_{0}}\right)=-\frac{\sigma^{2}}{2}T+\sigma W_{T}-m \kappa  T+\sum_{i=1}^{N_{T}}J_{i}+1_{\left\{ t\geq T_{e}\right\} }\left(Z_{e}-{\log}\left(\mathbb{E}\left\{ e^{Z_{e}}\right\} \right)\right),\label{eq:dedyn}
\end{equation}
where each jump $J_{i}\sim DE\left(p,\lambda_{1},\lambda_{2}\right)$ is  \textit{double-exponentially} distributed with the p.d.f.
$f_{J_{i}}\left(x\right)=1_{\{x\geq 0\}} p \lambda_{1}e^{-\lambda_{1}x}+1_{\{x\leq 0\}}\left(1-p\right)\lambda_{2}e^{\lambda_{2}x}$. The number of randomly timed jumps is modeled by 
$N_T \sim \text{Poi}(\kappa T)$. To ensure that the martingale condition holds,  we set  $m=p\frac{\lambda_{1}}{\lambda_{1}-1}+(1-p)\frac{\lambda_{2}}{\lambda_{2}+1}-1$.
 We shall consider separately two distributions for the  EA log jump size $Z_{e}$: (i) double-exponential $Z_{e}\sim DE\left(u,\eta_{1},\eta_{2}\right)$, and (ii)  Gaussian $Z_{e}\sim N\left(0,\sigma_{e}^{2}\right)$. 

Let $C\left(S\right)=\mathbb{E}\left\{ e^{-rT}\left(S_{T}-K\right)^{+}\vert S_{0}=S\right\} $
be the price of the European call option at time $t=0$ when the spot
price is   $S$. We now present the price  formula  when the EA jump $Z_e$  is a double exponential random variable.
\begin{prop}
\label{prop:KOU_extension}Suppose the terminal stock price  follows
(\ref{eq:dedyn}), with $Z_{e}\sim DE\left(u,\eta_{1},\eta_{2}\right)$
and $\eta_{1}\neq\lambda_{1},\ \eta_{2}\neq\lambda_{2}$. Then, the European call option price   is given by 
\begin{align}
C(S)&=e^{-\alpha}S\Upsilon\left(S,K,r+\sigma^{2},T,\sigma,\hat{\kappa},\hat{p},\hat{\lambda}_{1},\hat{\lambda}_{2},\hat{u},\hat{w},\hat{\eta}_{1},\hat{\eta}_{2}\right)\notag\\
&\quad  -Ke^{-rT}\Upsilon\left(S,K,r,T,\sigma,\kappa,p,\lambda_{1},\lambda_{2},u,w,\eta_{1},\eta_{2}\right),\label{eq:KOU_extension_formulae}
\end{align}
 where the function $\Upsilon$ is given in Appendix A.1, and  the constants  are
\begin{align*}
\hat{\eta}_{1}=\eta_{1}-1,\quad\hat{\eta}_{2}&=\eta_{2}+1,\quad
\hat{\lambda}_{1}=\lambda_{1}-1  ,\quad\hat{\lambda}_{2}=\lambda_{2}+1, \quad \hat{u}=u\frac{\eta_{1}}{\eta_{1}-1}  ,\quad w=1-u, \quad \hat{w}=w\frac{\eta_{2}}{\eta_{2}+1}, \\  \hat{\kappa}&=\left(m+1\right)\kappa,\quad\hat{p}=\frac{\lambda_{1}}{\lambda_{1}-1}\frac{p}{m+1}, \quad 
\alpha={\log}\left(u\frac{\eta_{1}}{\eta_{1}-1}+w\frac{\eta_{2}}{\eta_{2}+1}\right).
\end{align*}
\end{prop}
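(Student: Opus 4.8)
The claimed formula extends \citet{Kou2002}'s European call price, the only genuinely new feature being the single \emph{independent} double-exponential jump $Z_e$ sitting on top of the Brownian component and the compound $\mathrm{DE}(p,\lambda_1,\lambda_2)$ Poisson component of the log-return in~(\ref{eq:dedyn}). I would start from risk-neutral valuation and split the payoff,
\[
C(S)=e^{-rT}\,\mathbb{E}\!\left[\,S_T\,1_{\{S_T\ge K\}}\mid S_0=S\,\right]-Ke^{-rT}\,\mathbb{Q}\!\left(S_T\ge K\mid S_0=S\right),
\]
so that the whole problem reduces to evaluating two tail functionals of $\log(S_T/S)$.

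\emph{The cash term.} By~(\ref{eq:dedyn}), under $\mathbb{Q}$ the random variable $\log(S_T/S)$ is a deterministic drift plus $\sigma W_T$, plus a compound Poisson sum of $\mathrm{DE}(p,\lambda_1,\lambda_2)$ jumps with intensity $\kappa$, plus one independent $\mathrm{DE}(u,\eta_1,\eta_2)$ jump (recall $w=1-u$) offset by the constant $-\log\mathbb{E}\{e^{Z_e}\}$. This is precisely the distributional structure the function $\Upsilon$ of Appendix~A.1 is engineered to describe, so $\mathbb{Q}(S_T\ge K\mid S_0=S)=\Upsilon(S,K,r,T,\sigma,\kappa,p,\lambda_1,\lambda_2,u,w,\eta_1,\eta_2)$ by its very definition. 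The standing hypothesis $\eta_1\ne\lambda_1$, $\eta_2\ne\lambda_2$ is used here: conditional on $n\ge1$ Poisson jumps, the moment generating function of the total jump term equals $\bigl(\tfrac{p\lambda_1}{\lambda_1-s}+\tfrac{(1-p)\lambda_2}{\lambda_2+s}\bigr)^{n}\bigl(\tfrac{u\eta_1}{\eta_1-s}+\tfrac{w\eta_2}{\eta_2+s}\bigr)$, whose partial-fraction inversion produces the simple mixture of exponentials (and, after convolution with the Gaussian, the $\mathrm{Hh}$-type terms) underlying the closed form of $\Upsilon$; if $\eta_i=\lambda_i$ the poles would merge and confluent terms would spoil this form.

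\emph{The stock term.} For the first expectation I would change to the share measure $\widetilde{\mathbb{Q}}$ defined by $\left.\tfrac{d\widetilde{\mathbb{Q}}}{d\mathbb{Q}}\right|_{\mathcal{F}_T}=e^{-rT}S_T/S$, which is a genuine probability measure by the martingale condition, so that $e^{-rT}\,\mathbb{E}[\,S_T 1_{\{S_T\ge K\}}\mid S_0=S\,]=S\,\widetilde{\mathbb{Q}}(S_T\ge K\mid S_0=S)$. The Radon--Nikodym density factors into three independent unit-mean exponentials, one per source of randomness, so the measure change Esscher-tilts each independently: (i) the Brownian drift shifts by $+\sigma^2T$, which in $\Upsilon$ is the same as replacing the rate $r$ by $r+\sigma^2$; (ii) the compound Poisson acquires intensity $\hat\kappa=(m+1)\kappa$ and jump law $\mathrm{DE}(\hat p,\hat\lambda_1,\hat\lambda_2)$, the hatted parameters being read off by multiplying the $\mathrm{DE}$ density by $e^x$ and renormalizing, with $1-\hat p$ emerging precisely because $m+1$ is the $\mathrm{DE}$ moment generating function evaluated at $1$; (iii) the law of $Z_e$ is tilted to the kernel $e^{x}f_{Z_e}(x)/\mathbb{E}\{e^{Z_e}\}$, whose exponential rates become $\hat\eta_1=\eta_1-1$, $\hat\eta_2=\eta_2+1$ and whose two unnormalized weights become $\hat u=u\tfrac{\eta_1}{\eta_1-1}$, $\hat w=w\tfrac{\eta_2}{\eta_2+1}$, with total mass $\hat u+\hat w=\mathbb{E}\{e^{Z_e}\}=e^{\alpha}$. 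Since the EA jump enters $\Upsilon$ linearly through a convolution against this kernel, renormalizing by its mass $e^{\alpha}$ yields $\widetilde{\mathbb{Q}}(S_T\ge K\mid S_0=S)=e^{-\alpha}\,\Upsilon(S,K,r+\sigma^2,T,\sigma,\hat\kappa,\hat p,\hat\lambda_1,\hat\lambda_2,\hat u,\hat w,\hat\eta_1,\hat\eta_2)$. Multiplying by $S$ and subtracting the cash term produces~(\ref{eq:KOU_extension_formulae}).

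The main obstacle is the bookkeeping in the share-measure step: one must verify that the deterministic compensators carried by~(\ref{eq:dedyn}), namely $-m\kappa T$ and $-\log\mathbb{E}\{e^{Z_e}\}$, recombine correctly with the three tilted laws so that the law of $\log(S_T/S)$ under $\widetilde{\mathbb{Q}}$ reassembles \emph{exactly} into the argument list of $\Upsilon$ after the substitutions above --- in particular, pinning down why the tilted EA kernel has total mass $e^{\alpha}$ (hence the prefactor $e^{-\alpha}$) while the tilted compound-Poisson kernel remains a probability density, and why the correct EA weights to feed into $\Upsilon$ are the non-normalized $\hat u,\hat w$. This comes down to matching conventions with the explicit closed form of $\Upsilon$ in Appendix~A.1. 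One should also note that the standing finiteness requirements $\eta_1>1$ and $\lambda_1>1$ (already needed for $\mathbb{E}\{e^{Z_e}\}<\infty$ and the martingale condition) keep $\hat\eta_1,\hat\lambda_1>0$, and that $\hat\eta_i\ne\hat\lambda_i$ continues to hold, so the same closed form applies in the tilted setting.
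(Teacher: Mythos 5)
Your proposal is correct and follows essentially the same route as the paper: you split the payoff into a cash term and a stock term, identify the cash term as the tail probability computed by $\Upsilon$ at the original parameters, and handle the stock term by exponentially tilting each independent component --- the paper performs this tilt by multiplying the densities by $e^{x}$ inside the integral rather than by formally invoking the share measure, but the resulting hatted parameters, the drift shift $r\mapsto r+\sigma^{2}$, and the unnormalized EA weights of total mass $e^{\alpha}$ (hence the prefactor $e^{-\alpha}$) are exactly the ones the paper obtains. The only substantive work you defer is the derivation of the explicit $Hh$-function and $I_{n}$ closed form of $\Upsilon$ itself (the $P_{n,m},Q_{n,m}$ decomposition of the compound Poisson sum and the convolutions of the Gaussian with sums of exponentials), which occupies most of the paper's appendix, though your sketch correctly locates this as the step where $\eta_{1}\neq\lambda_{1}$, $\eta_{2}\neq\lambda_{2}$ is needed.
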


\vspace{10pt}

When the jump size parameters of the EA and randomly timed jumps are the same, i.e. $\eta_{1}=\lambda_{1}$ or $\eta_{2}=\lambda_{2}$, an analytic pricing formula can also be  derived but omitted here. In practice, the   parameters $\lambda_1$, $\lambda_2$ are usually  of an order of magnitude greater
than $\eta_1$ and $\eta_2$, as we will observe from our  calibration in Section \ref{sec:calibration}. \\

 Alternatively, if the EA jump  is \textit{normally} distributed, i.e. $Z_{e}\sim N\left(0,\sigma_{e}^{2}\right)$, then one can directly adapt   the result from \cite{Kou2002} to account for the EA jump. Specifically, the EA jump parameter $\sigma_e$ can be incorporated in the  volatility coefficient of the Brownian motion $W_T$:
\[
{\log}\left(\frac{S_{T}}{S_{0}}\right)\overset{d}{=}\left(-\frac{\sigma^{2}+\frac{\sigma_{e}^{2}}{T}}{2}-m\kappa\right)T+\sqrt{\sigma^{2}+\frac{\sigma_{e}^{2}}{T}}\,W_{T}+\sum_{i=1}^{N_{T}}J_{i}\,.
\] 
From this, we   see that the resulting analytic formula is in fact identical to the original formula without the EA jump, except with $\sigma$ replaced by $\sqrt{\sigma^{2}+\frac{\sigma_{e}^{2}}{T}}$. \\

The analytic formula (\ref{eq:KOU_extension_formulae}) allows for the
fast computation of the price, and simultaneously its delta, $\Delta=e^{-\alpha}\Upsilon\left(\cdot\right)$,
where $\Upsilon\left(\cdot\right)$ is the first term on RHS of
(\ref{eq:KOU_extension_formulae}). In Table
\ref{tab:KOU_extension}, we apply (\ref{eq:KOU_extension_formulae}) to calculate option prices and we report the corresponding implied volatilities. As we can see, the IV decreases as time to maturity increases, which is typical of the IV before an earnings announcement. In addition,  the IV skew becomes  flatter as  maturity increases. In Figure \ref{fig:Kou_rising_IV}, we plot the IV of the ATM option when the underlying follows (\ref{eq:dedyn}) with either a Gaussian or DE EA jump. For comparison, we have chosen the jumps parameters so that the variances of the Gaussian and DE EA jumps coincide. We notice that the IVs have similar  dependence on time. Comparing Figures  \ref{fig:BS_cal_example} and \ref{fig:Kou_rising_IV}, it is natural to wonder whether the   IV increases in time in a similar fashion, not only  in the extended Black-Scholes and Kou, but also in other  models. This  motivates us to explore the properties of the IV under different models in Section \ref{sec:IV}.

% Table generated by Excel2LaTeX from sheet 'Sheet1'
\begin{table}[htbp]
  \centering
  \footnotesize
    \begin{tabular}{r|rrrrrrrrr}

    \hline
    $K$\,\textbackslash\,$T$ & \multicolumn{2}{c}{1 Week}        & \multicolumn{2}{c}{1 Month}         & \multicolumn{2}{c}{3 Months}              & \multicolumn{2}{c}{1 Year}               \\
   \hline
         & \multicolumn{1}{c}{$C$} & \multicolumn{1}{c}{$IV$} & \multicolumn{1}{c}{$C$} & \multicolumn{1}{c}{$IV$} & \multicolumn{1}{c}{$C$} & \multicolumn{1}{c}{$IV$} & \multicolumn{1}{c}{$C$} & \multicolumn{1}{c}{$IV$} \\

\hline

   %       &  &  &  & &  &  &  & &     \\
    90    & 11.031 & 0.799 & 11.380 & 0.425 & 12.348 & 0.300 & 16.050 & 0.239 \\
    92.5  & 8.958 & 0.767 & 9.400 & 0.415 & 10.529 & 0.297 & 14.485 & 0.239 \\
    95    & 7.048 & 0.738 & 7.598 & 0.407 & 8.871 & 0.295 & 13.027 & 0.239 \\
    97.5  & 5.357 & 0.714 & 6.007 & 0.401 & 7.384 & 0.294 & 11.675 & 0.239 \\
    100   & 3.945 & 0.699 & 4.651 & 0.397 & 6.075 & 0.293 & 10.428 & 0.239 \\
    102.5 & 2.849 & 0.696 & 3.536 & 0.396 & 4.942 & 0.292 & 9.284 & 0.239 \\
    105   & 2.047 & 0.704 & 2.651 & 0.396 & 3.979 & 0.292 & 8.240 & 0.238 \\
    107.5 & 1.475 & 0.719 & 1.968 & 0.399 & 3.173 & 0.292 & 7.292 & 0.238 \\
    110   & 1.069 & 0.738 & 1.455 & 0.403 & 2.509 & 0.293 & 6.434 & 0.238 \\

\hline
    \end{tabular}%
  \label{tab:KOU_extension}%
\caption{\small Option prices and IVs under model (\ref{eq:dedyn}). Prices are computed via  formula (\ref{eq:KOU_extension_formulae}). Parameters: $r=2\%$, $S_0=100$, $\kappa=10$, $p=0.6$, $\lambda_1=60$, $\lambda_2=50$, $u=0.55$, $\eta_1=15$, $\eta_2=12$, $\sigma=20\%$.}
\end{table}%

\begin{figure}[H]
\centering{}\includegraphics[scale=0.43]{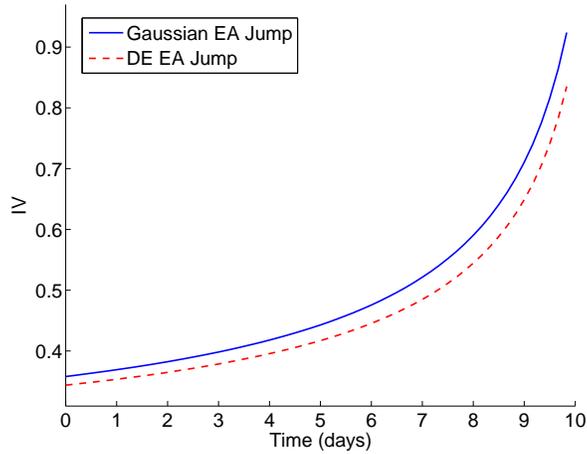}\\
\caption{\small{The time-series of the ATM option IV under the extended Kou model (\ref{eq:dedyn}) when  the EA and expiration dates are on the 10th and 11th days respectively. Parameters: $T_e=\hbox{14 days},T=\hbox{15 days},r=2\%$, $S_0=100$, $\kappa=10$, $p=0.5$, $\lambda_1=50$, $\lambda_2=50$, $u=0.5$, $\eta_1=25$, $\eta_2=25$, $\sigma=20\%$. }}\label{fig:Kou_rising_IV}
\end{figure}

\subsection{Pricing via the Characteristic Function\label{sub:fourier}}

In general, let us write  the terminal stock price as $S_T = S_0 e^{X_T + Z_e}$. If $X_{T}$ and $Z_{e}$ are  independent, and they both 
admit analytic characteristic functions, then we  obtain the characteristic formula of the log-price  
\[
\Psi\left(\omega\right):=\mathbb{E}\left\{e^{\mbox{i\ensuremath{\omega}log}\left(\frac{S_{T}}{S_0}\right)}\right\} = e^{\hat{\psi}\left(\omega\right)+\psi_{e}\left(\omega\right)},  
\]
where \[\hat{\psi}\left(\omega\right):={\log}\left(\mathbb{E}\left\{ e^{\mbox{i\ensuremath{\omega}log}\left(X_{T}\right)}\right\} \right), \quad \text{ and } \quad  \psi_{e}\left(\omega\right):={\log}\left(\mathbb{E}\left\{ e^{\mbox{i\ensuremath{\omega}log}\left(Z_{e}\right)}\right\} \right).\]
It is then possible to make use of available methods to price vanilla
as well as exotic options. For example, the methods of \citet{Carr1999}, \cite{Duffie2000}, \cite{Lee2004}, 
and \citet{Raible2000}, among others,  can be used to price European options. Alternatively,
the methods developed  in \citet{Jackson2008} or \citet{Lord08afast} can be adapted to price
both European and American options in models incorporating the jump
$Z_{e}$. 

In the following sections, we will also consider an extension of the
Heston model with dynamics
\begin{align}
\frac{dS_{t}}{S_{t}} & =r dt+\sigma_{t}dW_{t}+d\left(1_{\left\{ t\geq T_{e}\right\} }\left(e^{Z_{e}}-1\right)\right),\nonumber \\
d\sigma_{t}^{2} & = \nu \left(\vartheta-\sigma_{t}^{2}\right)dt+\zeta\sigma_{t}d\tilde{W}_{t},\label{eq:heston_dyn}
\end{align}
where $W$ and $\tilde{W}$ are standard Brownian motions, with $\mathbb{E}\{dW_{t}d\tilde{W_{t}}\}=\rho dt,$
and $Z_{e}$ is independent of both $W$ and $\tilde{W}$. Write the option price as
\begin{align*}
C\left(0,S\right) & =  \int_{-\infty}^{\infty}\left(e^{{\log}\left(S+x\right)}-K\right)^+f(x)dx \\
& =  \left(e^{\,\cdot}-K\right)^+\ast f(-\,\cdot)\,\left({\log}(S)\right),
\end{align*} 
where $f$ denotes the p.d.f. of log$(\frac{S_T}{S_0})$ and $\ast$ denotes the convolution operator. 
Denoting with
$\mathcal{F}\left(g\left(x\right)\right)\equiv\int_{\mathbb{R}}e^{-i\omega x}g\left(x\right)dx$
the Fourier operator acting on the function $g$, we can then write the
price of the European call as 
\begin{eqnarray}
C\left(0,S\right)=e^{\gamma {\log}(S)}\mathcal{F}^{-1}\left\{\mathcal{F}\left\{e^{-\gamma x}\left(Se^{x}-K\right)^{+}\right\}\Psi\left(\omega-i\gamma\right)\right\}.\label{eq:FFT}
\end{eqnarray}
where the introduction of the dampening factor $e^{-\gamma x}$ is necessary because the payoff is not integrable. The same observations have been  made in \citet{Lord08afast}, where    (\ref{eq:FFT}) is implemented as part of a new  pricing algorithm. \citet{Jackson2008} also derive the same formula by analyzing the associated pricing PIDE in Fourier space. One can then apply a Fast Fourier transform (FFT) algorithm to price according to (\ref{eq:FFT}). In addition,  these methods can be adapted to price American options, as we will do in Section \ref{sec:american}.

\section{Pre-Earnings Announcement Implied Volatility Properties\label{sec:IV}}\label{sect-IVbehave}
The market observations in Figures 1-3 motivate us to   analyze some characteristics of the pre-earnings announcement implied volatility.  In Section \ref{sect-IVbounds}, we provide  upper and lower
bounds for the IV under a class of models.  
In Section \ref{sect-IVasy},  we study some IV asymptotics  with focus on   small and large strikes. 

\subsection{Implied Volatility Bounds \label{sub:IV_bounds}}\label{sect-IVbounds}
For our analysis of the IV bounds of   European options, we consider a  general
framework where the terminal stock price is written in the form 
\begin{equation}
S_{T}=S_{t}e^{X_{t,T}+Z_{e}},\label{eq:static_dyn}
\end{equation}
with $S_t$ the stock price at time $t<T_e$.
The martingale condition on  $S$ implies that    $(X_{t,T})_{0\le t \le T}$  satisfies   $\mathbb{E}\left\{ e^{X_{t,T}}\right\} =e^{r(T-t)}$. The r.v.   $Z_{e}$ is a continuous mixture of Gaussian r.v.'s with  p.d.f. $f_{Z_{e}}\left(y\right)=\int_{\mathbb{R}^{+}}\phi\left(y;-\frac{\hat{\sigma}^{2}}{2},\hat{\sigma}\right)\G\text{\ensuremath{\left(d\hat{\sigma}\right)}}$,
where $\phi\left(\cdot;a,b\right)$
represents the p.d.f. of a Gaussian r.v. with mean $a$ and variance
$b^2$, and $\G\left(\cdot\right)$ is a measure
over the space $\mathbb{R}^{+}$ with $\G[0,\infty)=1$.

 Note that we have not specified the distribution
of $X_{t,T}$, therefore the base model can be very general. We have
the following lower bound for the IV volatility function.

\begin{prop}
\label{prop:lower_bound}Suppose the terminal stock price $S_T$ follows (\ref{eq:static_dyn}). Then, the implied volatility $I(t;K,T)$ admits the lower bound
\begin{align}
I\left(t;K,T\right)\geq\frac{\hat{\sigma}_{min}}{\sqrt{T-t}}, \quad  t<T_{e}\label{eq:lower_bound}, 
\end{align} where $\hat{\sigma}_{min} := \inf\{ \hat{\sigma} \in \mathbb{R}_+ \,:\, \G[0, \hat{\sigma}]>0\}.$
In addition, if $X_{t,T}$ is also distributed as a Gaussian mixture,
$f_{X_{t,T}}\left(y\right)=\int_{\mathbb{R}^{+}}\phi\left(y;r(T-t)-\frac{\tilde{\sigma}^{2}}{2},\tilde{\sigma}\right)\H\left(d\tilde{\sigma}\right)$,
then the lower bound improves to
\[
I\left(t;K,T\right)\geq\sqrt{\frac{\tilde{\sigma}^{2}_{min}+\hat{\sigma}_{min}^{2}}{T-t}}\ ,\ t<T_{e},
\]
where $ \tilde{\sigma}_{min} := \inf\{ \tilde{\sigma} \in \mathbb{R}_+ \,:\, \H[0, \tilde{\sigma}] >0\}.$
\end{prop}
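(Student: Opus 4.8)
The plan is to exploit the monotonicity of the Black--Scholes call price in its volatility argument. Since $C_{BS}(T-t,S_t;\sigma,K,r)$ is continuous and strictly increasing in $\sigma>0$, and by definition of the implied volatility the model price satisfies $C(t,S_t)=C_{BS}(T-t,S_t;I(t;K,T),K,r)$, it suffices to produce a lower bound of the form $C(t,S_t)\ge C_{BS}(T-t,S_t;v,K,r)$ and then read off $I(t;K,T)\ge v$. (That $I(t;K,T)$ is well defined follows from the no-arbitrage bounds, since $0<\mathbb{E}\{e^{X_{t,T}+Z_{e}}\}=e^{r(T-t)}<\infty$ and the payoff is genuinely random.) So the whole task reduces to lower-bounding the model price by a suitable Black--Scholes price.

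For the first bound I would condition on the Gaussian-mixing variable of $Z_{e}$. Since $Z_{e}$ is independent of $X_{t,T}$, we may (enlarging the probability space if necessary) realize a random variable $\Sigma$ with law $\G$, independent of $X_{t,T}$, such that $Z_{e}\mid\{\Sigma=\hat{\sigma}\}\sim N(-\hat{\sigma}^{2}/2,\hat{\sigma}^{2})$; moreover $\G[0,\hat{\sigma}_{min})=0$ by definition of $\hat{\sigma}_{min}$, so $\Sigma\ge\hat{\sigma}_{min}$ a.s. Writing $L:=S_{t}e^{X_{t,T}}$, the martingale condition gives $\mathbb{E}\{L\}=S_{t}e^{r(T-t)}$, and $L$ is independent of $(Z_{e},\Sigma)$. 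Conditioning first on $\Sigma$, then on $Z_{e}$, and applying Jensen's inequality to the convex map $x\mapsto(x-K)^{+}$, one gets $\G$-a.s.
\[
\mathbb{E}\{(Le^{Z_{e}}-K)^{+}\mid\Sigma=\hat{\sigma}\}\;\ge\;\mathbb{E}\{(S_{t}e^{r(T-t)}e^{Z_{e}}-K)^{+}\mid\Sigma=\hat{\sigma}\}\;=\;e^{r(T-t)}\,C_{BS}\!\left(T-t,S_{t};\tfrac{\hat{\sigma}}{\sqrt{T-t}},K,r\right),
\]
the last equality because, given $\{\Sigma=\hat{\sigma}\}$, the quantity $S_{t}e^{r(T-t)}e^{Z_{e}}$ is lognormal with the correct martingale drift and log-variance $\hat{\sigma}^{2}$. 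Taking expectations over $\Sigma$ (Tonelli; all integrands nonnegative) and using $\Sigma\ge\hat{\sigma}_{min}$ together with the monotonicity of $C_{BS}$ yields
\[
C(t,S_{t})=e^{-r(T-t)}\,\mathbb{E}\{(Le^{Z_{e}}-K)^{+}\}\;\ge\;\mathbb{E}\bigl\{C_{BS}(T-t,S_{t};\tfrac{\Sigma}{\sqrt{T-t}},K,r)\bigr\}\;\ge\;C_{BS}\!\left(T-t,S_{t};\tfrac{\hat{\sigma}_{min}}{\sqrt{T-t}},K,r\right),
\]
which is exactly $I(t;K,T)\ge\hat{\sigma}_{min}/\sqrt{T-t}$.

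For the improved bound no Jensen step is needed. If $X_{t,T}$ is itself a Gaussian mixture with mixing measure $\H$, introduce an independent mixing variable $\tilde{\Sigma}\sim\H$ with $X_{t,T}\mid\{\tilde{\Sigma}=\tilde{\sigma}\}\sim N(r(T-t)-\tilde{\sigma}^{2}/2,\tilde{\sigma}^{2})$ and (as above) $\tilde{\Sigma}\ge\tilde{\sigma}_{min}$ a.s. Given $(\tilde{\Sigma},\Sigma)=(\tilde{\sigma},\hat{\sigma})$, independence makes $\log(S_{T}/S_{t})=X_{t,T}+Z_{e}$ Gaussian with mean $r(T-t)-(\tilde{\sigma}^{2}+\hat{\sigma}^{2})/2$ and variance $\tilde{\sigma}^{2}+\hat{\sigma}^{2}$, so the conditional discounted expected payoff equals \emph{exactly} $C_{BS}(T-t,S_{t};\sqrt{(\tilde{\sigma}^{2}+\hat{\sigma}^{2})/(T-t)},K,r)$. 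Averaging over $(\tilde{\Sigma},\Sigma)$ and using $\tilde{\sigma}^{2}+\hat{\sigma}^{2}\ge\tilde{\sigma}_{min}^{2}+\hat{\sigma}_{min}^{2}$ a.s. with the monotonicity of $C_{BS}$ gives $C(t,S_{t})\ge C_{BS}(T-t,S_{t};\sqrt{(\tilde{\sigma}_{min}^{2}+\hat{\sigma}_{min}^{2})/(T-t)},K,r)$, i.e. the claimed inequality.

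I do not expect a genuine obstacle here; the closest thing to one is bookkeeping. The points that need a little care are all routine: the interchange of expectation and the mixing average (Tonelli, with $(Le^{Z_{e}}-K)^{+}\le Le^{Z_{e}}$ integrable); the identification of $\hat{\sigma}_{min}$ (resp. $\tilde{\sigma}_{min}$) as the left endpoint of $\operatorname{supp}\G$ (resp. $\operatorname{supp}\H$), which is what gives $\G[0,\hat{\sigma}_{min})=0$; and making the conditioning on the mixing variables precise when, in the general case, $X_{t,T}$ is not presented through an explicit mixing variable — this is handled purely by enlarging the probability space and using only the independence of $Z_{e}$ and $X_{t,T}$ and the stated mixture form of the law of $Z_{e}$.
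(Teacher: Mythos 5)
Your proof is correct and follows essentially the same route as the paper: Jensen's inequality applied to the convex payoff to replace $S_te^{X_{t,T}}$ by its mean $S_te^{r(T-t)}$, the Gaussian-mixture representation of $Z_e$ to write the resulting price as a $\G$-average of Black--Scholes prices with volatility $\hat{\sigma}/\sqrt{T-t}$, and monotonicity of $C_{BS}$ in its volatility argument to pass to $\hat{\sigma}_{min}$ (and, for the second bound, the exact mixture representation over both mixing measures with no Jensen step). The only difference is cosmetic -- you condition on an explicit mixing variable $\Sigma$ whereas the paper invokes the tower property directly -- and your treatment of the measurability and support details is, if anything, more careful than the paper's.
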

We note that   $\hat{\sigma}_{min} \ge 0$,  so the bound \eqref{eq:lower_bound} is nontrivial only if   $\hat{\sigma}_{min}>0$. This means that  the measure $G$ has zero weight on Gaussian r.v.'s with  variance smaller than  $\hat{\sigma}_{min}>0$. Such a  condition is satisfied, for example, by any finite  mixture of Gaussian r.v. 

%
%The lower bound not only shows that the increasing behavior of the
%IV is shared by different models, but that also that the rate of increase
%is bounded from below from $\frac{\hat{\sigma}_{min}}{\sqrt{T-t}}$
%which is of the same functional form as under the framework (\ref{eq:GBMe}).
%The same functional behavior is also shared by the upper bound of
%the ATM-forward IV. 

In addition, we obtain an upper bound for the implied volatility.
\begin{prop}
\label{prop:upper_bound}Suppose the terminal stock price $S_T$ follows (\ref{eq:static_dyn}) and
assume that both $X_{T}$ and $Z_{e}$ are distributed as continuous
Gaussian mixtures,
\begin{equation}
f_{Z_{e}}\left(y\right)=\int_{\mathbb{R}^{+}}\phi\left(y;-\frac{\hat{\sigma}^{2}}{2},\hat{\sigma}\right)\G\text{\ensuremath{\left(d\hat{\sigma}\right)}},\quad f_{X_{t,T}}\left(y\right)=\int_{\mathbb{R}^{+}}\phi\left(y;r(T-t)-\frac{\tilde{\sigma}^{2}}{2},\tilde{\sigma}\right)\H\left(d\tilde{\sigma}\right).\label{eq:mix_cond}
\end{equation}
Then, the following upper bound for the implied volatility of a European
ATM-forward call option, $K=e^{r\left(T-t\right)}S_{t}$, holds:
\begin{align}
 & I\left(t;e^{r\left(T-t\right)}S_{t},T\right)\leq\sqrt{\int_{\mathbb{R}^{+}}\frac{\tilde{\sigma}^{2}}{T-t}\H\left(d\tilde{\sigma}\right)+\int_{\mathbb{R}^{+}}\frac{\hat{\sigma}^{2}}{T-t}\G\left(d\hat{\sigma}\right)}\ ,\ t<T_{e}.\label{eq:upper_bound}
\end{align}
 
\end{prop}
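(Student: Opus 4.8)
The plan is to recognize the model price of the ATM-forward call as an average, over the two mixing measures $\G$ and $\H$, of genuine Black--Scholes prices, and then to push the averaging through Jensen's inequality using concavity of the ATM-forward Black--Scholes price in the total variance. First I would condition on the mixing parameters: write $\tilde\sigma$ for the draw under $\H$ and $\hat\sigma$ for the draw under $\G$. By \eqref{eq:mix_cond} and the independence of $X_{t,T}$ and $Z_e$, conditionally on $(\tilde\sigma,\hat\sigma)$ the increment $X_{t,T}+Z_e$ is Gaussian with mean $r(T-t)-\tfrac12(\tilde\sigma^2+\hat\sigma^2)$ and variance $\tilde\sigma^2+\hat\sigma^2$; in particular the martingale normalizations $\mathbb{E}\{e^{Z_e}\}=1$ and $\mathbb{E}\{e^{X_{t,T}}\}=e^{r(T-t)}$ are automatically encoded in these means. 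Hence, conditionally, $S_T=S_te^{X_{t,T}+Z_e}$ is exactly log-normal in the Black--Scholes sense with volatility $\sqrt{(\tilde\sigma^2+\hat\sigma^2)/(T-t)}$, and by the tower property
\[
C(t,S_t)=\int_{\mathbb{R}^+}\!\!\int_{\mathbb{R}^+}C_{BS}\!\left(T-t,S_t;\sqrt{\tfrac{\tilde\sigma^2+\hat\sigma^2}{T-t}},K,r\right)\G(d\hat\sigma)\,\H(d\tilde\sigma).
\]

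Next I would record the shape of $C_{BS}$ at the ATM-forward strike as a function of the total variance $v=\sigma^2(T-t)$. With $K=e^{r(T-t)}S_t$ one gets $d_1=\sqrt v/2$ and $d_2=-\sqrt v/2$, so $C_{BS}=S_t\big(2\Phi(\sqrt v/2)-1\big)=:g(v)$. A direct computation gives $g'(v)=\tfrac{S_t}{2\sqrt{2\pi v}}\,e^{-v/8}>0$ and $g''(v)=-\tfrac{S_t}{2\sqrt{2\pi}}\,e^{-v/8}\big(\tfrac{1}{2}v^{-3/2}+\tfrac18 v^{-1/2}\big)<0$, so $g$ is strictly increasing and strictly concave on $(0,\infty)$. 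Writing $v(\tilde\sigma,\hat\sigma)=\tilde\sigma^2+\hat\sigma^2$, the previous display reads $C(t,S_t)=\mathbb{E}\big[g(v(\tilde\sigma,\hat\sigma))\big]$ with the expectation taken under $\G\otimes\H$, and Jensen's inequality yields $C(t,S_t)\le g\big(\mathbb{E}[v(\tilde\sigma,\hat\sigma)]\big)=g\big(\int\tilde\sigma^2\H(d\tilde\sigma)+\int\hat\sigma^2\G(d\hat\sigma)\big)$ — trivially true if that integral is infinite. Finally, by definition of implied volatility together with the ATM-forward identity, $C(t,S_t)=g\big(I(t;K,T)^2(T-t)\big)$; inverting the strictly increasing $g$ gives $I(t;K,T)^2(T-t)\le\int\tilde\sigma^2\H(d\tilde\sigma)+\int\hat\sigma^2\G(d\hat\sigma)$, which is \eqref{eq:upper_bound} after dividing by $T-t$ and taking square roots.

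The only genuine computation is the verification that $g$ is concave in the total variance; everything else is bookkeeping with the conditional Gaussian structure and the monotone inversion of the Black--Scholes ATM-forward price. Two minor points to flag: one should note that if the second moment of either mixing measure is infinite the bound is vacuous, so finiteness may be assumed without loss of generality; and the restriction to the ATM-forward strike is essential, since it is precisely there that $C_{BS}$ is globally concave as a function of $v$ (for other strikes $C_{BS}$ is concave in $v$ only on a bounded interval, so this argument would not go through unchanged).
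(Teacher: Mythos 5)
Your proposal is correct and follows essentially the same route as the paper: represent the ATM-forward price as a mixture of Black--Scholes prices over $\G\otimes\H$, apply Jensen's inequality using concavity of the ATM-forward Black--Scholes price, and invert the monotone price-to-volatility map. Your explicit computation showing that $g(v)=S_t\bigl(2\Phi(\sqrt{v}/2)-1\bigr)$ is concave in the \emph{total variance} $v$ (rather than merely in the volatility parameter, as the paper states somewhat loosely) is exactly the form of concavity the Jensen step requires, so your write-up is if anything slightly more careful than the paper's.
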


Notable examples of models that satisfy conditions (\ref{eq:mix_cond}) include the extended Merton and Heston (when $\rho=0$) models. However, we remark that  Gaussian mixtures can also be used to approximate other distributions. Moreover, our bounds can serve as analytical benchmarks for the IV under different models. As an example, we derive the explicit  expressions for
the bounds (\ref{eq:upper_bound}) under the Heston model.

According to Propositions \ref{prop:lower_bound} and \ref{prop:upper_bound},   the IV bounds  under different models exhibit similar   behaviors as time approaches the earnings announcement. Comparing the bounds to the IV function $I$ in \eqref{eq:BS_IV_ts}, it is not too surprising that  the simple extended  BS model was able to fit the observed ATM IV over time  (see Figure \ref{fig:BS_cal_example}).

\begin{example}\label{example1}
Assume that $S$ follows the Heston dynamics (\ref{eq:heston_dyn}).  
%stochastic volatility dynamics
%\begin{equation}
%{\log}\left(\frac{S_{T}}{S_t}\right)=\left(r-\frac{\tilde{\sigma}^2}{2}\right)(T-t)+\int_{t}^{T}\sigma_{u}dW_{u}+\left(Z_{e}-{\log}\left(\mathbb{E} e^{Z_{e}}\right) \right),\nonumber
%\end{equation}
%where $\sigma_{u}$ is a stochastic process,  \tilde{\sigma}^{2}\equiv\frac{1}{T-t}\int_{t}^{T}\sigma_{u}^{2}du$ and $Z_{e}$ is independent
%of both $W_{u}$ and $\sigma_{u}$. 
In the case $\rho=0$ it is known that, conditioned on the   path of $(\sigma_u)_{t\le u \le T}$, $\int_{t}^{T}\sigma_{u}dW_{u} \sim N\left(0,\tilde{\sigma}^{2}\right)
$, where $\tilde{\sigma}^{2}\equiv\int_{t}^{T}\sigma_{u}^{2}du$. From this we observe that $X_{t,T}\equiv \left(r-\frac{\tilde{•\sigma}^2}{2(T-t)}\right)(T-t)+\int_{t}^{T}\sigma_{u}dW_{u}$ satisfies the second part of \eqref{eq:mix_cond}.  In turn, direct computation yields  that $\int_{\mathbb{R}^{+}}\frac{\tilde{\sigma}^{2}}{T-t}\H\left(d\tilde{\sigma}\right)=\vartheta+\frac{\sigma_{t}^2-\vartheta}{\nu\left(T-t\right)}\left(1-e^{-\nu\left(T-t\right)}\right).$
Therefore, for example, in the case of a Gaussian EA jump, the term $\int_{\mathbb{R}^{+}}\frac{\hat{\sigma}^{2}}{T-t}\G\left(d\hat{\sigma}\right)$
is  equal to $\frac{\sigma_{e}^{2}}{T-t}$ and the bound (\ref{eq:upper_bound}) reads 
\begin{equation}
I\left(t;e^{r\left(T-t\right)}S_{t},T\right)\leq\sqrt{\vartheta+\frac{\sigma_{t}^2-\vartheta}{\nu\left(T-t\right)}\left(1-e^{-\nu\left(T-t\right)}\right)+\frac{\sigma_{e}^{2}}{T-t}}.
\label{eq:up_bound_exG}
\end{equation}
If we instead assume that the jump is distributed as a symmetric double-exponential,
$Z_{e}\sim DE\left(\frac{1}{2},\eta,\eta\right)$, then we use the fact that $Z_{e}\overset{d}{=}\sqrt{\frac{2}{\eta^{2}}\epsilon}Z,$
where $\epsilon\sim\mbox{Exp}\left(1\right)$ and $Z\sim N\left(0,1\right)$ are independent.
In fact, in that case, $\G\left(x\right)=\eta^{2}xe^{-\frac{\eta^{2}x^{2}}{2}}$ , $\int_{\mathbb{R}^{+}}\frac{\hat{\sigma}^{2}}{T-t}\G\left(d\hat{\sigma}\right)=\frac{2}{\eta^{2}\left(T-t\right)},$
and the bound (\ref{eq:upper_bound}) reads 
\begin{equation}
I\left(t;e^{r\left(T-t\right)}S_{t},T\right)\leq\sqrt{\vartheta+\frac{\sigma_{t}^2-\vartheta}{\nu\left(T-t\right)}\left(1-e^{-\nu\left(T-t\right)}\right)+\frac{2}{\eta^{2}\left(T-t\right)}}.
\label{eq:up_bound_exDE}
\end{equation} 
\end{example}

In Figure \ref{fig:time_bounds} we plot the explicit bounds (\ref{eq:up_bound_exG}) and (\ref{eq:up_bound_exDE}) from  Example  \ref{example1}. As we can see, the upper bound is relatively
close to the model IV curve and shares a very similar time dependence.   In particular,  when the EA jump is Gaussian   (Figure \ref{fig:time_bounds}, left), the bound is almost indistinguishable from the model IV. To see this, as $t \rightarrow T$, the lower and upper bounds ((\ref{eq:lower_bound}) and (\ref{eq:up_bound_exG}))  share a common leading term  $\frac{\sigma_e^2}{T-t}$. In practical cases as in this example, the term $\int_{\mathbb{R}^{+}}\frac{\tilde{\sigma}^{2}}{T-t}\H\left(d\tilde{\sigma}\right)\approx \sigma^2_t$   is typically at least one order of magnitude smaller than $\sigma_e^2/(T-t)$ since $T-t$ is very small and $\sigma_e$ and $\sigma_t$ are of the same order.  Also, in Figure \ref{fig:time_bounds}, we observe that the model IV with  a non-zero  $\rho$ still admits a similar time behavior and are very close to the bounds (\ref{eq:up_bound_exG}) and (\ref{eq:up_bound_exDE})  for the case $\rho=0$.  As a curious note,  the coefficient $\frac{2}{\eta^{2}}$ of  the leading term in (\ref{eq:up_bound_exDE}) as $t \rightarrow T$, is exactly the variance of $Z_e$ when $Z_e\sim DE(\frac{1}{2},\eta,\eta)$, the same way that  the coefficient $\sigma_e^2$ of the  leading term in (\ref{eq:up_bound_exG})  is the variance of the Gaussian $Z_e$.  Therefore, an interesting question is whether    the rate of change of the IV is approximately proportional to the standard deviation of the EA jump, at least as time approaches the EA date (see also Figure \ref{fig:Kou_rising_IV}).

 To conclude, we   recall that  Propositions \ref{prop:lower_bound} and \ref{prop:upper_bound} also give us information about the term structure of the IV. Indeed, for models whose dynamics are time-homogeneous, we observe the relationship: $\frac{\partial I\left(t;K,T\right)}{\partial t}=-\frac{\partial I\left(t;K,T\right)}{\partial T}$. This implies  that one should expect a decreasing term structure for  options prior to the EA date.

\begin{figure}[H] \centering{}
\includegraphics[scale=0.4]{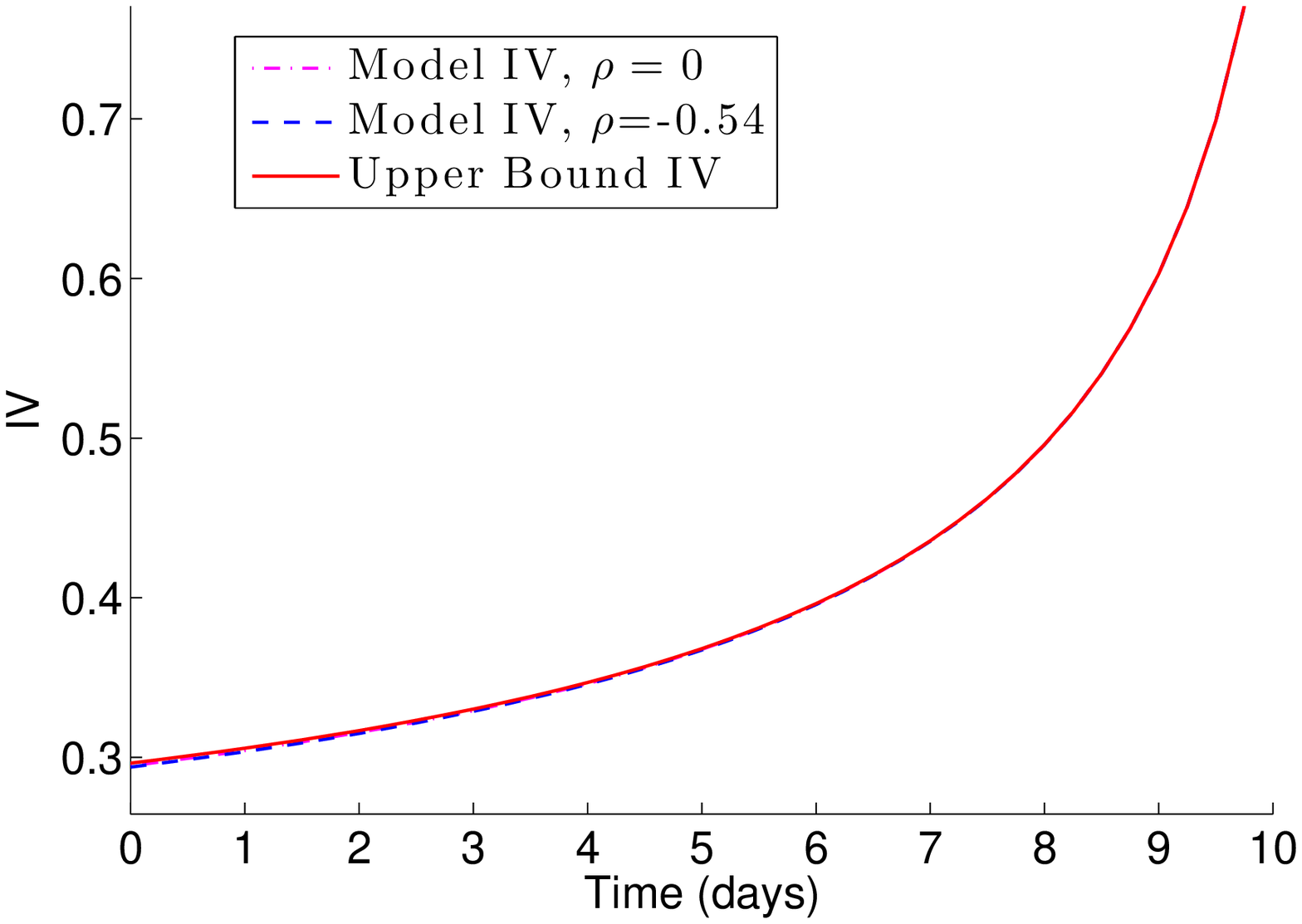}\includegraphics[scale=0.4]{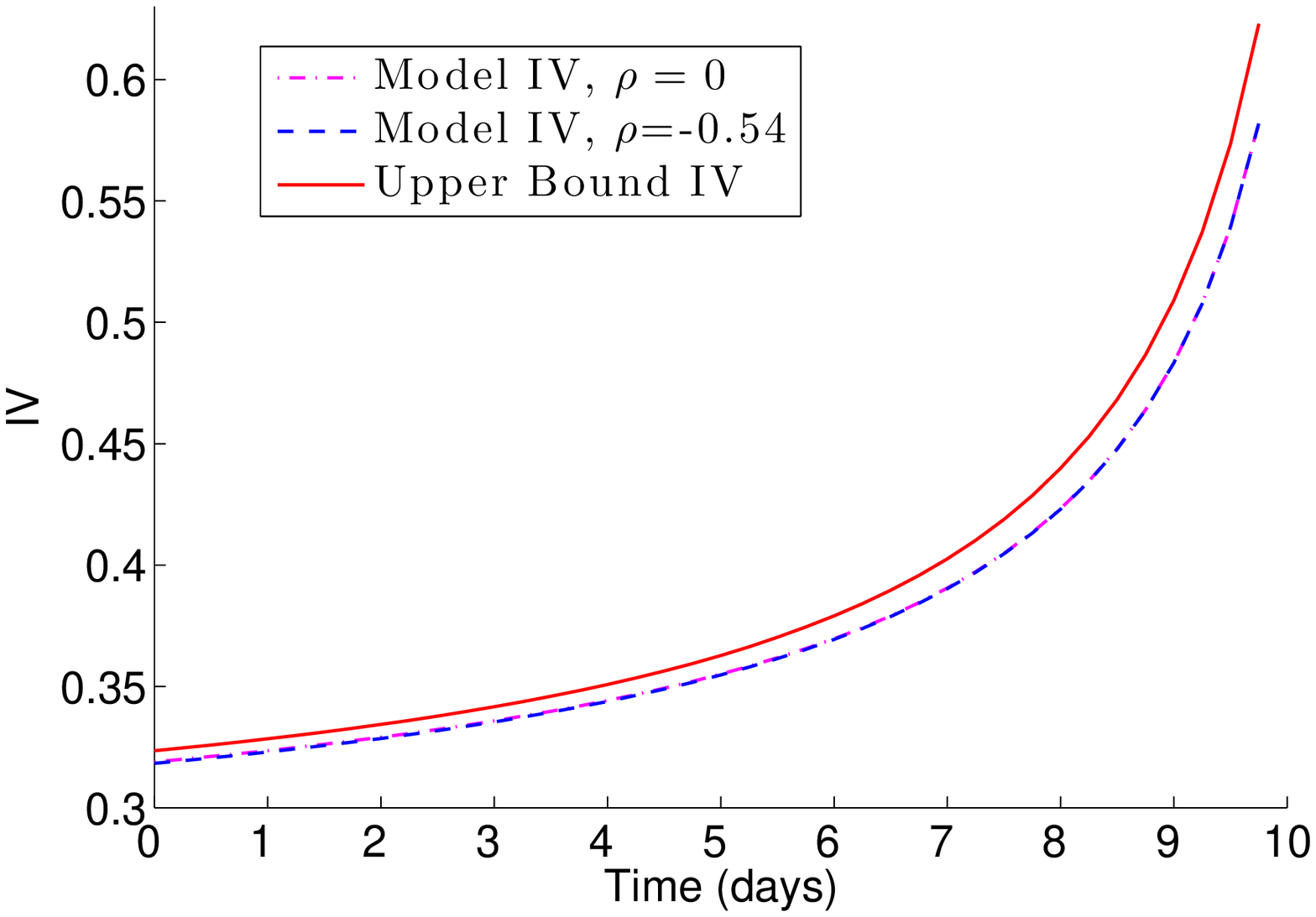}\caption{\small The model IV plotted against the upper bounds in (\ref{eq:up_bound_exG})  and (\ref{eq:up_bound_exDE})  
under the extended Heston model with a Gaussian EA jump (left) and  DE EA jump (right) respectively. Parameters: $\nu=4.04,\ \vartheta=0.05,\ \sigma_{0}=1.01,\ \rho\in \{0,-0.54\},\ \zeta=0.03,\ \sigma_{e}=.0473,\ u=0.5,\ \eta_{1}=\eta_{2}=40$.
\label{fig:time_bounds}}
\end{figure}

\subsection{Small and Large Strikes   Asymptotics\label{sub:IV_asy}}\label{sect-IVasy}
We now  analyze the asymptotics of the IV surface for small
and large strikes for the extended Heston  and Kou model.  Our asymptotics  follow from  an adaptation  of the results of \citet{Benaim2008} (see also \citet{Benaim2008a}). We state the results here  and provide the proofs in Appendix \ref{app-strike}.

%First, notice that the models we have extended
%admit a moment generating function. Denote with $M\left(\omega\right)\equiv\mathbb{E}\left\{ e^{\omega X}\right\} $
%the m.g.f. of the r.v. $X$, and with $F$ its c.d.f. If $r^{*}\equiv\mbox{inf}\left\{ \omega\ s.t.\ M\left(\omega\right)<\infty\right\} $
%is finite, than it easy to show that $\mbox{lim sup}_{x\rightarrow\infty}\frac{-{\log}\left(1-F\left(x\right)\right)}{x}=r^{*}$.
%If $F$ is well-behaved, the lim sup can be replaced by a lim and
%the tail asymptotics $\frac{-{\log}\left(1-F\left(x\right)\right)}{x}\sim r^{*}x$,
%along with the condition $r^{*}>1$, are sufficient to provide asymptotics
%for the implied volatility (see \citet{Benaim2008}). 

\begin{prop}\label{prop:asy1}
Let $S$ satisfy the extended Heston dynamics (\ref{eq:heston_dyn})
where $Z_{e}$ is either: 

Case 1: normally distributed, $Z_{e}\sim N\left(0,\frac{\sigma_{e}^{2}}{2}\right)$;
or

Case 2: double-exponentially distributed, $Z_{e}\sim DE\left(u,\eta_{1},\eta_{2}\right)$.
\\
Then for any fixed $t<T_{e}$, the  implied volatility  $I\left(t;K,T\right)$ satisfies 
\begin{align}
\frac{I^{2}\left(t;K,T\right)\left(T-t\right)}{{\log}\left(\frac{K}{S_{t}}\right)} & \sim\mbox{\ensuremath{\xi}\ensuremath{\ensuremath{\left(q^{*}\right)}}},\ as\ \ensuremath{K}\rightarrow\ensuremath{0},\label{eq:asymptotics-11} \\
\frac{\mbox{\ensuremath{I^{2}\left(t;K,T\right)\left(T-t\right)}}}{{\log}\mbox{\ensuremath{\left(\frac{K}{S_{t}}\right)}}} & \sim\mbox{\ensuremath{\xi}\ensuremath{\ensuremath{\left(r^{*}-1\right)}}},\ as\ \ensuremath{K}\rightarrow\ensuremath{\infty},\label{eq:asymptotics-12}
\end{align}
where $\xi\left(x\right)$ is defined by $\xi\left(x\right)\equiv2-4\left(\sqrt{x^{2}+x}-x\right)$
and 
\[
q^{*}=\begin{cases}
p_- & case\ 1,\\
{\min}\left\{ p_-,\eta_{2}\right\}  & case\ 2,
\end{cases}\quad,\quad r^{*}=\begin{cases}
p_+ & case\ 1,\\
\min\left\{ p_+,\eta_{1}\right\}  & case\ 2,
\end{cases}
\]
and $p_{\pm}$ is the smallest positive solution to, respectively,
\[
\nu\mp\rho\zeta p_{\pm}+\mbox{\ensuremath{\sqrt{\left(\nu\mp\rho\zeta p_{\pm}\right)^{2}+\zeta^{2}\left(\pm p_{\pm}-(p_{\pm})^{2}\right)}}}\,{\coth}\left(\frac{\left(T-t\right)}{2}\mbox{\ensuremath{\sqrt{\left(\nu\mp\rho\zeta p_{\pm}\right)^{2}+\zeta^{2}\left(\pm p-(p_{\pm})^{2}\right)}}}\right)=0.
\]
\end{prop}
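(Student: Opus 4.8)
The plan is to reduce the statement to the refined moment formula of \citet{Benaim2008} (see also \citet{Benaim2008a}), which upgrades Lee's $\limsup$ moment asymptotics to a genuine limit once a mild regular-variation condition on the tail of the log-return is verified. Writing the log-return as $Y := {\log}(S_T/S_t) = X_{t,T} + Z_{e}$ (modulo an irrelevant deterministic constant), and using the independence of $X_{t,T}$ and $Z_{e}$, the moment generating function factorizes as $m(s) := \mathbb{E}[e^{sY}] = m_{X}(s)\, m_{Z_{e}}(s)$, where $m_{X}(s) = \mathbb{E}[e^{s X_{t,T}}]$ is the Heston log-price transform over the horizon $T-t$ and $m_{Z_{e}}$ is the transform of the EA jump. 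Lee's moment formula then identifies the right-hand sides of \eqref{eq:asymptotics-11}--\eqref{eq:asymptotics-12} with $\xi$ evaluated at the left and right critical exponents $q^{*} := \sup\{s \ge 0 : m(-s) < \infty\}$ and $r^{*} := \sup\{s \ge 0 : m(s) < \infty\}$ --- the shift to $r^{*}-1$ in \eqref{eq:asymptotics-12} coming from the passage between $\mathbb{E}[S_T^{\,1+p}]$ and $\mathbb{E}[S_T^{\,p}]$ --- and the Benaim--Friz theorem supplies the existence of the limits.

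I would first identify the critical exponents. The Heston transform $m_{X}$ is affine in the current variance $\sigma_{t}^{2}$, with coefficient solving the Riccati equation associated with \eqref{eq:heston_dyn}; it is finite precisely for $s$ in an interval $(-p_{-}, p_{+})$, where $p_{\pm} = p_{\pm}(T-t)$ are the smallest positive roots of the stated $\coth$-equations. This is the classical moment-explosion characterization for the Heston model, obtained by inverting the explicit moment-explosion time. For the EA jump: in Case~1, $m_{Z_{e}}(s) = e^{s^{2}\sigma_{e}^{2}/4}$ is entire, so the jump does not constrain the exponents and $(q^{*},r^{*}) = (p_{-}, p_{+})$; in Case~2, $m_{Z_{e}}(s) = u\,\frac{\eta_{1}}{\eta_{1}-s} + (1-u)\,\frac{\eta_{2}}{\eta_{2}+s}$ is finite exactly for $-\eta_{2} < s < \eta_{1}$, with simple poles at $\eta_{1}$ and $-\eta_{2}$, whence $r^{*} = {\min}\{p_{+}, \eta_{1}\}$ and $q^{*} = {\min}\{p_{-}, \eta_{2}\}$, matching the statement.

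Next I would verify the Benaim--Friz tail condition for $Y$, namely that $-{\log} f_{Y}$ is regularly varying of index one at $\pm\infty$ with asymptotic slope equal to the relevant critical exponent. For the right tail there are two regimes. If the jump constraint binds strictly, $r^{*} = \eta_{1} < p_{+}$ (possible only in Case~2), then writing $f_{Y} = f_{X_{t,T}} \ast f_{Z_{e}}$ and using $f_{Z_{e}}(y) \sim u\,\eta_{1}\, e^{-\eta_{1} y}$ one obtains, by a standard exponential-tail convolution estimate, $f_{Y}(x) \sim u\,\eta_{1}\, m_{X}(\eta_{1})\, e^{-\eta_{1} x}$ (finite since $\eta_{1} < p_{+}$), a pure exponential tail, so the condition holds with slope $\eta_{1} = r^{*}$. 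If instead the Heston constraint binds, $r^{*} = p_{+} \le \eta_{1}$ (including $\eta_{1} = \infty$ in Case~1), then $Z_{e}$ has the strictly lighter tail and the convolution is dominated by the Heston factor; the known sharp Heston density tail gives $-{\log} f_{X_{t,T}}(x) = p_{+} x + o(x)$ with the correction regularly varying of index strictly less than one, and this survives convolution with the independent $Z_{e}$, so the condition holds with slope $p_{+} = r^{*}$. The left tail is treated identically with $(\eta_{1}, p_{+})$ replaced by $(\eta_{2}, p_{-})$. With the tail condition in hand, the theorem of \citet{Benaim2008} delivers the limits \eqref{eq:asymptotics-11}--\eqref{eq:asymptotics-12} with constants $\xi(q^{*})$ and $\xi(r^{*}-1)$.

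The main obstacle is the local analysis behind the two inputs: confirming that $m_{X}$ is finite exactly up to $\pm p_{\pm}$ and that these are the smallest positive roots of the $\coth$-equations (a careful study of the Riccati solution as $s$ approaches the critical value), and quoting or deriving the sharp Heston tail $-{\log} f_{X_{t,T}}(x) = p_{+} x + o(x)$ with index-$<1$ correction, which is what legitimizes the index-one regular variation in the Heston-dominated regime. Everything downstream --- absorbing the independent jump via the factorization of $m$ for the exponents and via the convolution estimate for the tails, reading off $q^{*}$ and $r^{*}$, and invoking Benaim--Friz --- is routine. A minor point to watch is the non-generic boundary case $p_{+} = \eta_{1}$ (resp.\ $p_{-} = \eta_{2}$), where both factors of $m$ contribute to the singularity; it is handled by the same convolution argument, or simply excluded on a negligible parameter set.
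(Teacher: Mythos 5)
Your proposal is correct in outline and follows the same overall strategy as the paper: reduce to the Benaim--Friz refinement of Lee's moment formula, factor the m.g.f.\ of ${\log}(S_T/S_t)$ as the product of the Heston transform and the EA-jump transform, read off the critical exponents ($r^*=\min\{p_+,\eta_1\}$ from the Heston moment-explosion $\coth$-equation and the DE pole, and symmetrically for $q^*$), and then verify the regularity condition that upgrades the $\limsup$ to a limit. Where you diverge is in that last verification step. The paper stays entirely on the m.g.f.\ side: it checks Criteria I and II of Theorems 7 and 8 of Benaim--Friz, which only require showing that $M$ (when the jump pole binds, $M(\eta_1-z)\sim u\eta_1 z^{-1}$) or ${\log}M$ (when the Heston explosion binds, regular variation of index $1$ in $1/(p_+-\omega)$, obtained by l'H\^opital on the explicit transform) behaves suitably as $\omega$ approaches the critical moment. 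You instead propose to establish the tail/density asymptotics of the log-return directly, via an exponential-tail convolution estimate in the jump-dominated regime and the sharp Heston density tail $-{\log}f_{X_{t,T}}(x)=p_+x+o(x)$ in the Heston-dominated regime. Your route is legitimate, but it imports a substantially heavier external ingredient --- the sharp Heston tail asymptotic is itself a nontrivial saddle-point/Tauberian result --- precisely the kind of input that the m.g.f.-side criteria of Benaim--Friz are designed to make unnecessary. If you rewrite the Heston-dominated case using their Criterion II (regular variation of ${\log}M$ near the critical moment, checked by l'H\^opital on the explicit Riccati solution, exactly as you already do implicitly when locating $p_\pm$), the argument closes with no appeal to density asymptotics and coincides with the paper's proof. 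Your handling of the boundary case $p_+=\eta_1$ and of the jump-dominated regime is fine either way.
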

 
 \begin{prop}\label{prop:asy2}
Let $S$ follow  the extended Kou dynamics (\ref{eq:dedyn}) where
$Z_{e}$ is either: 

Case 1: normally distributed, $Z_{e}\sim N\left(0,\frac{\sigma_{e}^{2}}{2}\right)$;
or

Case 2: double-exponentially distributed, $Z_{e}\sim DE\left(u,\eta_{1},\eta_{2}\right)$. \\
Then for any fixed $t<T_{e}$, the implied volatility  $I\left(t;K,T\right)$ satisfies 
\begin{align}
\frac{I^{2}\left(t;K,T\right)\left(T-t\right)}{{\log}\left(\frac{K}{S_{t}}\right)} & \sim\mbox{\ensuremath{\xi}\ensuremath{\ensuremath{\left(q^{*}\right)}}},\ as\ \ensuremath{K}\rightarrow\ensuremath{0},\label{eq:asymptotics-21} \\
\frac{\mbox{\ensuremath{I^{2}\left(t;K,T\right)\left(T-t\right)}}}{{\log}\mbox{\ensuremath{\left(\frac{K}{S_{t}}\right)}}} & \sim\mbox{\ensuremath{\xi}\ensuremath{\ensuremath{\left(r^{*}-1\right)}}},\ as\ \ensuremath{K}\rightarrow\ensuremath{\infty},\label{eq:asymptotics-22}
\end{align}
where $\xi\left(x\right)\equiv2-4\left(\sqrt{x^{2}+x}-x\right)$
and
\[
q^{*}=\begin{cases}
\lambda_{2} & case\ 1,\\
\min\left\{ \lambda_{2},\eta_{2}\right\}  & case\ 2,
\end{cases}\quad,\quad r^{*}=\begin{cases}
\lambda_{1} & case\ 1,\\
\min\left\{ \lambda_{1},\eta_{1}\right\}  & case\ 2.
\end{cases}
\]

\end{prop}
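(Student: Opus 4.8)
The plan is to read off these asymptotics from the moment/regular‑variation machinery of \citet{Benaim2008} (see also \citet{Benaim2008a}), exactly as one would do for Proposition \ref{prop:asy1}. Recall that under the hypotheses of that framework the extreme‑strike slopes of $I(t;K,T)$ are pinned down by the critical exponents of the moment generating function of $S_T$: with $r^{*}=\sup\{p\ge 1:\mathbb{E}\{S_T^{p}\}<\infty\}$ and $q^{*}=\sup\{q\ge 0:\mathbb{E}\{S_T^{-q}\}<\infty\}$, the right wing obeys $\frac{I^{2}(t;K,T)(T-t)}{\log(K/S_t)}\to\xi(r^{*}-1)$ as $K\to\infty$ and the left wing the analogous relation with $\xi(q^{*})$ as $K\to 0$, where $\xi(x)=2-4(\sqrt{x^{2}+x}-x)$ is precisely the function in the statement (the martingale property forces $r^{*}\ge 1$, so $\xi$ is evaluated at a nonnegative argument). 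Thus the proof splits into two tasks: (i) identify $r^{*}$ and $q^{*}$ for the extended Kou model in each EA‑jump case, and (ii) verify the regular‑variation (Tauberian) condition on the tails of $\log(S_T/S_0)$ that licenses the theorem.

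For (i), I would exploit the independence of the EA jump. From \eqref{eq:dedyn}, $\log(S_T/S_0)=X_T+Z_e-\log\mathbb{E}\{e^{Z_e}\}$ where $X_T$ is the Kou log‑return part and $Z_e\perp X_T$, so $\mathbb{E}\{S_T^{\theta}\}$ equals, up to a positive constant, $\mathbb{E}\{e^{\theta X_T}\}\,\mathbb{E}\{e^{\theta Z_e}\}$. The Kou factor is finite exactly for $\theta\in(-\lambda_2,\lambda_1)$, since its cumulant $\tfrac12\sigma^{2}\theta^{2}T-m\kappa\theta T+\kappa T(\mathbb{E}\{e^{\theta J}\}-1)$ inherits the simple poles of $\mathbb{E}\{e^{\theta J}\}=p\frac{\lambda_1}{\lambda_1-\theta}+(1-p)\frac{\lambda_2}{\lambda_2+\theta}$ at $\theta=\lambda_1$ and $\theta=-\lambda_2$. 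In Case~1, $Z_e$ is Gaussian so $\mathbb{E}\{e^{\theta Z_e}\}<\infty$ for every $\theta$ and the product's finiteness window is still $(-\lambda_2,\lambda_1)$, giving $r^{*}=\lambda_1$, $q^{*}=\lambda_2$. In Case~2, $Z_e\sim DE(u,\eta_1,\eta_2)$ has $\mathbb{E}\{e^{\theta Z_e}\}<\infty$ exactly on $(-\eta_2,\eta_1)$, so the product is finite exactly on $(-\min\{\lambda_2,\eta_2\},\min\{\lambda_1,\eta_1\})$, giving $r^{*}=\min\{\lambda_1,\eta_1\}$ and $q^{*}=\min\{\lambda_2,\eta_2\}$. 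Substituting into $\xi(r^{*}-1)$ and $\xi(q^{*})$ yields the four displayed limits. The same bookkeeping proves Proposition \ref{prop:asy1}, with the Kou window $(-\lambda_2,\lambda_1)$ replaced by the Heston moment‑explosion window $(-p_-,p_+)$ whose endpoints solve the stated transcendental equations via the standard CIR/Heston moment‑explosion analysis, intersected with $(-\eta_2,\eta_1)$ in Case~2.

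For (ii), I would work with the tail of $\log(S_T/S_0)$, which has a density because of the Gaussian component of $X_T$. The right tail of $X_T$ is of exponential order $\lambda_1$ (the up‑jump part being heavier than the Gaussian part, up to the subexponential correction produced by the essential singularity of the Kou mgf at $\lambda_1$), while the right tail of $Z_e$ is lighter than any exponential in Case~1 and of exponential order $\eta_1$ in Case~2. A convolution lemma then gives $-\log\mathbb{P}(\log(S_T/S_0)>y)=r^{*}y+o(y)$ with the correction regularly varying of index zero after the appropriate change of variables — precisely the hypothesis of \citet{Benaim2008}, which was designed to cover jump‑diffusion models of Kou type (indeed the Kou model itself is one of its worked examples); the only new ingredient here is the independent EA jump $Z_e$, whose effect is to intersect finiteness windows as in (i) and to replace the exponential rate in the tail by $r^{*}$ while preserving the admissible corrections. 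The left wing is handled identically after $y\mapsto -y$, producing rate $q^{*}$. Applying the theorem finishes the proof.

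The step I expect to be the main obstacle is (ii): carrying out the tail/convolution estimate rigorously for a compound‑Poisson‑plus‑Gaussian law and, in particular, confirming in the borderline cases $\eta_1=\lambda_1$ (resp.\ $\eta_2=\lambda_2$) that the extra factor separating the two exponential rates is genuinely slowly varying and does not perturb the leading $\xi$‑term, so that the hypotheses of \citet{Benaim2008} are met exactly. The moment bookkeeping in (i) and the final substitution into $\xi$ are routine.
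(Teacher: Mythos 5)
Your proposal is correct in substance and follows the same overall route as the paper: both invoke the extreme-strike framework of \citet{Benaim2008}, and your step (i) --- reading $r^{*}$ and $q^{*}$ off the finiteness window of the factorized m.g.f.\ $\mathbb{E}\{e^{\theta X_T}\}\mathbb{E}\{e^{\theta Z_e}\}$, i.e.\ $(-\lambda_2,\lambda_1)$ intersected with $(-\eta_2,\eta_1)$ in the DE case --- is exactly the paper's computation. Where you diverge is in step (ii), the verification of the Tauberian hypothesis, and here the paper takes a materially easier path that you should be aware of. Rather than proving the tail estimate $-\log\mathbb{P}(\log(S_T/S_0)>y)=r^{*}y+o(y)$ directly via a convolution lemma for the compound-Poisson-plus-Gaussian-plus-$Z_e$ law (which, as you note, would require Kasahara-type arguments to control the subexponential correction coming from the essential singularity of the compound-Poisson m.g.f.), the paper checks Criteria I and II of Theorems 7 and 8 in \citet{Benaim2008} \emph{on the explicit m.g.f.\ itself}: if $\lambda_1>\eta_1$ the EA-jump pole dominates and $M(\eta_1-z)\sim u\eta_1 z^{-1}$ (regularly varying, Criterion I), while if $\lambda_1\le\eta_1$ the essential singularity dominates and $\log M(\lambda_1-z)\sim \kappa p\lambda_1 T z^{-1}$ (Criterion II); the left wing is symmetric. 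This disposes of your anticipated main obstacle --- including the borderline case $\eta_1=\lambda_1$, which falls under Criterion II since the logarithmic contribution of the simple pole is negligible against the $z^{-1}$ blow-up --- without any density-level tail analysis. Your plan would work if carried out, but the convolution estimate you flag as the hard step is precisely what Theorems 7--8 of \citet{Benaim2008} are designed to let you skip, so the m.g.f.-side verification is the intended (and much shorter) argument.
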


First, we observe that if the EA jump is Gaussian, then it has no role in  the IV asymptotics in strikes (see \eqref{eq:asymptotics-11} and \eqref{eq:asymptotics-21}). Hence,  in either the Heston or Kou model, the large/small strikes asymptotics  with and without the EA jump are in fact identical.   On the other hand, if the tails of the EA jump are fatter than those
of the base model, then the IV asymptotics are determined by the EA
jump parameters. In such cases,
  the asymptotics are observed  for less extreme strikes
with  short maturities, which is   when the EA jump variance dominates. For longer
maturities, however,  the asymptotics  hold for  more extreme strikes.
An intuitive explanation is that, as time-to-maturity increases, the EA jump variance is relatively
low and the tails behavior is manifest only for extreme values. 

In Figure \ref{fig:asymptotics}, we show the IV asymptotics under the extended Kou (left) and Heston (right) models compared to the IV obtained by inverting the BS formula on option prices calculated via Fourier transform (see Section \ref{sub:fourier}). We plot the asymptotic volatility function $I(t;K,T)=c + \sqrt{{\log}\left( {K}/{S_t}\right)\xi(\omega)}$ for fixed $(t,T)$, where $c$ is a constant chosen so that   the asymptotics and the model IVs   coincide at the most extreme strikes considered. In each particular case,  $\omega$ is a constant  set according to (\ref{eq:asymptotics-21})-(\ref{eq:asymptotics-22}) (for the extended Kou model) and (\ref{eq:asymptotics-11})-(\ref{eq:asymptotics-12}) (for the extended Heston model). In Figure \ref{fig:asymptotics} (left) the double-exponential EA jump tails dominates those of
the daily jumps, as this   generally holds in practice. In Figure \ref{fig:asymptotics}  (right), the EA jump is Gaussian and thus does not affect the IV asymptotics from  the base model.

\begin{figure}[H] \centering{}
\includegraphics[scale=0.4]{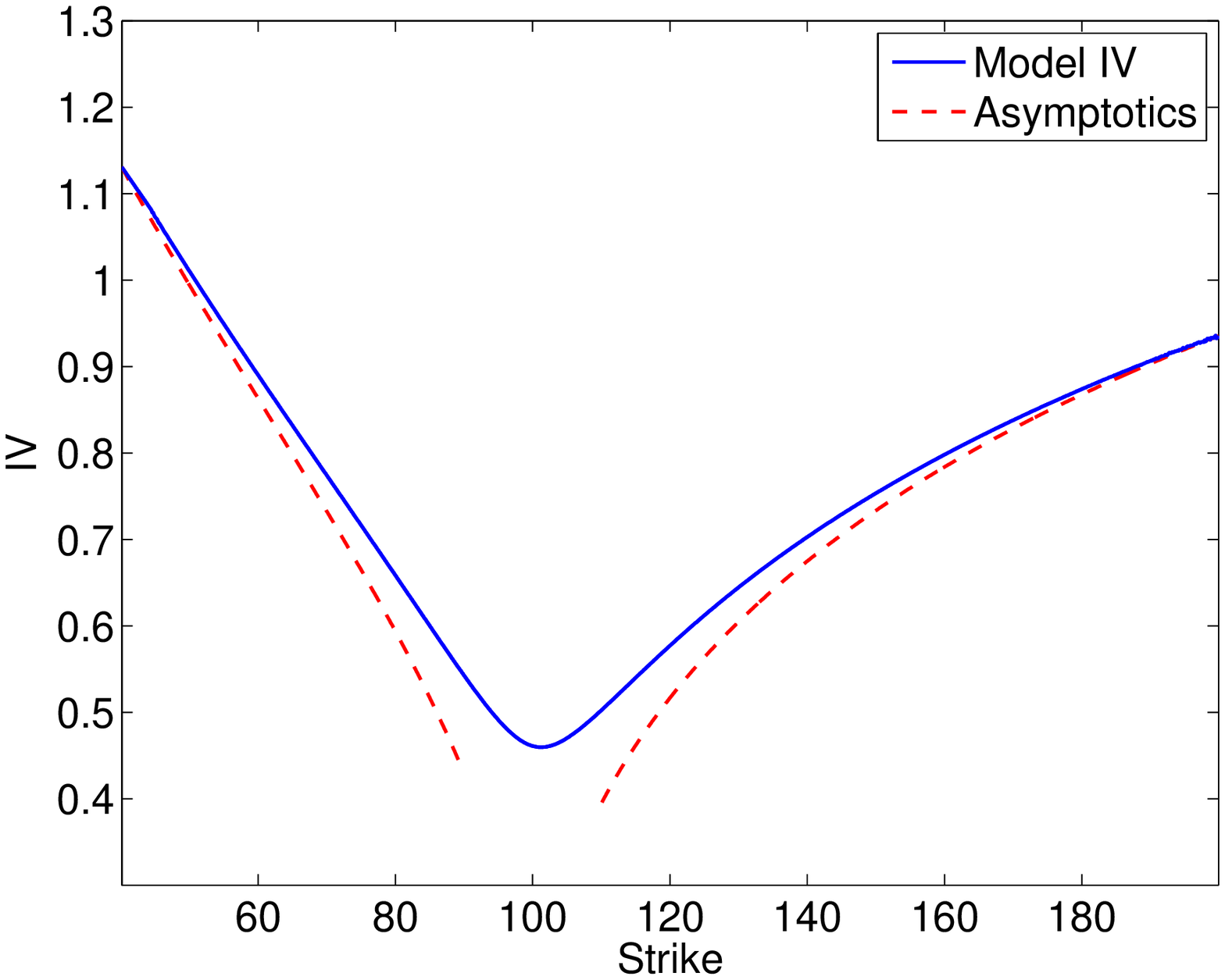}\includegraphics[scale=0.4]{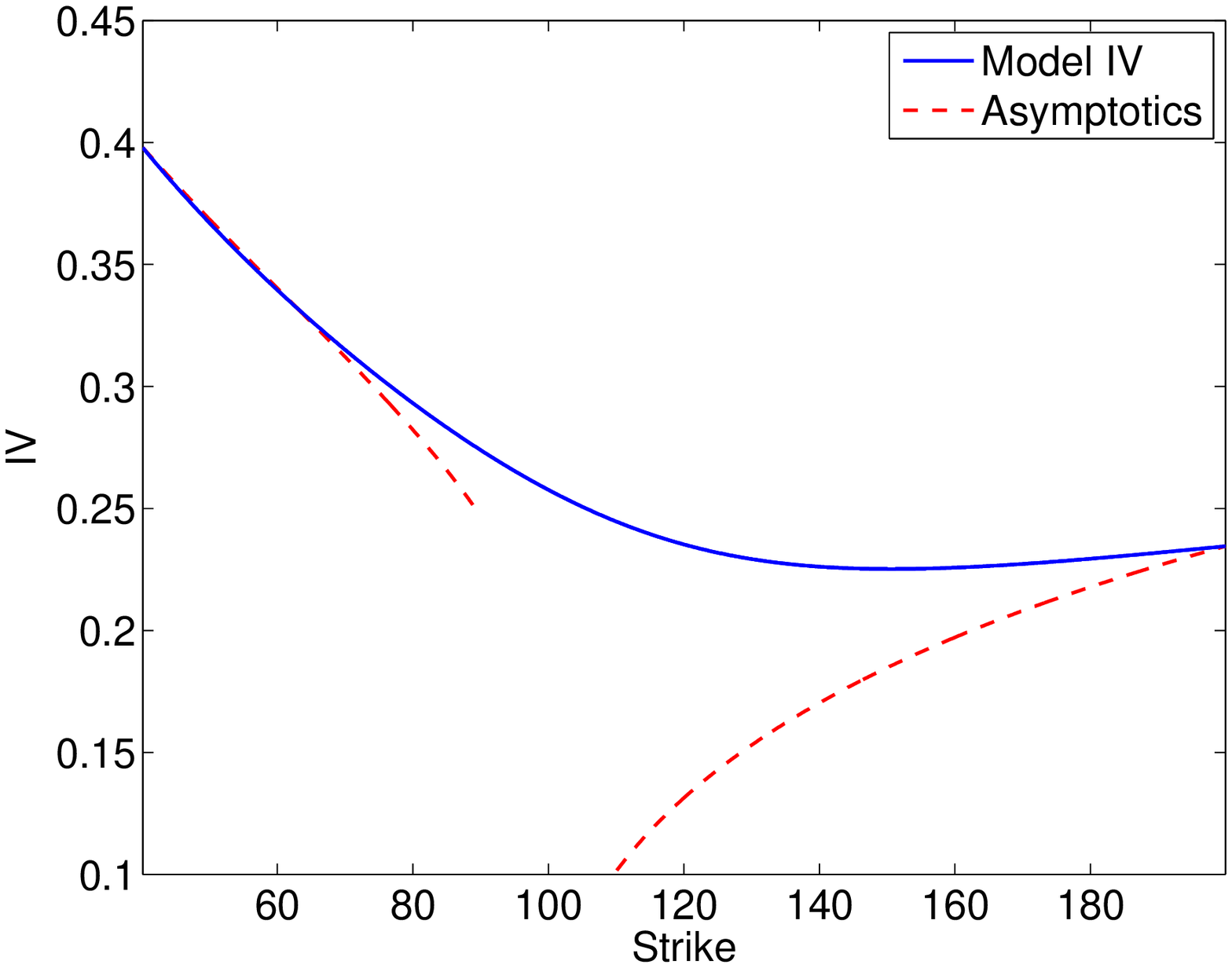}\caption{\small (Left) The  IV obtained from numerical option prices against the asymptotics (\ref{eq:asymptotics-21})-(\ref{eq:asymptotics-22})
under the extended Kou model with a log-DE EA jump. Parameters: $S=100,\ r=0.02,\ T=\frac{4}{252},\ \kappa=300,\ p=0.5,\ \lambda_{1}=\lambda_{2}=100,\ u=0.5,\ \eta_{1}=30,\ \eta_{2}=25$. 
(Right) The IV obtained from numerical option prices against the asymptotics (\ref{eq:asymptotics-11})-(\ref{eq:asymptotics-12})
under the extended Heston model with a log-normal EA jump. Parameters: $S=100,\ r=0.02,\ T=1,\ \nu=2.7,\ \vartheta=0.077,\ \sigma_{0}=0.7075,\ \rho=-0.54,\ \zeta=0.073,\ \sigma_{e}=0.04.$}
\label{fig:asymptotics}
\end{figure}

%Concluding, we note that we do not analyze the limiting behaviors for short and long time-to-maturities since known results for the base models can be directly applied in those cases. For example, since  the earnings announcement is made at a fixed  time before expiration, after an EA and approaching maturity, the model reduces to the base model without the EA jump. Alternatively, for very large time to maturity, the effect of the EA jump will be insignificant w.r.t. to the rest of the price dynamics.

\section{Calibration and Parameter Estimators\label{sec:calibration}}

In this section, we perform  calibrations of the extended
BS, Kou and Heston models to  the observed market prices of options near an earnings announcement. This allows us to   evaluate whether
the model extensions improve the accuracy of the calibration compared
to that of the base models. Calibration results can also be used to
infer information about the distribution of the EA jump and, in one
simple example, we compare the estimators obtained through calibration
on IBM options data to estimates given by its historical distribution.

\subsection{Model Calibration\label{sub:calibration}}
In our calibration procedure, we consider  a set of $N$ vanilla calls and puts with observed market prices,  $\hat{C}_{i}$, $i=1,...,N$. These  options have different  contractual features, such as strike, maturity, and option type. For a given  model,  the set of model parameters is denoted by $\Theta$. In turn, the model price of   option  $i$ is denoted by $C_i\left(\Theta\right)$.  To calibrate a given  model, we minimize the sum of squared errors (see e.g. \citet{Dennis1977,Andersen2000,Bates1996,Cont2002}):
\begin{equation}
\min_{\Theta}\limits  \, \sum_{i=1}^{N}\left(C_i\left(\Theta\right)-\hat{C}_{i}\right)^{2},\label{eq:cal_problem}
\end{equation}
We use the best-bid and best-ask mid-point prices. Furthermore, we  adapt  a trust-region-reflective gradient-descent algorithm (\citet{Coleman1994}; \citet{Coleman1996}), starting from different initial points to guarantee a better exploration of the parameters space. While  our  numerical tests show the adopted method  results in an effective calibration, we  remark that there are many alternative, possibly more advanced, calibration procedures available (see e.g. \cite{Cont2002} and references therein).

The majority of equity options in the US are of
the American type.  While in Section \ref{sec:american} we will discuss  the pricing of   American options, the methods are generally  too computationally-intensive to be practically used in conjunction
with a gradient-descent method for calibration, especially as the number of options and parameters increase. Related studies typically circumvent this issue by simply assuming the American options are European-style (see e.g. \cite{Dubinsky2006, Broadie2009}). In contrast, our procedure  begins by inverting the market prices of American puts and calls via   a relatively fast American option pricer under the Black-Scholes model.  This gives us the observed IVs. In turn,  we apply the   Black-Scholes European put/call pricing formula,  with the volatility parameter being the  observed IV, and derive the associated  \textit{European} put or call  price. We then use the resulting prices as  inputs to calibrate against the option prices generated from a model.  In all our experiments, we obtain option price data, available up to August 2013,  from the OptionMetrics Ivy database.

We now present an   example using  the extended BS, Kou and Heston models with  different  distributions for the EA jump. Our objective is to illustrate the  calibrated  IV surfaces under these models and compare them with the empirical IV surface.    Recall from Figure \ref{fig:Cal_example_surface} the empirical  implied volatility surface of IBM on July 15, 2013. That is observed   2 days prior to the earnings announcement by IBM on  July  17, 2013 after market. The closest options expiration date  was
Friday  July 19, 2013.  As noted earlier,   the front month IVs are significantly higher than those for options with longer maturities. In Figure \ref{fig:Cal_example_surfaces}
we show the associated calibrated IV surfaces for the 3  base models and their extensions, resulting in  a total of 9  calibrated models.   Table
\ref{tab:cal_ex_params}  summarizes  the calibrated parameters.  In the original Black-Scholes model, the implied volatility surface is flat and takes a high value of 28.11\%. As we incorporate the EA jump, under both Gaussian and DE distributions, the calibrated  values  of the stock price volatility $\sigma$ are lower.  More importantly, in every case, the base model is unable to generate the characteristic shape of the IV surface before the EA.  Between the extended Heston and Kou models, the  IV surface generated from the Kou model tends to flatten more rapidly as maturity lengthens. Overall,  the Heston model seems to be  able to reproduce the IV surface more accurately, and the incorporation of  a Gaussian EA jump seems to reproduce the IV surface better than with a DE EA jump in this example. We will further compare the two EA jump distributions in Section \ref{sub:premia}.

%For comparison, notice that the analytic estimators given by (\ref{eq:BS_TS_estimator}) and (\ref{eq:BS_ts_estimator}) are: $ \sigma^{TS}=55.63\%,\: \sigma^{TS}_e=6.51\%,\: \sigma^{ts}=66\%,\:$and $\sigma^{ts}_e=6.38\% $. 

% Table generated by Excel2LaTeX from sheet 'Sheet1'
\begin{table}[H]
  \centering
  \small
        \begin{tabular}{rrrrrrrrrr}
Black-Scholes & $\sigma$ & $\sigma_e$ & u     & $\eta_1$  & $\eta_2$  &       &       &       &  \\
\cline{1-6}    Base & 28.11\% & \multicolumn{1}{c}{---} & \multicolumn{1}{c}{---} & \multicolumn{1}{c}{---} & \multicolumn{1}{c}{---} & \multicolumn{1}{c}{} &       &       &  \\
    Gaussian jump & 27.69\% & 7.11\% & \multicolumn{1}{c}{---} & \multicolumn{1}{c}{---} & \multicolumn{1}{c}{---} &       &       &       &  \\
    DE jump & 21.42\% & \multicolumn{1}{c}{---} & \multicolumn{1}{c}{38.28\%} &         23.03  & 6.3   &       &       &       &  \\
\cline{1-6}  
          &       &       &       &       &       &       &       &       &  \\
    Heston & $\nu$ & $\vartheta$ & $\zeta$  & $\rho$   & $\sigma_0^2$    & $\sigma_e$ & u     & $\eta_1$  & $   \eta_2$ \\
     \hline
    Base &           3.70  &           0.05  &           0.90  & -0.51 &           0.05  & \multicolumn{1}{c}{---} & \multicolumn{1}{c}{---} & \multicolumn{1}{c}{---} & \multicolumn{1}{c}{---} \\
    Gaussian jump &           4.04  &           0.05  &           1.01  & -0.55 &           0.03  & 4.73\% & \multicolumn{1}{c}{---} & \multicolumn{1}{c}{---} & \multicolumn{1}{c}{---} \\
    DE jump &           3.10  &           0.05  &           0.84  & -0.54 &           0.03  & \multicolumn{1}{c}{---} & 42.06\% &         34.77  &         24.95  \\
    \hline
          &       &       &       &       &       &       &       &       &  \\
    Kou   & $\sigma$ & $\kappa$  & p     & $\lambda_1$ & $\lambda_2$ & $\sigma_e$  & u     & $\eta_1$  & $\eta_2$ \\
    \hline
    Base & 2.80\% & \multicolumn{1}{c}{232.3} & \multicolumn{1}{c}{46.85\%} & \multicolumn{1}{c}{275.9} & \multicolumn{1}{c}{82.1} & \multicolumn{1}{c}{---} & \multicolumn{1}{c}{---} & \multicolumn{1}{c}{---} & \multicolumn{1}{c}{---} \\
    Gaussian jump & 0.03\% & 193.6 & 51.17\% & 998.9 & 70.0  & 3.61\% & \multicolumn{1}{c}{---} & \multicolumn{1}{c}{---} & \multicolumn{1}{c}{---} \\
    DE jump & 0.39\% & 85.1  & 79.46\% & 990.3 & 30.0  & \multicolumn{1}{c}{---} & 98.73\% & 32.9  & 2.0 \\
    \hline
    \end{tabular}%

  \caption{\small Summary of calibrated parameters  based on the observed IV  surface in  Figure \ref{fig:Cal_example_surface}. The corresponding calibrated IV surfaces are displayed in Figure \ref{fig:Cal_example_surfaces}}
  \label{tab:cal_ex_params}
\end{table}

%\begin{figure}[H]
%\centering{}\includegraphics[scale=0.5]{pictures/IBM_130715_cal}\caption{\small The observed  implied volatility surface for  IBM on July 15, 2013, two days before an earnings announcement. \label{fig:Cal_example_surface}}
%\end{figure}

\begin{figure}[H]
\centering{}\includegraphics[scale=0.25]{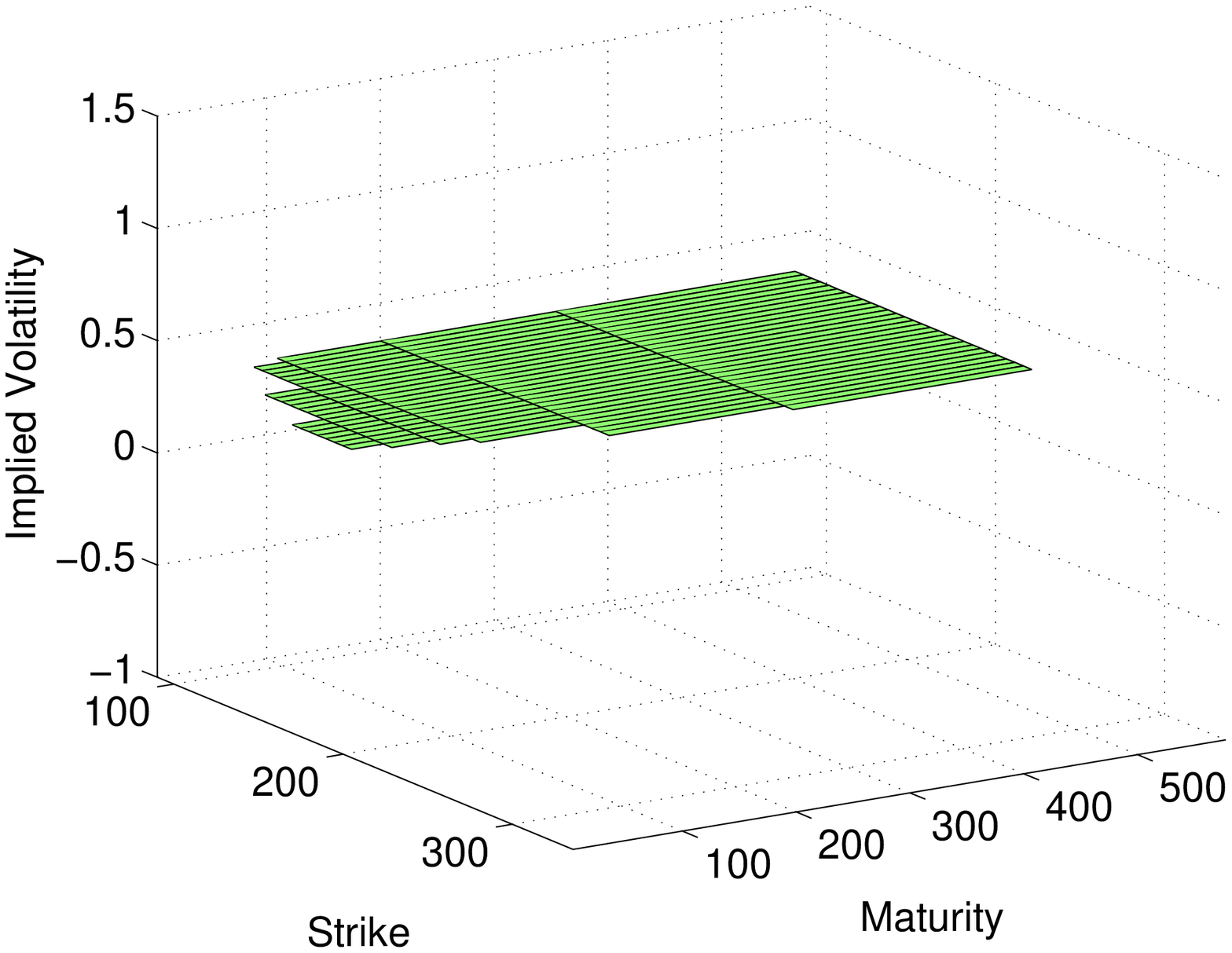}\includegraphics[scale=0.25]{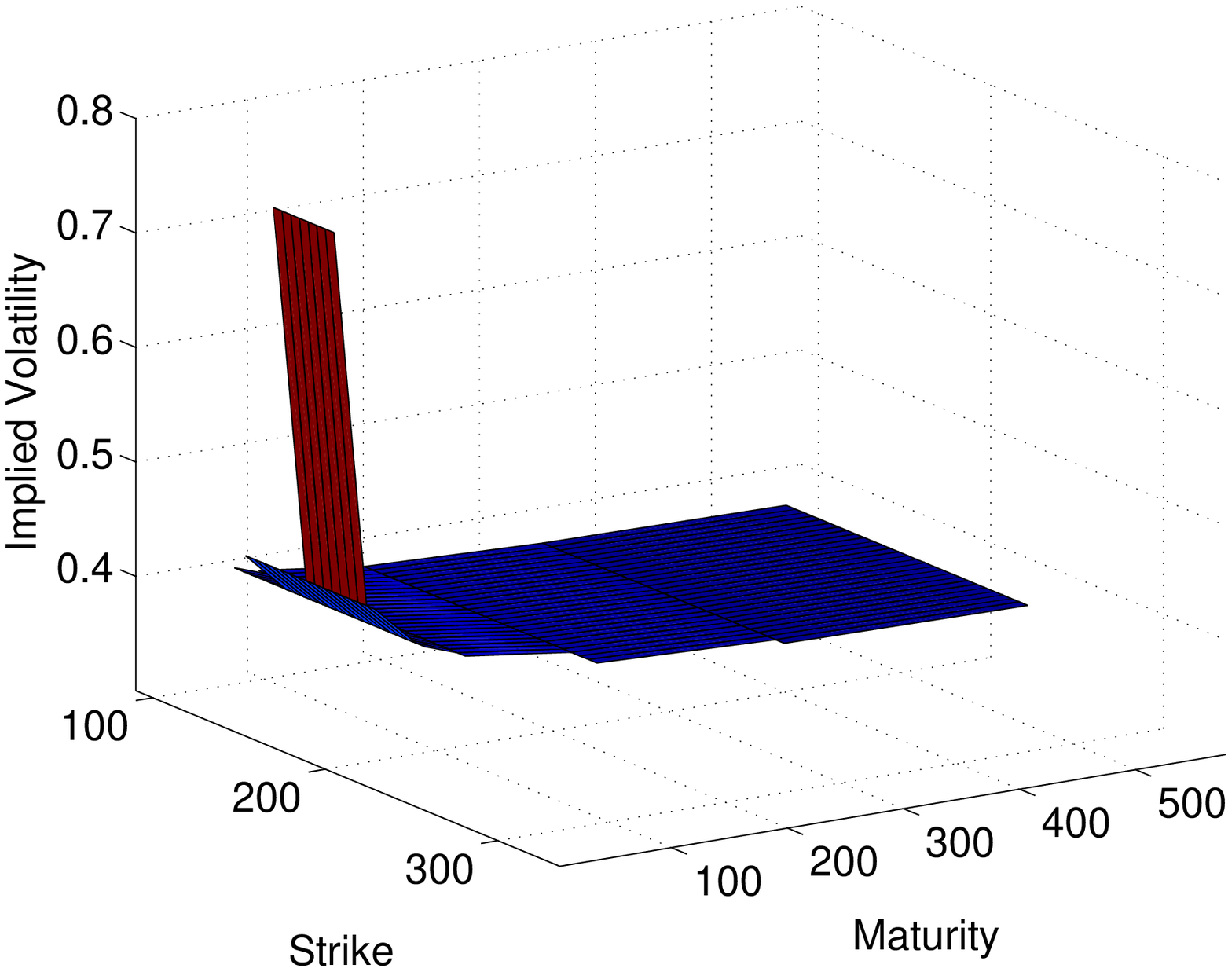}\includegraphics[scale=0.25]{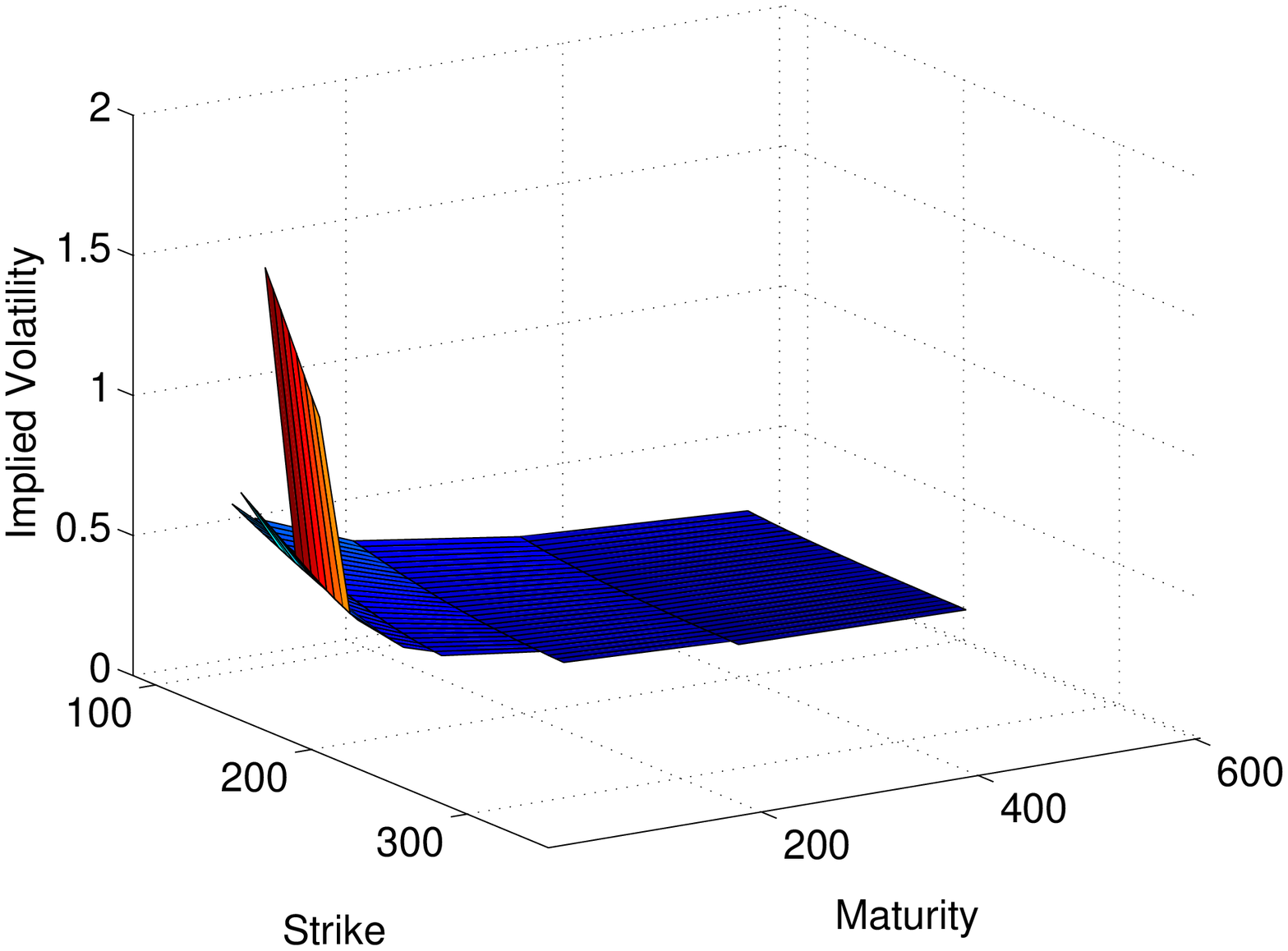}\\
\includegraphics[scale=0.25]{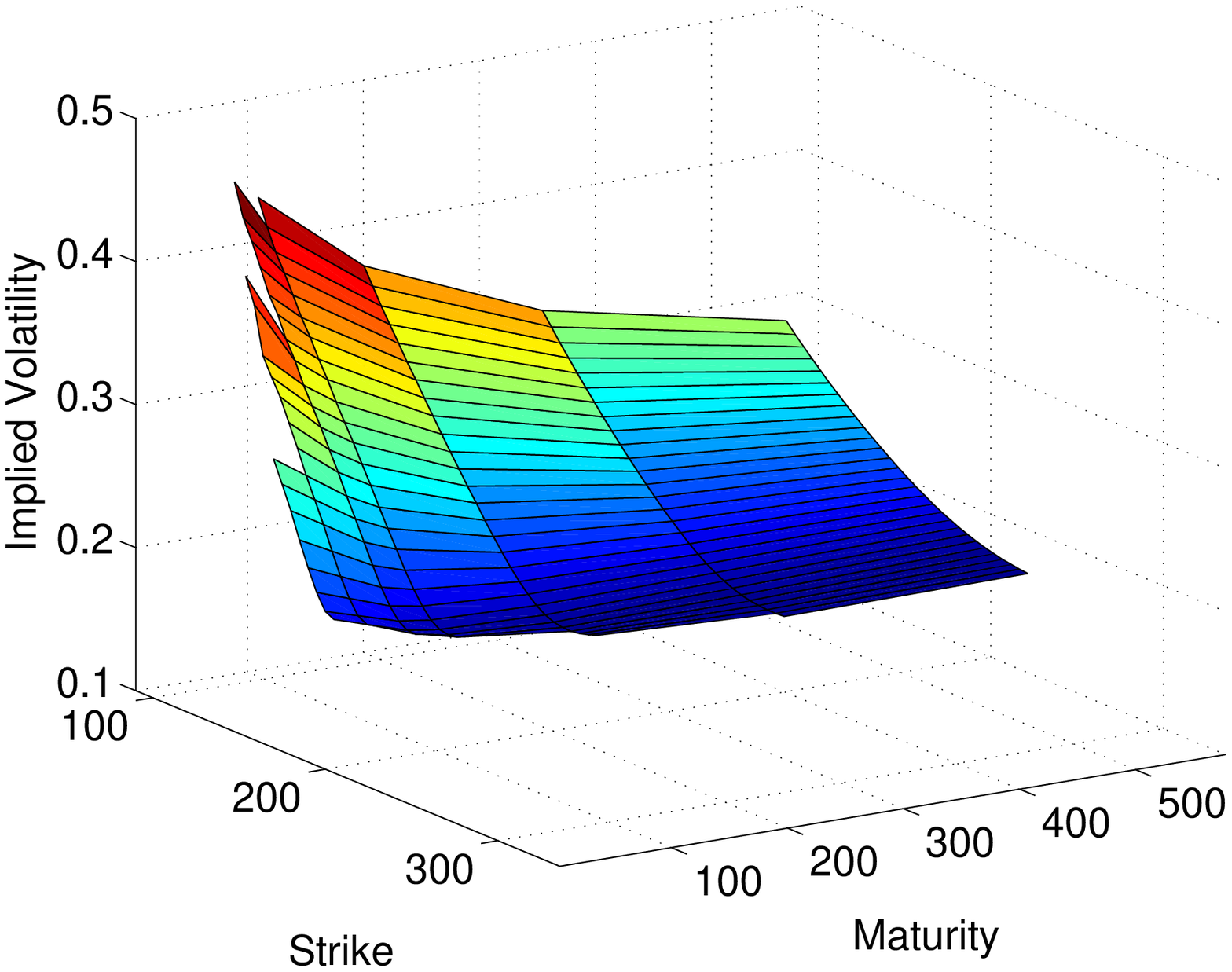}\includegraphics[scale=0.25]{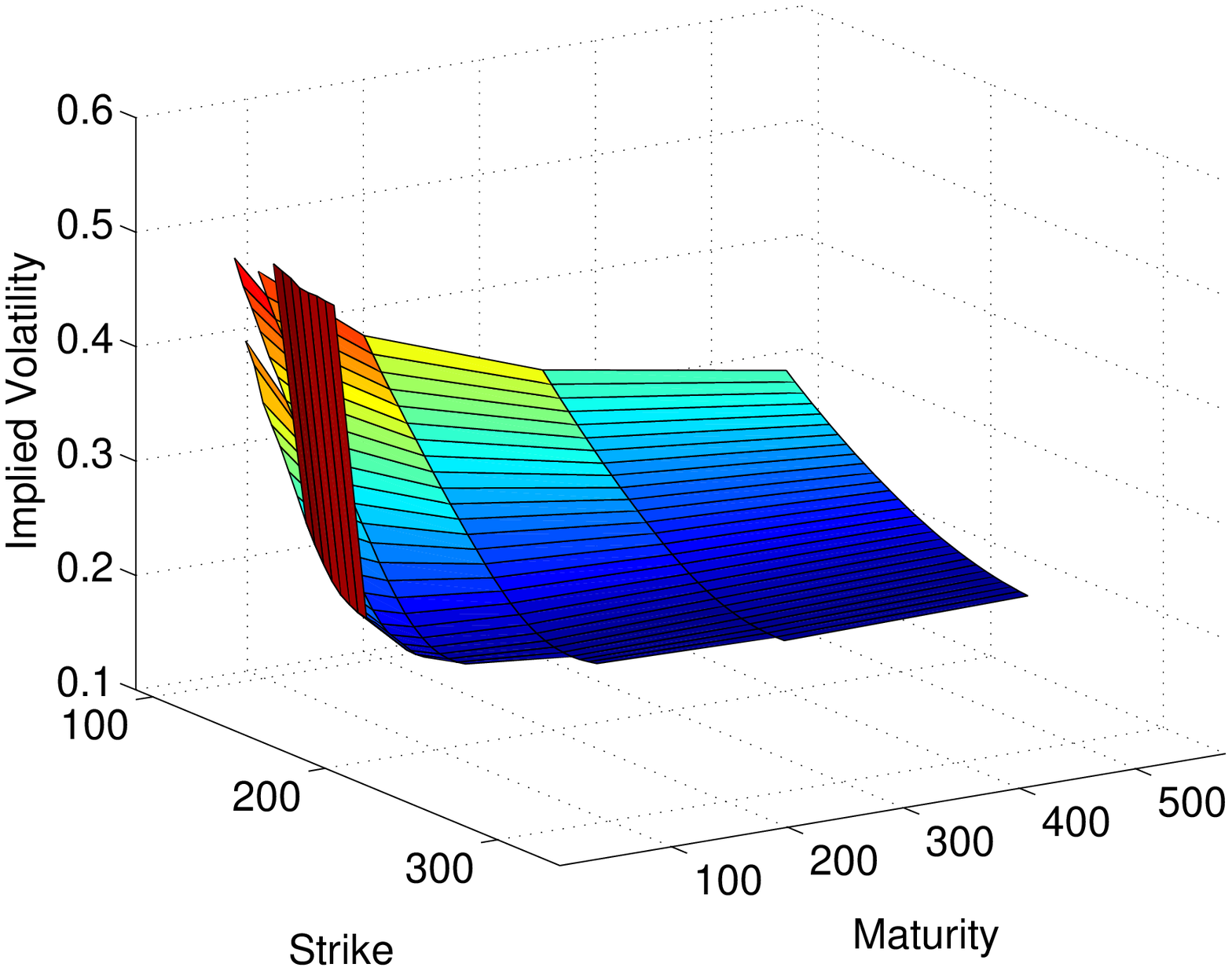}\includegraphics[scale=0.25]{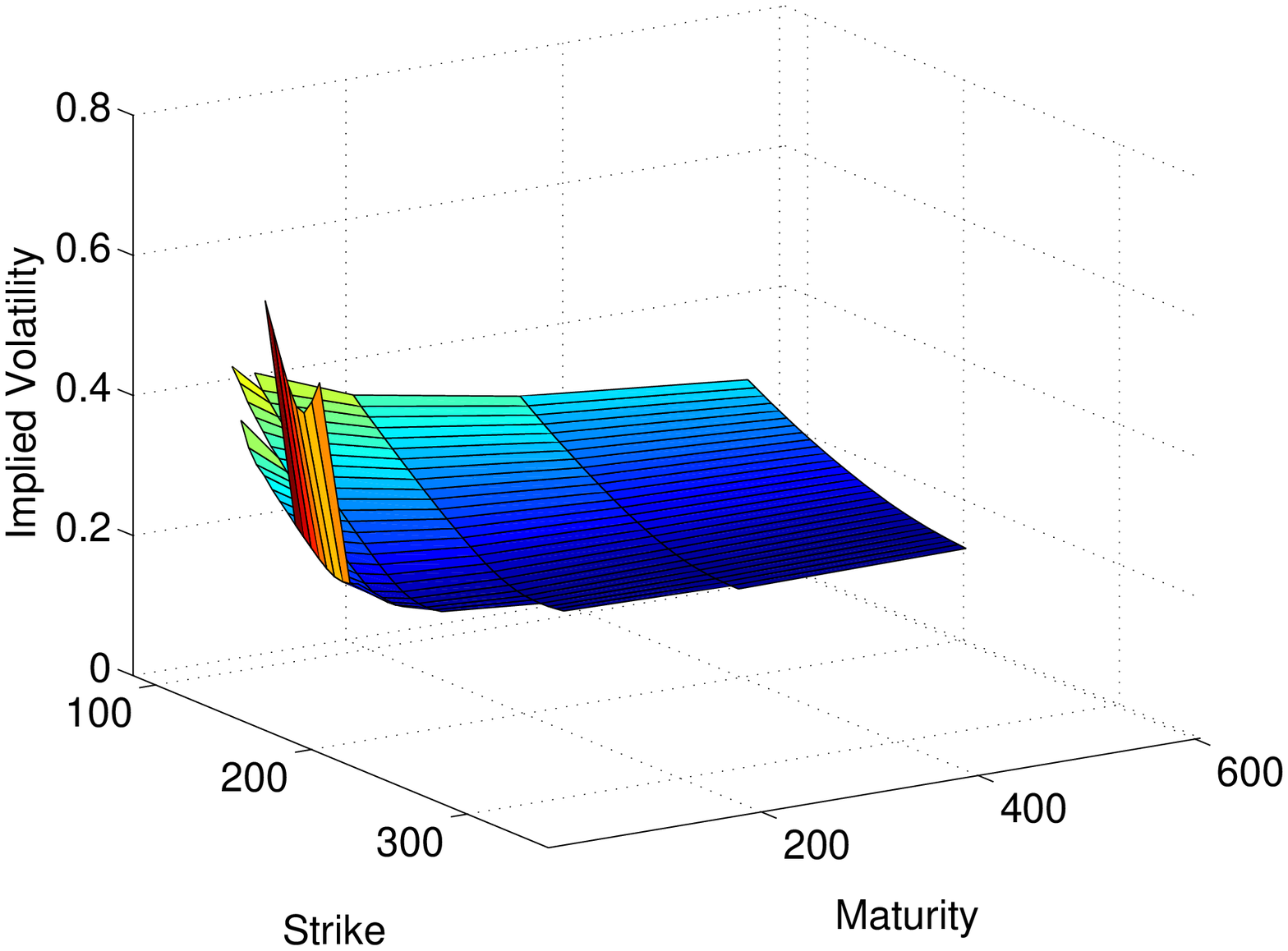}\\
\includegraphics[scale=0.25]{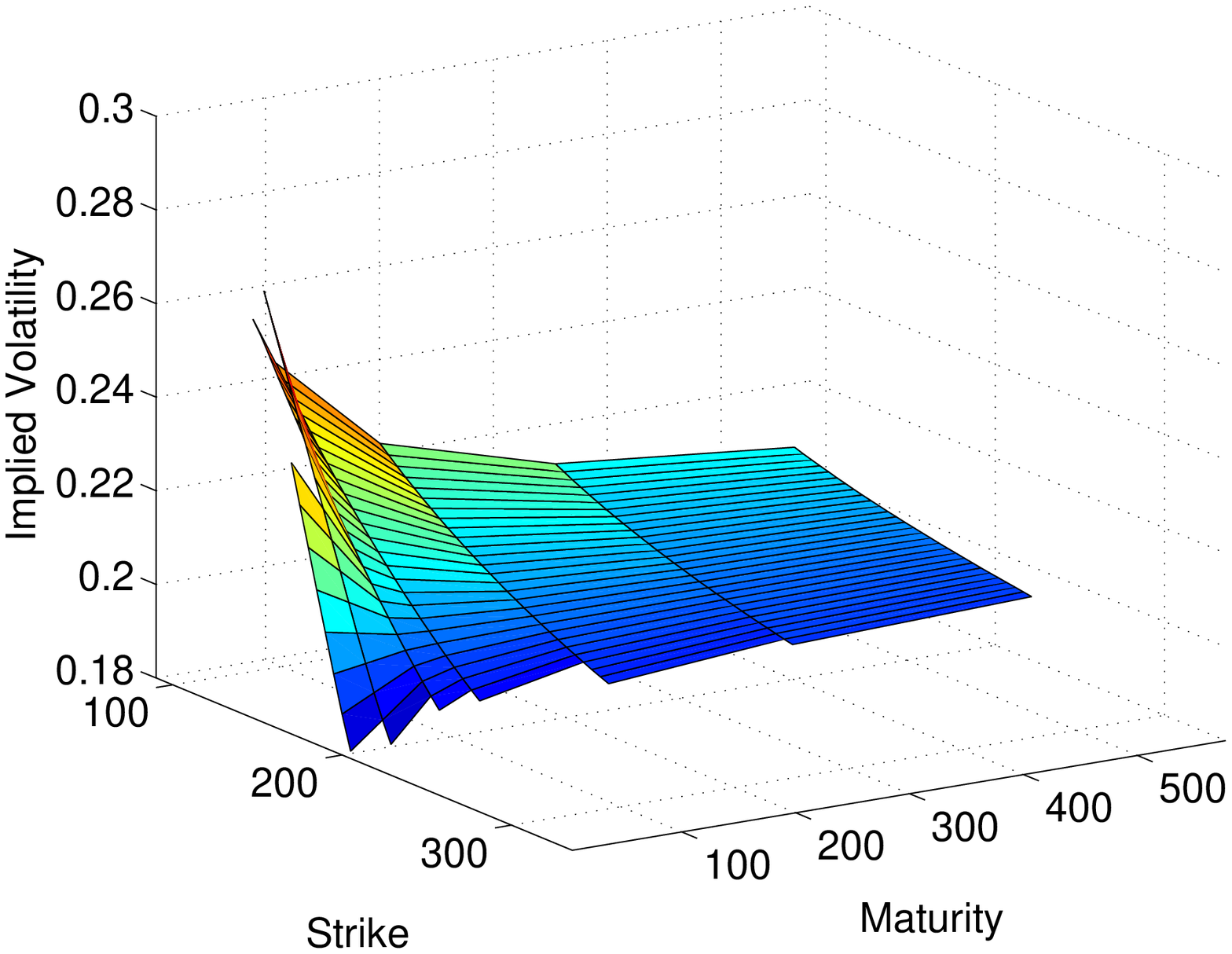}\includegraphics[scale=0.25]{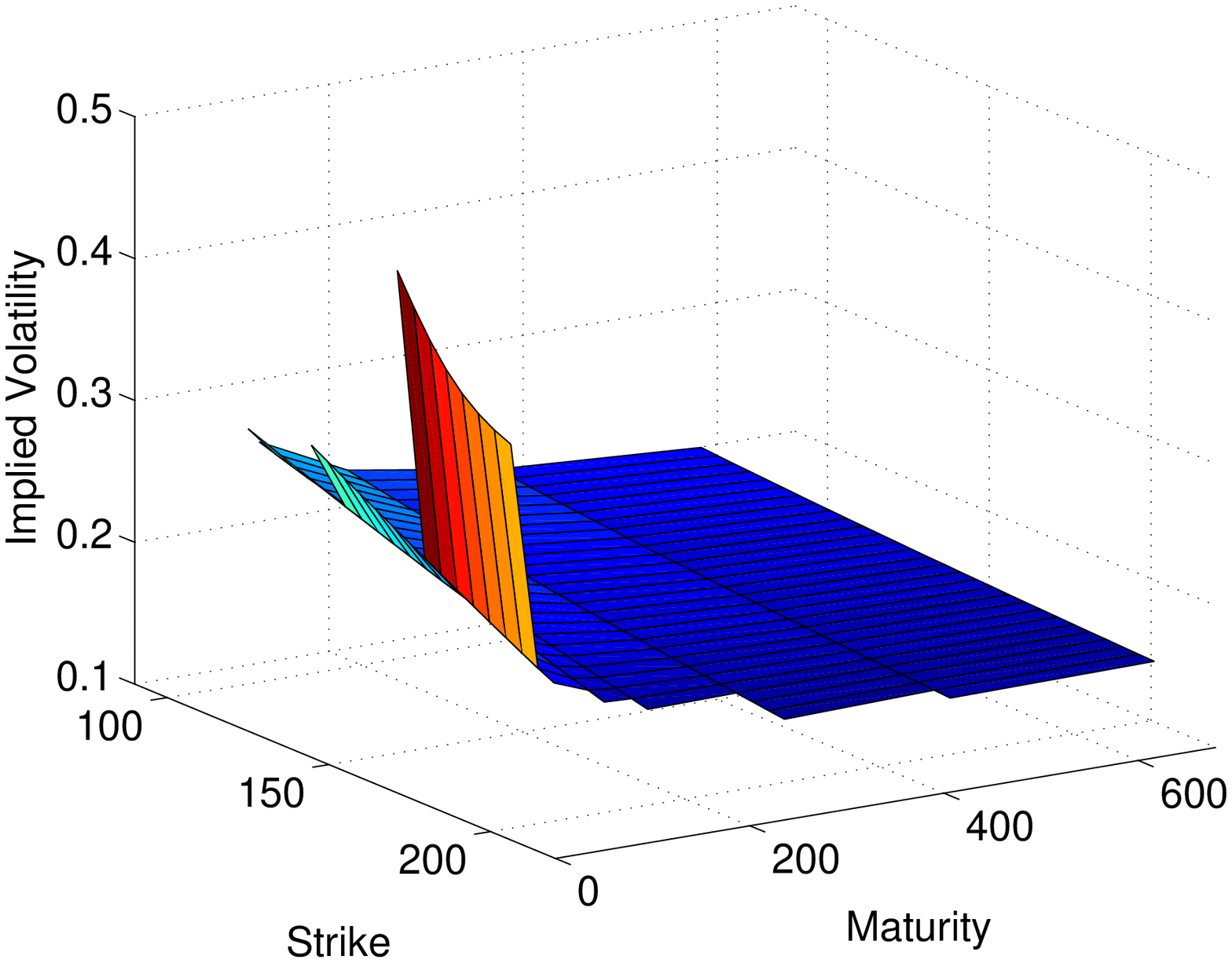}\includegraphics[scale=0.25]{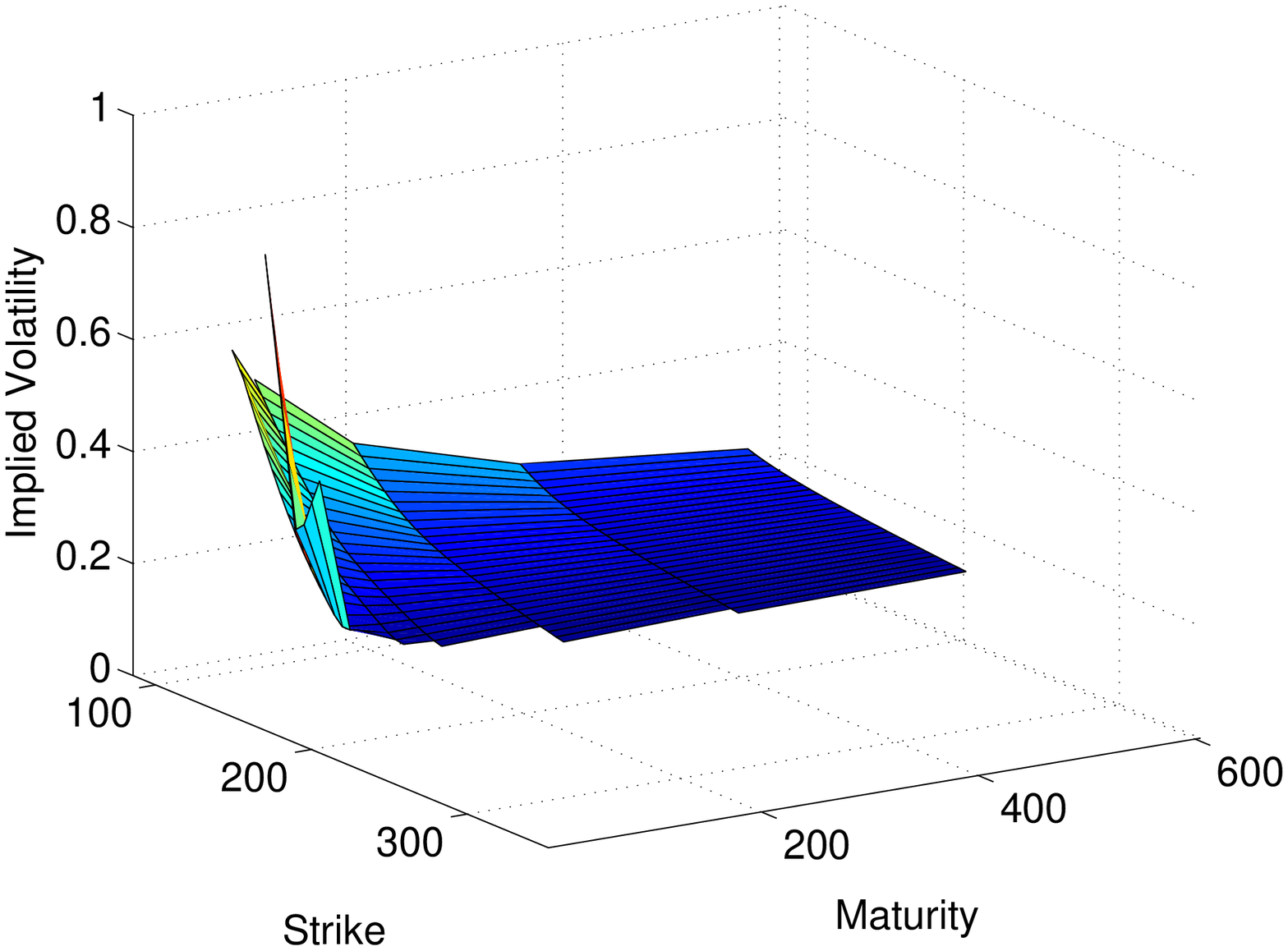}\caption{\small Calibrated surfaces for the Black-Scholes, Heston, and Kou models (1st, 2nd, 3rd rows respectively) without EA jump (left column), and with  Gaussian (middle) and DE (right) EA jumps. The calibrated model parameters are listed in Table \ref{tab:cal_ex_params}.}\label{fig:Cal_example_surfaces}
\end{figure}

\subsection{Analytic Estimators under the Extended BS Model}\label{sub:Analytic-Estimators}
In  the extended BS model (\ref{eq:GBMe}), it is  possible to derive analytical estimators for the models parameters $(\sigma, \sigma_e)$, as discussed in \cite{Dubinsky2006}. We apply these estimators to compare with the estimators obtained by calibrating other models (see Table \ref{tab:risk_premia}).  First, we consider the extended BS model where the EA jump $Z_{e}$ is  normally distributed.  To calibrate this model, it suffices to use any pair of  options of  different maturities. Let $\sigma_{IV}\left(T_{1}\right),\ \sigma_{IV}\left(T_{2}\right)$
represent the implied volatilities of two options with maturities $T_{1}$ and $T_{2}$, respectively. Then, applying  (\ref{eq:BS_IV_ts}), the model  parameters $(\sigma,\sigma_e)$  can be estimated by  
\begin{align}
\sigma^{TS}  =  \sqrt{\frac{\left(T_{1}-t\right)\sigma_{IV}^{2}\left(T_{1}\right)-\left(T_{2}-t\right)\sigma_{IV}^{2}\left(T_{2}\right)}{T_{1}-T_{2}}}, \quad
\sigma_{e}^{TS}  =  \sqrt{\frac{\sigma_{IV}^{2}\left(T_{1}\right)-\sigma_{IV}^{2}\left(T_{2}\right)}{\frac{1}{T_{1}-t}-\frac{1}{T_{2}-t}}}.\label{eq:BS_TS_estimator}
\end{align}
where the superscript $TS$ indicates the relevance of the   IV term structure to these  estimators. In particular, we call  $\sigma_{e}^{TS}$ the \emph{term structure
estimator} of the jump volatility under the risk-neutral measure $\mathbb{Q}$.  We stress that a set of two options
with identical maturity would not allow us to estimate  $\sigma$ and $\sigma_{e}$ separately, but only the aggregate value $\sigma^{2}+\frac{\sigma_{e}^{2}}{T-t}$. 

Alternatively, one can  utilize option prices at different times for parameter estimation. In fact, given the implied volatilities $\sigma_{IV,t_{1}}$
and $\sigma_{IV,t_{2}}$ at times $t_{1}$ and $t_{2}$, with $t_{1}<t_{2}<T_{e}$, we apply \eqref{eq:BS_IV_ts} to get  the following estimators:
\begin{align}
\sigma^{ts}  = \sqrt{\frac{\left(T-t_{1}\right)\sigma_{IV,t_{1}}^{2}\left(T\right)-\left(T-t_{2}\right)\sigma_{IV,t_{2}}^{2}\left(T\right)}{t_{2}-t_{1}}},\quad
\sigma_{e}^{ts}  =  \sqrt{\frac{\sigma_{IV,t_{1}}^{2}\left(T\right)-\sigma_{IV,t_{2}}^{2}\left(T\right)}{\frac{1}{T-t_{1}}-\frac{1}{T-t_{2}}}}.\label{eq:BS_ts_estimator}
\end{align}
They are called   the \emph{time-series estimators} (see also \cite{Dubinsky2006}). 

We observe from  (\ref{eq:BS_TS_estimator}) that one must require that 
$\sigma_{IV}\left(T_{1}\right)>\sigma_{IV}\left(T_{2}\right)$ in order to obtain well-defined estimators.  Similarly for  $\sigma^{ts}$ and $\sigma^{ts}_e$ in (\ref{eq:BS_ts_estimator}), their definitions suggest  that $\sigma_{IV,t_{2}}>\sigma_{IV,t_{1}}$
must hold.  In our empirical tests, we find that, before an earnings announcement, $\sigma_{IV}\left(T_{1}\right)>\sigma_{IV}\left(T_{2}\right)$ always holds, but  the condition $\sigma_{IV,t_{1}}>\sigma_{IV,t_{2}}$ is sometimes  violated. Similar observations are also discussed in  \citet{Dubinsky2006}, who have also conducted  a comprehensive empirical test using ATM options. 

We emphasize that these analytical estimators   are based on a specific extension of the BS model.  Since market prices are not necessarily generated by this model,  the analytical estimators and the calibrated parameters  may  not coincide. Moreover, they  also depend on the choice of  options whose IVs are inputs to  the estimator formulas. On the other hand, the main advantage of these analytical estimators is that they can be computed instantly, and they are also used in practice  (see e.g. \citet{jpmorgan}) and related studies.

\subsection{Implied EA jump Distribution and Risk Premia}\label{sub:premia}
With the choice of a pricing model, our  calibration procedure  extracts  the implied distribution of the EA jump.  One useful application is to compare  the risk-neutral and historical  distributions of the EA jump. Their discrepancy will shed some light on the risk premium associated with the EA jump. 
 As an example, let us consider the empirical  EA jumps of the IBM stock starting from 1994. We assume that both  risk-neutral and historical  distributions are Gaussian, which is amenable for comparison  since  we only need to estimate a single parameter, i.e.  the EA jump volatility. In Table \ref{tab:risk_premia}, we   report the estimate of EA jump volatility obtained by calibrating the Heston model extended with a Gaussian EA jump, $\sigma_e^Q$, and the empirical EA jump volatility,  $\sigma_e^P$ based on data  from 1994 up to the given date. For comparison, we  also list   the EA jump volatility estimators  according to  (\ref{eq:BS_TS_estimator}).  As we can see, for each given date, the ratio $\sigma_e^P/\sigma_e^Q$ is very  close to 1. This suggests that under the extended Heston model, the EA jump distributions are very similar under both historical and risk-neutral measures. On the other hand,  the ratio $\sigma_e^P/\sigma_e^{TS}$  is smaller and less than  1 in this example, suggesting that the extended BS model would imply a  higher EA jump volatility than the empirical one.  In summary,  the volatility  $\sigma_e^Q$ calibrated from  the extended Heston model  is smaller than the term structure EA jump volatility estimator $\sigma_e^{TS}$  which is based on the extended BS model without stochastic volatility.

\begin{table}[H]
  \centering
  \footnotesize
    \begin{tabular}{rrrrrrrr}
    \hline
    Date  &       &  $\sigma_e^P$ & $\sigma_e^Q$ & $\sigma_e^P/\sigma_e^Q$  &       & $\sigma_e^{TS}$ & $\sigma_e^P/\sigma_e^{TS}$ \\
%          &       &       &       &       &       &       &  \\
\hline
    18-Jul-12 &       & 4.64\% & 4.68\% & 99.3\% &       & 5.22\% & 89.0\% \\
    16-Oct-12 &       & 4.62\% & 4.18\% & 110.5\% &       & 4.70\% & 98.4\% \\
    22-Jan-13 &       & 4.62\% & 4.58\% & 100.9\% &       & 5.71\% & 81.0\% \\
    18-Apr-13 &       & 4.61\% & 4.68\% & 98.5\% &       & 5.49\% & 84.1\% \\
    17-Jul-13 &       & 4.63\% & 4.61\% & 100.4\% &       & 5.76\% & 80.3\% \\
    \hline
    \end{tabular}%
  \caption{\small The implied and historical EA jump volatilities for IBM.  For each date in the table, the historical EA jump volatility $\sigma_e^P$ is estimated using price data from 1994 up to that date. The implied volatility $\sigma_e^Q$ is calibrated from the extended Heston model. }
  \label{tab:risk_premia}
\end{table}

\section{American Options\label{sec:american}}

While index options are typically of  European style, most
US equity options are  American-style. In general, the American option pricing
problem does not admit closed-form formulas,  so we discuss a    numerical method  for computing the option price and exercise boundary. In addition, we apply the  analytic results from the European case to approximate the American option price before an earnings announcement.

\subsection{American Option Price and Exercising Boundary\label{sub:american}}

We assume that the stock price evolves according to the extended Kou model with an EA jump defined in (\ref{eq:dedyn}). The value
of the American option is defined by
\begin{equation}
A\left(t,S\right)=\sup_{t\leq\tau\leq T}\mathbb{E}\left\{ e^{-r\left(\tau-t\right)}\left(K-S_{\tau}\right)^{+}\vert S_{t}=S\right\}\,,\quad t\leq T, \label{eq:am_def}
\end{equation}
where $\tau$ is a stopping time w.r.t. the filtration generated by $S$. By the dynamic programming principle, the option price can be written as (see, e.g. \citep[Chap. 10]{Oeksendal2003})
\begin{align}
A\left(t,S\right) & =\sup_{t\leq\tau\leq T_{e}}\mathbb{E}\left\{e^{-r\left(\tau-t\right)}\left( 1_{\left\{ \tau<T_{e}\right\} }\left(K-S_{\tau}\right)^{+}+1_{\left\{ \tau=T_{e}\right\} }A\left(T_{e},S_{T_{e-}}e^{Z_{e}}\right)\right)\vert S_{t}=S\right\} .\label{eq:Bellman}
\end{align}
Therefore, we see that for $t<T_e$ the problem is equivalent to pricing an American option under the Kou model but with ``terminal" payoff $\mathbb{E}\left\{A\left(T_{e},Se^{Z_{e}}\right)\vert S_{T_{e-}}=S\right\}$ at time $T_e$.  Assume that $t<T_e<T$ and 
that $Z_{e}$ admits the p.d.f. $g\left(z\right)$. Then, the American
put price can be written as
\[
A\left(t,S\right)=D\left(t,S\right)1_{\left\{ t<T_{e}\right\} }+\widetilde{D}\left(t,S\right)1_{\left\{ t\geq T_{e}\right\} },
\]
where $D$ and $\widetilde{D}$ satisfy the linear complementarity
problems (see  \citet{Bensoussan1984} and  \citet{Cont2005}):
\begin{equation}
\begin{cases}
\widetilde{D}\left(t,S\right)\geq\left(K-S\right)^{+}, & \quad   T_e \le t<T,\ S\geq0,\\
r\widetilde{D}\left(t,S\right)-\frac{\partial\widetilde{D}}{\partial t}\left(t,S\right)-\mathcal{L}\widetilde{D}\left(t,S\right)\geq0, & \quad   T_e \le t<T,\ S\geq0,\\
\left(\widetilde{D}\left(t,S\right)-\left(K-S\right)^{+}\right)\left(r\widetilde{D}\left(t,S\right)-\frac{\partial\widetilde{D}}{\partial t}\left(t,S\right)-\mathcal{L}\widetilde{D}\left(t,S\right)\right)=0, & \quad   T_e \le t<T,\ S\geq0,\\
\widetilde{D}\left(T,S\right)=\left(K-S\right)^{+}, & \quad S\geq0;
\end{cases}\label{eq:am_eq_1}
\end{equation}
\begin{equation}
\begin{cases}
D\left(t,S\right)\geq\left(K-S\right)^{+}, & \quad  0 \le  t <T_{e} ,\ S\geq0,\\
rD\left(t,S\right)-\frac{\partial{D}}{\partial t}\left(t,S\right)-\mathcal{L}D\left(t,S\right)\geq0, & \quad  0 \le  t <T_{e} ,\ S\geq0,\\
\left(D\left(t,S\right)-\left(K-S\right)^{+}\right)\left(rD\left(t,S\right)-\frac{\partial{D}}{\partial t}\left(t,S\right)-\mathcal{L}D\left(t,S\right)\right)=0, & \quad  0 \le  t <T_{e} ,\ S\geq0,\\
D\left(T_{e},S\right)=\int_{\mathbb{R}} \widetilde{D}\left(T_{e},Se^{z}\right)g\left(z\right)dz\,, & \quad S\geq0.
\end{cases}\label{eq:am_eq_2}
\end{equation}
We have denoted by $\mathcal{L}$ the  infinitesimal generator of $S$ under model (\ref{eq:dedyn}):
\begin{equation}
\mathcal{L}V(S)\equiv \frac{\sigma^2 S^2}{2} \frac{\partial^2 V}{\partial {S^2}} + rS\frac{\partial V}{\partial S} + \kappa \int_{-\infty}^{\infty}\left(V(Se^y)-V(S)\right)f_J(y)dy,
\label{eq:Kou_generator}
\end{equation}and $f_J$ is double exponential p.d.f. 

 It is worth noting  that the integration at time $T_{e}$ in \eqref{eq:am_eq_2} 
must be approximated with a sum because $\tilde{D}$ is not in 
 closed form. The numerical computation of the integration may  introduce computational errors, but it also adds to the computational
burden since  the sum is $\mathcal{O}\left(n^{2}\right)$, where $n$
is the number of discretized  stock price values. We remark that
it is possible, for example, to reduce the complexity of the integration
to $\mathcal{O}\left(n{\log}\left(n\right)\right)$  using an  FFT algorithm. It is also worth noticing that when
$T_{e}=T$, the complexity reduces to $\mathcal{O}\left(n\right)$
if closed form formulas are available for the European option (e.g.
if the EA jump is Gaussian or double-exponential). On the other hand, the complexity also
reduces to $\mathcal{O}\left(n\right)$ when the announcement is imminent, i.e. $T_{e}=0^{+}$. This motivates us to look for a closed-form approximation to  the American option price based on these scenarios, as discussed in Sect. \ref{sub:am_approx}. It is useful 
to compare the American option prices under the same model but with
different earnings announcement dates.  
\begin{prop}
\label{prop:american_bounds}Let $A\left(t,S;u\right)$ denote the American option price as in (\ref{eq:am_def}) with $T_{e}=u$. Then, we have
\begin{equation}
\label{Aineq}
A\left(t,S;l\right)\leq A\left(t,S;u\right)\quad,\quad l\geq u>t.
\end{equation}
\end{prop}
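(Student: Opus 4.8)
The plan is to prove \eqref{Aineq} by coupling the two models and then exploiting the dynamic programming structure \eqref{eq:Bellman} together with the martingale condition on the EA jump; intuitively, scheduling the variance‑adding jump earlier lets the (convex, early‑exercisable) put benefit from a more dispersed underlying over a longer stretch of time. Concretely, I would put both models on one probability space: let $X=(X_s)_{s\ge0}$ be the Kou log‑price Lévy process underlying \eqref{eq:dedyn}, normalized so that $\mathbb{E}\{e^{X_s-X_t}\}=e^{r(s-t)}$, let $Z_e$ be independent of $X$, and set the multiplicative EA jump $J_e:=e^{\,Z_e-\log\mathbb{E}\{e^{Z_e}\}}$, so that $\mathbb{E}\{J_e\}=1$. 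For an announcement scheduled at date $a$ write $S^{(a)}_s:=S\,e^{X_s-X_t}\,J_e^{\,1_{\{s\ge a\}}}$, $s\in[t,T]$. With $t<u\le l$, the processes $S^{(u)}$ and $S^{(l)}$ use the \emph{same} $X$ and the \emph{same} $J_e$; they coincide on $\{s<u\}\cup\{s\ge l\}$, while $S^{(u)}_s=J_e\,S^{(l)}_s$ for $u\le s<l$. I will also use that, after its announcement date, each option is just a plain American put under the Kou model: for $s\ge T_e$ its value is $P(T-s,S^{(a)}_s)$, where $P(\theta,y)$ denotes the Kou American put price with time‑to‑maturity $\theta$ and spot $y$ (this is $\widetilde D$ of \eqref{eq:am_eq_1}), and I only need the crude bounds $P(\theta,y)\ge 0$ and $P(\theta,y)\ge (K-y)^+\ge K-y$.

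The plan is then to show that every stopping strategy for the $T_e=l$ option is dominated by some strategy for the $T_e=u$ option. Fix a stopping time $\tau\in[t,T]$ for the $T_e=l$ model and split $\Omega$ into $E_1:=\{\tau<u\}$, $E_2:=\{u\le\tau<l\}$ and $E_3:=\{\tau\ge l\}$. On $E_1\cup E_3$ the two coupled prices agree at $\tau$; on $E_2$ we have $S^{(u)}_\tau=J_e\,S^{(l)}_\tau$, and, crucially, on $E_1\cup E_2$ the $T_e=l$ model has not yet revealed $J_e$, so there $\tau$ is a stopping time of the Kou filtration $(\mathcal{F}^X_s)$ and $J_e$ is independent of the stopped $\sigma$‑algebra $\mathcal{F}^X_\tau$. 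I then define a stopping time $\tilde\tau$ for the $T_e=u$ model by $\tilde\tau:=\tau$ on $E_1\cup E_3$ and, on $E_2$ (where the announcement has already occurred, so the $T_e=u$ price is a pure Kou process from time $\tau$ on), $\tilde\tau:=$ the optimal continuation time of the Kou American put started from $(\tau,S^{(u)}_\tau)$, so that $\mathbb{E}\{e^{-r(\tilde\tau-\tau)}(K-S^{(u)}_{\tilde\tau})^+\mid\mathcal{F}^{(u)}_\tau\}=P(T-\tau,S^{(u)}_\tau)$ on $E_2$.

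Comparing discounted payoffs, the $E_1\cup E_3$ contributions coincide because $S^{(u)}_\tau=S^{(l)}_\tau$ there. On $E_2$, conditioning on $\mathcal{F}^X_\tau$ and integrating out $J_e$,
\[
\mathbb{E}\bigl\{P\bigl(T-\tau,\,J_e\,S^{(l)}_\tau\bigr)\,\big|\,\mathcal{F}^X_\tau\bigr\}\;\ge\;\max\!\bigl(0,\;K-S^{(l)}_\tau\,\mathbb{E}\{J_e\}\bigr)\;=\;(K-S^{(l)}_\tau)^+,
\]
using $P\ge0$, $P(\theta,y)\ge K-y$ and $\mathbb{E}\{J_e\}=1$; hence the $E_2$ contribution of $\tilde\tau$ dominates that of $\tau$. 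Altogether $\mathbb{E}\{e^{-r(\tilde\tau-t)}(K-S^{(u)}_{\tilde\tau})^+\}\ge\mathbb{E}\{e^{-r(\tau-t)}(K-S^{(l)}_\tau)^+\}$; since $\tilde\tau$ is admissible for the $T_e=u$ model, the left side is at most $A(t,S;u)$, and taking the supremum over $\tau$ on the right gives $A(t,S;l)\le A(t,S;u)$. The case $l=u$ is trivial, and if $l>T$ then $E_3=\varnothing$ and $S^{(l)}$ is a pure Kou process, so the same argument applies.

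The main obstacle I anticipate is the measure‑theoretic bookkeeping rather than the estimate: one must verify carefully that the spliced time $\tilde\tau$ is a genuine stopping time for the $T_e=u$ filtration, and justify that on $\{u\le\tau<l\}$ the time $\tau$ is adapted to $(\mathcal{F}^X_s)$ and $J_e$ is independent of $\mathcal{F}^X_\tau$ (e.g.\ by comparing $\tau$ there with the $\mathcal{F}^X$‑stopping time $\tau\wedge l$) — this is precisely what legitimizes the conditional computation above. Everything else reduces to the one‑line inequality built from the martingale condition $\mathbb{E}\{J_e\}=1$ and the trivial lower bound $P(\theta,y)\ge (K-y)^+$.
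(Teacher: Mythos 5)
Your proof is correct and follows essentially the same route as the paper: couple the two models through a common $X$ and $Z_e$, use that every stopping time admissible for the $T_e=l$ model is admissible for the $T_e=u$ model, and on the event $\{u\le\tau<l\}$ apply Jensen's inequality together with the martingale condition $\mathbb{E}\{e^{Z_e}\}=1$ to show the $u$-model payoff dominates. The only difference is that the paper simply reuses $\tau$ itself in the $u$-model, applying conditional Jensen directly to the exercise payoff $\left(K-Se^{X_\tau+Z_e}\right)^+$, which makes your splice $\tilde\tau$ (optimal continuation on $E_2$) an unnecessary embellishment and avoids the stopping-time bookkeeping you flag as the main obstacle.
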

As a result,  the American option price is monotonically decreasing as the EA date $T_{e}$ approaches maturity. Consequently, $A\left(t,S;t\right)$ and $A\left(t,S;T\right)$ become the upper and lower bounds, respectively, for the American option price $A\left(t,S;u\right)$, $t\leq u\leq T$. As an interesting comparison, the European option price is completely independent of  the exact EA date as long as it is at or prior to maturity. In Figure \ref{fig:american} (right \&  top) we show the time-value\footnote{The time-value of an American  put is defined as $A-(K-S)^+$ where $A$ is the put price, $S$ is the spot price and $K$ is the strike. With spot price $S$ fixed, inequality \eqref{Aineq} also holds for the corresponding time-values.} of an American put, with strike, maturity, and spot price fixed, over different announcement times $T_{e}$.  Consistent with  Proposition \ref{prop:american_bounds},
the time-value of the American put  is indeed monotonically decreasing
in $T_{e}$.

To solve problems (\ref{eq:am_eq_1})-(\ref{eq:am_eq_2}) we use the Fourier transform based method  presented in \cite{Jackson2008}. Unless $T_e=T$ or $T_e=0^+$, we solve backward in time for $\tilde{D}$ in (\ref{eq:am_eq_1}), perform the numerical integration and feed it as a terminal condition for problem (\ref{eq:am_eq_2}) which is solved backward in time as well. 

% Indeed, we have seen that $\mbox{lim}_{t\rightarrow T_{e_{-}}}A\left(t,S\right)=\mathbb{E}\left\{ A\left(T_e,Se^{Z_{e}}\right)\vert S_{T_{e}}=S\right\}$ which is always greater than $(K-S)^+$ and thus the boundary, when it exists, must decrease as time approaches $T_e$. 

Figure \ref{fig:american} (left) shows the exercise boundary for different
values of $T_{e}$ under the Kou model with a DE EA jump, along other common parameters. Naturally,  the
scheduled announcement introduces a discontinuity in the exercising
boundary.  We mark  the EA dates with three crosses.  As expected, after the largest date $T_e$
 the three boundaries  coincide.  Interestingly,  the exercise boundary is decreasing rapidly in time near $T_e$, which means that the option holder is  more likely  to wait until  the earnings announcement, rather than exercising immediately. This can also be seen in terms of the option's time-value. In Figure \ref{fig:american} (right \& bottom), we illustrate  that the time-value of an American put, with both strike and spot price fixed, is increasing as time approaches the EA date.     \\
 
\begin{figure} \centering{}
\includegraphics[scale=0.45]{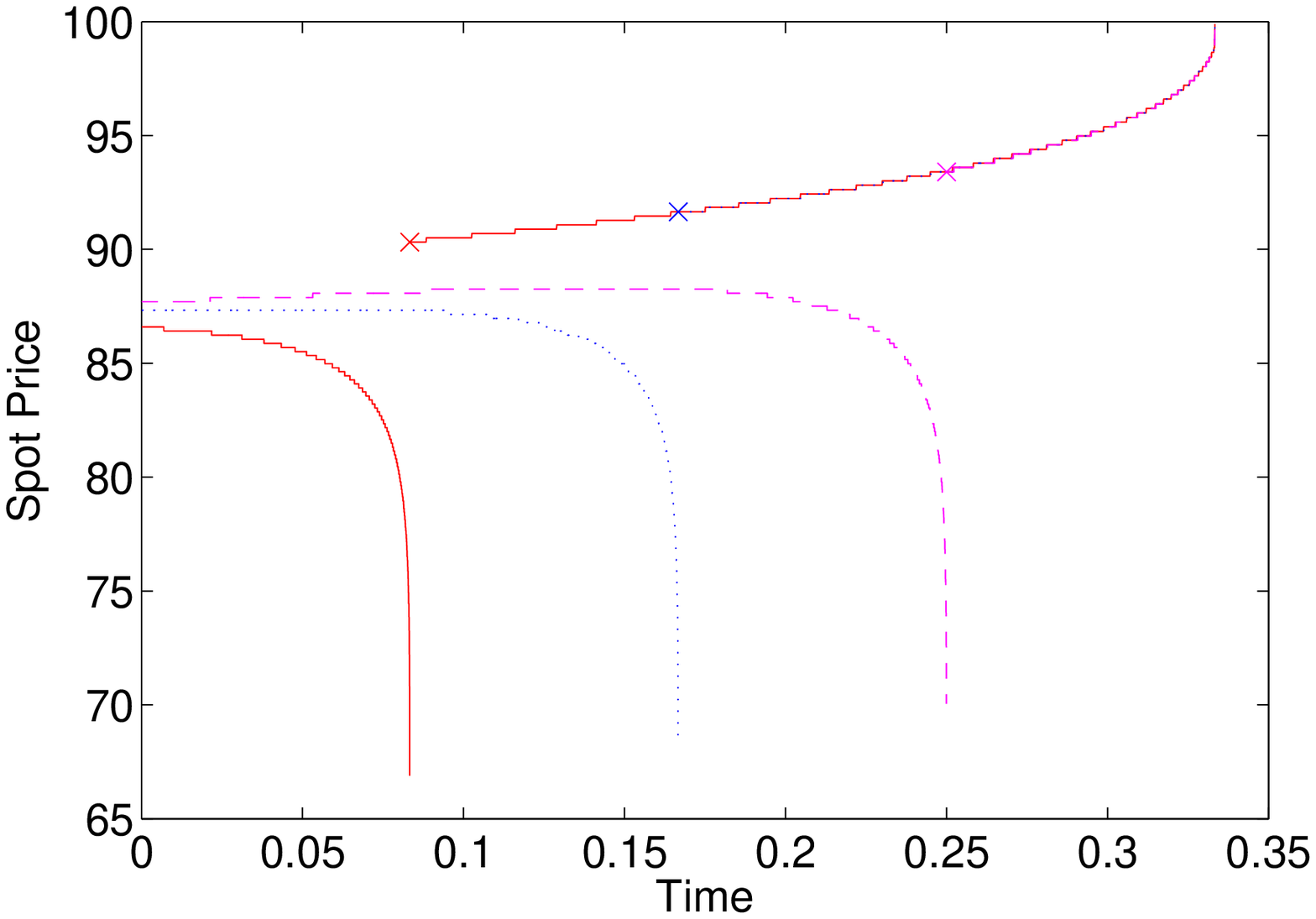}\includegraphics[scale=0.37]{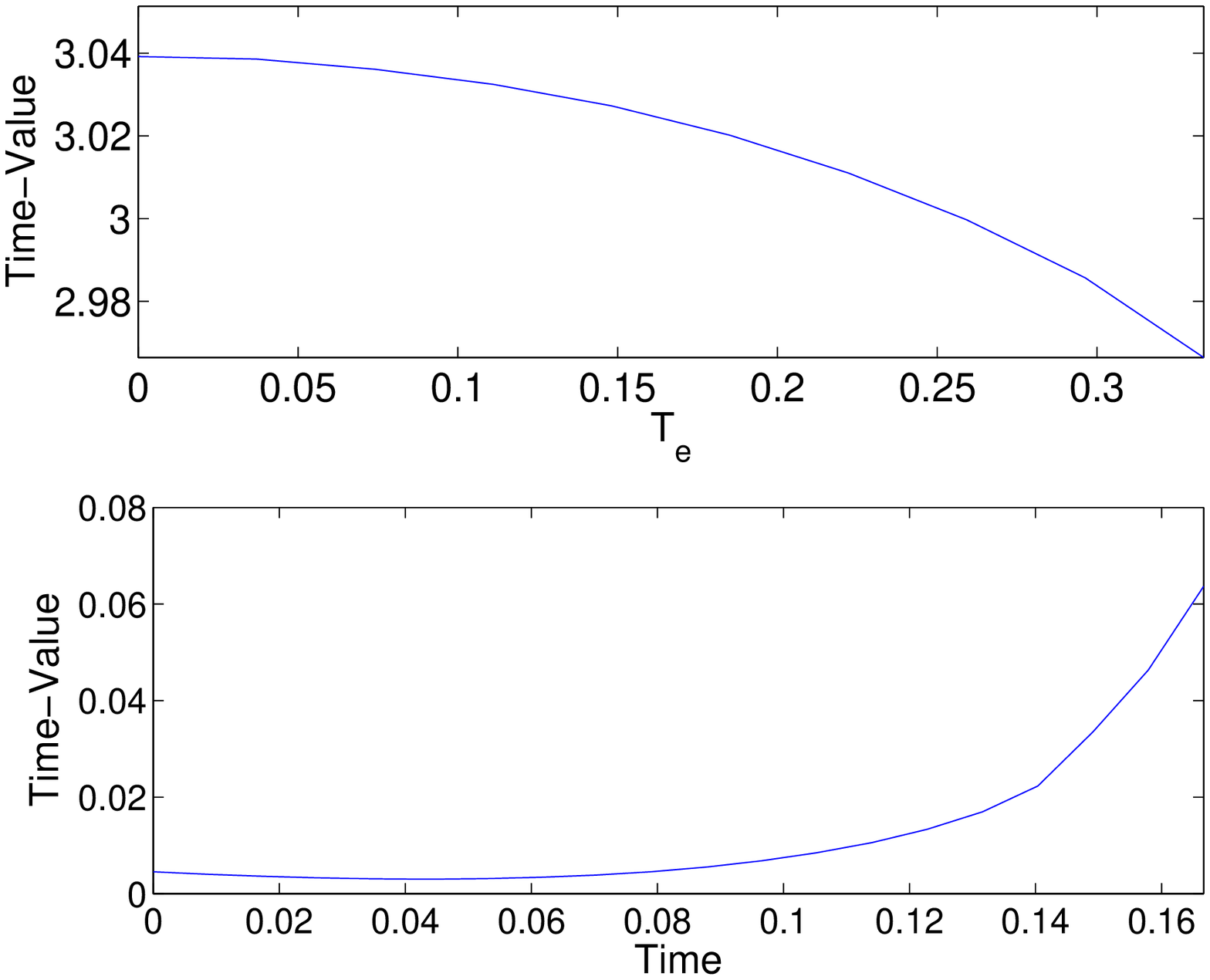}\caption{\small Left: The American put exercising boundary when $T_{e}$ assumes different
values, $T_{e}=1,2,3$ months. Right:
The time-value of a put (top) with spot and strike set at \$99.79 when
$T_{e}$ changes and the time-value of a put (bottom) with strike
\$100 and spot \$87.33 when $T_{e}=2$ months and time $t$ changes.
Parameters: $r=0.02,T=4\mbox{ months}, \sigma=0.1,\kappa=252,p=0.5,\lambda_{1}=300,\lambda_{2}=300,u=0.5,\eta_{1}=30,\eta_{2}=30$\label{fig:american} }
\end{figure}

\subsection{Analytical Approximations\label{sub:am_approx}}
One major method  proposed by  \citet{Barone-Adesi1987}  to  analytically approximate  the price of an American option is to  express  the American option price in terms of an 
 European option price plus a correction term.  The correction
term is determined as the solution of an approximation of the Black-Scholes
equation, with the addition of elementary boundary condition. \citet{Kou2004} present an analytical approximation to the   American option price
under the  Kou model. Here, we adapt 
the Barone-Adesi approximation and apply it to our extended Kou model. 

When $T_{e}=T$ the approximation is virtually identical to the original
one. Let $P_{E}\left(t,S\right)$ denote the European put price (as
in Proposition \ref{prop:KOU_extension}). The approximate price for the American put is similar to that of \cite{Kou2004} and is given by
\begin{equation}
\tilde{A}\left(t,S\right)=\begin{cases}
P_{E}\left(t,S\right)+\gamma_{1}S^{-\beta_{1}}+\gamma_{2}S^{-\beta_{2}} & \quad,\ if\ S>\alpha(t),\\
\left(K-S\right)^{+} & \quad,\ if\ S\leq\alpha(t),
\end{cases}\label{eq:AM_DE_apx-1}
\end{equation}
with the positive constants  
\begin{align}
\gamma_{1} & \equiv\frac{\alpha^{\beta_{1}}}{\beta_{2}-\beta_{1}}\left(\beta_{2}K-\left(1+\beta_{2}\right)\left(\alpha+P_{E}\left(t,\alpha\right)\right)+Ke^{-r\left(T-t\right)}q(\alpha(t)) \right),\label{eq:gamma1} \\
\gamma_{2} & \equiv\frac{\alpha^{\beta_{2}}}{\beta_{1}-\beta_{2}}\left(\beta_{1}K-\left(1+\beta_{1}\right)\left(\alpha+P_{E}\left(t,\alpha\right)\right)+Ke^{-r\left(T-t\right)}q(\alpha(t)) \right),\label{eq:gamma2}
\end{align}
where $q(s)\equiv \mathbb{Q}\left\{ S_{T}\leq K\vert S_{t}=s\right\}$, and  $\beta_{1,2}$, $0<\beta_{1}<\lambda_{2}<\beta_{2}<\infty$, are the two positive
solutions to the equation 
\begin{equation}
\frac{r}{1-e^{-r\left(T-t\right)}}=\beta\left(m\kappa-\frac{\sigma^{2}}{2}-r\right)+\frac{\sigma^{2}\beta^{2}}{2}+\kappa\left(p\frac{\lambda_{1}}{\lambda_{1}+\beta}+\left(1-p\right)\frac{\lambda_{2}}{\lambda_{2}-\beta}-1\right).\label{eq:betas}
\end{equation}
Also,  $\alpha(t)\in\left[0,K\right]$ is the solution to the equation 
\begin{equation}
c_{1}K-c_{2}\left(\alpha(t)+P_{E}\left(t,\alpha(t)\right)\right)=\left(c_{1}-c_{2}\right)Ke^{-r\left(T-t\right)}q(\alpha(t)) ,\label{eq:alphat}
\end{equation}
with $c_{1}=\beta_{1}\beta_{2}\left(1+\lambda_{2}\right)$ and $c_{2}=\lambda_{2}\left(1+\beta_{1}\right)\left(1+\beta_{2}\right)$.
The analytical expression for $P_{E}\left(t,S\right)$ and $q(s)$
can be directly obtained from Proposition \ref{prop:KOU_extension}. In contrast to the formula given in \citet{Kou2004} (see equation (7) there), the calculation of  $P_{E}\left(s,t\right)$,  $q(\alpha(t))$, $\gamma_{1,2}$ and $\alpha$  accounts for the EA jump r.v. $Z_{e}$ (see Appendix A.6).  

When $t\leq T_{e}<T$, one could apply the same method used to derive
the approximation above, writing the American option as $\tilde{A}\left(t,S\right)=\mathbb{E}\left\{ e^{-r\left(T_{e}-t\right)}\tilde{A}\left(T_{e},S_{T_{e}}\right)\vert S_{t}=S\right\} +\epsilon\left(t,S\right)$.
The functional form of $\epsilon$ would be identical, $\epsilon\left(t,S\right)=\gamma_{1}S^{-\beta_{1}}+\gamma_{2}S^{-\beta_{2}}$,
because it is derived from the same PIDE (see Appendix A.6).
The  constants $\beta_1$ and $\beta_2$ are indeed solutions to   equation (\ref{eq:betas}),
after adjusting the time parameters accordingly.  In the case  when the EA jump is imminent ($T_{e}=t^{+}$), we need to  evaluate the
expectation $\mathbb{E}\left\{ \tilde{A}\left(t_{+},Se^{Z_{e}}\right)\right\} $,
where $S$ is the stock price at time $t$. From (\ref{eq:AM_DE_apx-1}), 
this amounts to computing 
\begin{equation}
\int_{-\infty}^{+\infty} \left(\left(P_{E}\left(t_{+},Se^{z}\right)+\gamma_{1}e^{-\beta_{1}z}S^{-\beta_{1}}+\gamma_{2}e^{-\beta_{2}z}S^{-\beta_{2}}\right)1_{\left\{Se^{z}>\alpha\right\}}+\left(K-Se^{z}\right)1_{\left\{Se^{z}\leq\alpha\right\}}\right)f_{Z_e}\left(z\right)dz \label{eq:AM_DE_apx-1-1}
\end{equation}
where $\alpha$, $\gamma_{1,2}$, and $\beta_{1,2}$ are determined by
\eqref{eq:gamma1}-\eqref{eq:alphat}. Notice that at time $t$ there is no exercising boundary since  $\tilde{A}\left(t,S\right)\geq\left(K-S\right)^{+}$
due to Jensen's inequality.   In the extended Kou model 
with double-exponential EA jump, semi-closed formulas can be obtained 
in a similar fashion as in Proposition \ref{prop:KOU_extension}.  

In Table \ref{tab:DE_AM_prices},  we present numerical results for
the prices of     American put options with different strikes and maturities under the extended Kou model.
We compare the price computed  from  the Fourier transform method to that  analytical approximations (\ref{eq:AM_DE_apx-1}) and (\ref{eq:AM_DE_apx-1-1}).
For  different choices of model parameters, strikes and maturities, the approximations  perform
very  well when $T_{e}\sim t$ or $T_{e}\sim T$. For the second
set of   parameters (last 6 columns), the EA jump
has a greater impact on the options prices as the tails of the jump
are fatter, and the volatility of the other part of the dynamics is
lower. In this case, the difference between the ``true'' prices
for different values of $T_{e}$ increases, and so does the difference
between the two approximations which might then not be suitable to
approximate the options values for in-the-money options when $T_{e}$
is not close to $t$ or $T$.

\begin{table}[t!]
  \centering
  \footnotesize
    \begin{tabular}{r|rrrrrrrrrrrrr}
    
    \hline
    K\textbackslash $T_e$ & \multicolumn{1}{c}{$T-2$D} & \multicolumn{1}{c}{$1.5$M} & \multicolumn{1}{c}{$3$D} & $T$ & $0$      & --- & &  \multicolumn{1}{c}{$T-2$D} & \multicolumn{1}{c}{$1.5$M} & \multicolumn{1}{c}{$3$D} & $T$ & $0$      & --- \\
    \hline
     	     & FST   & FST   & FST   & $BA_L$ & $BA_H$ & EU & &FST   & FST   & FST   & $BA_L$ & $BA_H$ & EU \\
    \hline
    80    & 0.10  & 0.10  & 0.10  & 0.10  & 0.11  & 0.10  &       & 0.01  & 0.01  & 0.01  & 0.01  & 0.01  & 0.01 \\
    85    & 0.37  & 0.37  & 0.37  & 0.37  & 0.37  & 0.37  &       & 0.05  & 0.05  & 0.05  & 0.05  & 0.05  & 0.05 \\
    90    & 1.02  & 1.02  & 1.02  & 1.02  & 1.02  & 1.02  &       & 0.22  & 0.22  & 0.23  & 0.22  & 0.23  & 0.22 \\
    95    & 2.30  & 2.31  & 2.31  & 2.30  & 2.31  & 2.29  &       & 0.84  & 0.86  & 0.87  & 0.84  & 0.87  & 0.84 \\
    100   & 4.40  & 4.42  & 4.43  & 4.40  & 4.42  & 4.38  &       & 2.54  & 2.60  & 2.63  & 2.54  & 2.62  & 2.53 \\
    105   & 7.35  & 7.39  & 7.40  & 7.36  & 7.38  & 7.32  &       & 5.68  & 5.81  & 5.91  & 5.70  & 5.91  & 5.66 \\
    110   & 11.07 & 11.11 & 11.14 & 11.06 & 11.11 & 10.98 &       & 10.01 & 10.10 & 10.28 & 10.03 & 10.30 & 9.87 \\
    115   & 15.36 & 15.40 & 15.45 & 15.35 & 15.42 & 15.20 &       & 15.00 & 15.00 & 15.08 & 15.00 & 15.10 & 14.57 \\
    120   & 20.06 & 20.07 & 20.14 & 20.04 & 20.13 & 19.77 &       & 20.00 & 20.00 & 20.01 & 20.00 & 20.04 & 19.45 \\
    \hline
    \end{tabular}
 \caption{\small American put prices under the extended Kou model  with a DE EA jump.   The column ``FST" shows the prices calculated by  a Fourier transform method for three different values of $T_e$ with  expiration date  $T=3$ months  fixed. The extended Barone-Adesi approximations (\ref{eq:AM_DE_apx-1}) and (\ref{eq:AM_DE_apx-1-1})  are  given under columns  ``$BA_L$" and ``$BA_U$". Column ``EU" shows the corresponding  European put price (see \eqref{eq:KOU_extension_formulae}). The first 6 columns are calculated with model parameters: $S=100, r=0.02, \sigma=0.2, \kappa=252, p=0.5, \lambda_1=300, \lambda_2=300, u=0.5, \eta_1=30, \eta_2=30$. The last 6 columns with $S=100, r=0.02, \sigma=0.07, \kappa=200, p=0.5, \lambda_1=350, \lambda_2=350, u=0.5, \eta_1=25, \eta_2=25$.}
  \label{tab:DE_AM_prices}
\end{table}

\appendix
\addcontentsline{toc}{section}{Appendix} \addtocontents{toc}{\protect\setcounter{tocdepth}{-1}} 
\section{Appendix}
In this Appendix, we provide a number of detailed proofs and formulas. 
\subsection{Details for Formula \eqref{eq:KOU_extension_formulae}}\label{sect-app1}
In this section we write the expression of the function $\Upsilon$ in (\ref{eq:KOU_extension_formulae}) explicitly. For ease of notation, we refer to the parameters given as input to $\Upsilon$ as a vector  $\Theta\equiv\left(\theta_1,...,\theta_{13}\right)$. In addition, let $\tilde{\Theta}$ be a permutation of the vector $\Theta$ where only the 8th and 9th components (the parameters for  the randomly timed jumps) are switched. The function $\Upsilon$ is given by 
\begin{equation}
\Upsilon(\Theta) = \sum_{n=0}^{\infty}\frac{\left(
\theta_6\theta_4\right)^{n}e^{-
\theta_6\theta_4}}{n!}Z_{n}\left(\Theta\right), \label{eq:upsilon}
\end{equation}
where 
\begin{eqnarray*}
Z_{n}\left(\Theta\right) & = & \theta_{10}\sum_{l=0}^{n}\left(P_{n,l}T_{1,n}\left(k,\Theta\right)+Q_{n,l}T_{2,n}\left(k,\Theta\right)\right)+\theta_{11}\sum_{l=0}^{n}\left(P_{n,l}T_{3,n}\left(k,\Theta\right)+Q_{n,l}T_{4,n}\left(k,\Theta\right)\right),
\end{eqnarray*}
 \[ k={\log}\left(\frac{\theta_2}{\theta_1}\right)-\left(\theta_3-\frac{\theta_5^{2}}{2}-m\kappa\right)\theta_4-\alpha,\quad m=p\frac{\lambda_{1}}{\lambda_{1}-1}+q\frac{\lambda_{2}}{\lambda_{2}+1}-1,\]
\begin{align*}
T_{1,n+1}\left(s,\Theta\right) & =\frac{\theta_{8}}{\theta_{8}-\theta_{12}}\left(T_{1,n}\left(s,\Theta\right)-\theta_{12}\frac{e^{\frac{\left(\theta_5\theta_{8}\right)^{2}\theta_4}{2}}}{\sqrt{2\pi}}\left(
\theta_5\sqrt{\theta_4}\theta_{8}\right)^{n}I_{n}\left(s;-\theta_{8},-\frac{1}{\theta_5\sqrt{\theta_4}},-\theta_{8}\theta_5\sqrt{\theta_4}\right)\right),
\end{align*}
 
\begin{align*}
T_{2,n+1}\left(s,\Theta\right) & =\frac{\theta_{9}}{\theta_{9}+\theta_{12}}\left(T_{2,n}\left(s,\Theta\right)+\theta_{12}\frac{e^{\frac{\left(\theta_5\theta_{8}\right)^{2}\theta_4}{2}}}{\sqrt{2\pi}}\left(
\theta_5\sqrt{\theta_4}\theta_{9}\right)^{n}I_{n}\left(s;\theta_{9},\frac{1}{\theta_5\sqrt{\theta_4}},-\theta_{9}\theta_5\sqrt{\theta_4}\right)\right),
\end{align*}
 
\begin{eqnarray*}
T_{3,n+1}\left(s,\Theta\right) = & 1-T_{2,n+1}\left(-s,\tilde{\Theta}\right),\quad 
T_{4,n+1}\left(s,\Theta\right) = & 1-T_{1,n+1}\left(-s,\tilde{\Theta}\right),
\end{eqnarray*}
 
\[
T_{1,0}\left(s,\Theta\right)= T_{2,0}\left(s,\Theta\right)=\theta_{12}\frac{e^{\left(\theta_5\theta_{12}\right)^{2}\theta_4/2}}{\sqrt{2\pi}}I_{0}\left(s;-\theta_{12},-\frac{1}{\theta_5\sqrt{\theta_4}},-\theta_{12}\theta_5\sqrt{\theta_4}\right),
\]
 
\begin{eqnarray*}
T_{1,1}\left(s,\Theta\right) & = & \frac{1}{\frac{1}{\theta_{8}}-\frac{1}{\theta_{12}}}\left(\frac{e^{\frac{\left(\theta_5\theta_{8}\right)^{2}\theta_4}{2}}}{\sqrt{2\pi}}I_{0}\left(s;-\theta_{8},\frac{-1}{\theta_5\sqrt{\theta_4}},-\theta_{8}\theta_5\sqrt{\theta_4}\right)-\frac{e^{\frac{\left(\theta_5\theta_{12}\right)^{2}\theta_4}{2}}}{\sqrt{2\pi}}I_{0}\left(s;-\theta_{12},\frac{-1}{\theta_5\sqrt{\theta_4}},-\theta_{12}\theta_5\sqrt{\theta_4}\right)\right),\\
T_{2,1}\left(s,\Theta\right) & = & \frac{1}{\frac{1}{\theta_{9}}-\frac{1}{\theta_{12}}}\left(\frac{e^{\frac{\left(\theta_5\theta_{9}\right)^{2}\theta_4}{2}}}{\sqrt{2\pi}}I_{0}\left(s;-\theta_{9},\frac{-1}{\theta_5\sqrt{\theta_4}},-\theta_{9}\theta_5\sqrt{\theta_4}\right)-\frac{e^{\frac{\left(\theta_5\theta_{12}\right)^{2}\theta_4}{2}}}{\sqrt{2\pi}}I_{0}\left(s;-\theta_{12},\frac{-1}{\theta_5\sqrt{\theta_4}},-\theta_{12}\theta_5\sqrt{\theta_4}\right)\right),
\end{eqnarray*}
 
\begin{align*}
P_{n,m}&= \sum_{i=m}^{n-1}\binom{n-m-1}{i-m}\binom{n}{i}\left(\frac{\theta_{8}}{\theta_{8}+\theta_{9}}\right)^{i-m}\left(\frac{\theta_{9}}{\theta_{8}+\theta_{9}}\right)^{n-i}\theta_7^{i}(1-\theta_7)^{n-i},\\
Q_{n,m} & =\sum_{i=m}^{n-1}\binom{n-m-1}{i-m}\binom{n}{i}\left(\frac{\theta_{8}}{\theta_{8}+\theta_{9}}\right)^{n-i}\left(\frac{\theta_{9}}{\theta_{8}+\theta_{9}}\right)^{i-m}\theta_7^{n-i}(1-\theta_7)^{i},\\
P_{n,n}&=\theta_7^{n}, Q_{n,n}=(1-\theta_7)^{n}, P_{0,0}=1, Q_{0,0}=0,
\end{align*}
\[
Hh_n(x)=\frac{1}{n}\left(Hh_{n-2}(x)-xHh_{n-1}(x)\right),  Hh_0(x) = \int_{-\infty}^{-x}e^{-t^2/2}dt, 
I_{n}\left(k;\alpha,\beta,\delta\right)=-\frac{e^{\alpha k}}{\alpha}Hh_{n}\left(\beta k-\delta\right)+\frac{\beta}{\alpha}I_{n-1},
\]
\[
I_{-1}\left(k;\alpha,\beta,\delta\right)=\frac{\sqrt{2\pi}}{\beta}e^{\frac{\alpha\delta}{\beta}+\frac{\alpha^{2}}{2\beta^{2}}}\begin{cases}
\Phi\left(-\beta k+\delta+\frac{\alpha}{\beta}\right) & if\:\beta>0\ \alpha\neq0,\\
-\Phi\left(\beta k-\delta-\frac{\alpha}{\beta}\right) & if\:\beta<0\ \alpha<0.
\end{cases}
\]
As suggested by these definitions, the implementation of the formula involves successive computation of functions, followed by a summation in  (\ref{eq:upsilon}) that  must be truncated. In order to control the error, we notice   that\footnote{This can be verified directly from the proof of Proposition \ref{prop:KOU_extension} presented in \ref{apx:kou}} $Z_n\leq 2$, and the error bound 
\[ 
\Upsilon\left(\Theta\right) - \sum_{n=0}^{M}\frac{\left(
\theta_6\theta_4\right)^{n}e^{-\theta_6\theta_4}}{n!}Z_{n}\left(\Theta\right) \leq 2\sum_{n=M+1}^{\infty}\frac{\left(
\theta_6\theta_4\right)^{n}e^{-\theta_6\theta_4}}{n!} = 2 - 2\sum_{n=0}^{M}\frac{\left(
\theta_6\theta_4\right)^{n}e^{-\theta_6\theta_4}}{n!}\equiv \epsilon(\theta_6\theta_4,M).
\]
Therefore, if we truncate the summations in  (\ref{eq:KOU_extension_formulae}) at the $M$th term,   the  upper bound for the truncation error is given by
\[
e^{-\alpha}S\epsilon\left((m+1)\kappa T,M\right)+e^{-rT}K\epsilon\left(\kappa T,M\right).
\]
This can be computed quickly and  used to limit  the error to the a priori desired decimal place. For example, take  $S=K=100$, $T=1$ and an  error tolerance of  $0.01$. Then, with  $\kappa\approx 100$, and   $m, \alpha< 0.001$, we obtain $M= 143$.

%Several of the functions introduced above are intentionally defined in a recursive way. This is convenient and computationally efficient because these functions are needed in consecutive order and because the non-recursive expressions involve summations that would be redundant and would take long to compute. Finally, notice that in practice the summation series in (\ref{eq:upsilon}) must be truncated, introducing a computation error. 

\subsection{Proof of Propositions \ref{prop:KOU_extension}\label{apx:kou}}

In order to price the European call option, we need to evaluate the
terms $\mathbb{Q}\left\{ S_{T}>K\right\} $ and $\mathbb{E}\left\{ e^{S_{T}}1_{\left\{ S_{T}>K\right\} }\right\} $. We first state some useful facts (see   \citet{Kou2002} for proofs). 
\begin{lemma}
Define two i.i.d. exponential r.v.'s,  namely,  $J_{i}^{+}\sim Exp(\lambda_{1})$,
$J_{i}^{-}\sim Exp(\lambda_{2})$. For every $n\geq1$,  we have 
\begin{eqnarray*}
\sum_{i=1}^{n}J_{i} & \overset{d}{=} & \begin{cases}
\sum_{i=1}^{m} & J_{i}^{+}\,,\, w.p.\ P_{n,m},\\
\sum_{i=1}^{m} & J_{i}^{-}\,,\, w.p.\ Q_{n,m},
\end{cases}
\end{eqnarray*}
 where 
\begin{align*}
P_{n,m}&=  \sum_{i=m}^{n-1}\binom{n-m-1}{i-m}\binom{n}{i}\left(\frac{\lambda_{1}}{\lambda_{1}+\lambda_{2}}\right)^{i-m}\left(\frac{\lambda_{2}}{\lambda_{1}+\lambda_{2}}\right)^{n-i}p^{i}q^{n-i},\\
Q_{n,m} & =\sum_{i=m}^{n-1}\binom{n-m-1}{i-m}\binom{n}{i}\left(\frac{\lambda_{1}}{\lambda_{1}+\lambda_{2}}\right)^{n-i}\left(\frac{\lambda_{2}}{\lambda_{1}+\lambda_{2}}\right)^{i-m}p^{n-i}q^{i},\\
P_{n,n}&=p^{n} ,\ Q_{n,n}=q^{n},\ P_{0,0}=1,\ Q_{0,0}=0,\ q=1-p.
\end{align*}
\end{lemma}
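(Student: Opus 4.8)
The plan is to prove the lemma by conditioning on the signs of the $n$ jumps, thereby reducing the law of $\sum_{i=1}^{n}J_{i}$ to a binomial mixture of differences of two independent Erlang (integer-shape Gamma) variables, and then to resolve each such difference into a finite mixture of positive and negative Erlang laws by a partial-fraction inversion of its moment generating function. The combinatorial weights $P_{n,m},Q_{n,m}$ will emerge as the residues of that partial-fraction decomposition, after aggregation over the mixture index.

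First I would write each double-exponential jump as a signed exponential. Let $B_{i}\in\{+,-\}$ be i.i.d.\ signs with $\mathbb{Q}(B_{i}=+)=p$, $\mathbb{Q}(B_{i}=-)=q=1-p$, and set $J_{i}=J_{i}^{+}$ on $\{B_{i}=+\}$ and $J_{i}=-J_{i}^{-}$ on $\{B_{i}=-\}$, with $J_{i}^{+}\sim Exp(\lambda_{1})$, $J_{i}^{-}\sim Exp(\lambda_{2})$ independent of the signs; this reproduces the double-exponential law with the left tail carrying the minus sign. Conditioning on the event that exactly $i$ of the $n$ signs are $+$, an event of probability $\binom{n}{i}p^{i}q^{n-i}$, the sum becomes $G_{1}-G_{2}$ with $G_{1}\sim\mathrm{Gamma}(i,\lambda_{1})$ and $G_{2}\sim\mathrm{Gamma}(n-i,\lambda_{2})$ independent. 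Thus $\sum_{i=1}^{n}J_{i}$ is exactly a binomial mixture over $i$ of such differences, and it remains only to identify the law of each difference as a mixture of Erlangs.

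The technical heart is the second step. I would compute the bilateral mgf of $G_{1}-G_{2}$, namely $\lambda_{1}^{i}\lambda_{2}^{n-i}\,(\lambda_{1}-s)^{-i}(\lambda_{2}+s)^{-(n-i)}$ on the strip $-\lambda_{2}<s<\lambda_{1}$, and expand it in partial fractions in $s$. Substituting $u=\lambda_{1}-s$ and expanding $(\lambda_{1}+\lambda_{2}-u)^{-(n-i)}$ as a binomial series in $u$, the coefficient of the principal part $(\lambda_{1}-s)^{-m}$ in $(\lambda_{1}-s)^{-i}(\lambda_{2}+s)^{-(n-i)}$ collapses to $\binom{n-m-1}{i-m}(\lambda_{1}+\lambda_{2})^{-(n-m)}$. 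Restoring the prefactor $\lambda_{1}^{i}\lambda_{2}^{n-i}$ and writing $(\lambda_{1}-s)^{-m}=\lambda_{1}^{-m}(\lambda_{1}/(\lambda_{1}-s))^{m}$ so that $(\lambda_{1}/(\lambda_{1}-s))^{m}$ is the mgf of an order-$m$ positive Erlang, the mixture weight attached to $\sum_{j=1}^{m}J_{j}^{+}$ within the fixed-$i$ term simplifies to $\binom{n-m-1}{i-m}\big(\tfrac{\lambda_{1}}{\lambda_{1}+\lambda_{2}}\big)^{i-m}\big(\tfrac{\lambda_{2}}{\lambda_{1}+\lambda_{2}}\big)^{n-i}$, which is precisely the $i$-th summand of $P_{n,m}$. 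A symmetric expansion about the pole $s=-\lambda_{2}$ (swapping the roles of $\lambda_{1},\lambda_{2}$ and of $i,n-i$) yields the negative-Erlang weights and produces $Q_{n,m}$.

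Finally I would aggregate: multiplying each fixed-$i$ weight by the mixing probability $\binom{n}{i}p^{i}q^{n-i}$ and summing over $i$ recovers the stated double sums. The range $m\le i\le n-1$ is forced because an order-$m$ positive Erlang requires at least $m$ positive jumps ($i\ge m$) and at least one negative jump for the difference to be nondegenerate ($i\le n-1$); the boundary case $i=n$ (all jumps positive) contributes only to order $n$ and accounts for the separate value $P_{n,n}=p^{n}$, with $Q_{n,n}=q^{n}$ symmetric and $P_{0,0}=1,\,Q_{0,0}=0$ serving merely as conventions for the recursions. The main obstacle I anticipate is not the sign-conditioning reduction, which is routine, but the partial-fraction bookkeeping: one must track the binomial-series coefficient, the $\lambda_{1}^{i}\lambda_{2}^{n-i}$ normalizers, and the pole orders carefully enough that the residue at $s=\lambda_{1}$ collapses to exactly $\binom{n-m-1}{i-m}$, and one must correctly excise the boundary terms $i=n$ and $i=0$ from the main sums and match them to $P_{n,n}$ and $Q_{n,n}$.
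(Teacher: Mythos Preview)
Your proposal is correct. Note, however, that the paper does not supply its own proof of this lemma: it is stated as a ``useful fact'' and the reader is referred to \citet{Kou2002} for the proof. Your argument---condition on the number of positive signs to reduce to a difference of two Erlangs, then perform a partial-fraction decomposition of the bilateral mgf about the poles $s=\lambda_1$ and $s=-\lambda_2$---is essentially the standard derivation found there, and the residue computation you outline (substituting $u=\lambda_1-s$ and reading off the coefficient $\binom{n-m-1}{i-m}(\lambda_1+\lambda_2)^{-(n-m)}$ via the negative-binomial expansion of $(\lambda_1+\lambda_2-u)^{-(n-i)}$) is exactly right. The bookkeeping you flag as the main obstacle is accurate but does collapse as you describe; your treatment of the boundary cases $i=n$ and $i=0$ and of the conventions $P_{0,0}=1,\,Q_{0,0}=0$ is also correct.
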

 
Next, for every $n\geq0$, we define the functions\footnote{The $Hh$ functions can be calculated using either of the following facts: 
\begin{align}
Hh_{n}&=2^{-\frac{n}{2}}\sqrt{\pi}e^{-\frac{x^{2}}{2}}\left(\frac{_{1}F_{1}\left(\frac{n+1}{2},\frac{1}{2},\frac{x^{2}}{2}\right)}{\sqrt{2}\Gamma\left(1+\frac{n}{2}\right)}-x\frac{_{1}F_{1}\left(\frac{n}{2}+1,\frac{3}{2},\frac{x^{2}}{2}\right)}{\Gamma\left(\frac{n+1}{2}\right)}\right),\notag\\
nHh_{n}\left(x\right)&=Hh_{n-2}\left(x\right)-xHh_{n-1}\left(x\right)\ ,\ n\geq1,\notag
\end{align}
 where $_{1}F_{1}$ denotes the confluent hypergeometric function
and $\Gamma$ is the gamma function.} 
\[
Hh_{n}\left(x\right)=\int_{x}^{\infty}Hh_{n-1}\left(y\right)dy=\frac{1}{n!}\int_{x}^{\infty}\left(t-x\right)^{n}e^{-\frac{t^{2}}{2}}dt,
\]
 
\[
Hh_{-1}\left(x\right)=  e^{-x^{2}/2}\,,\,\ Hh_{0}\left(x\right)=\int_{x}^{\infty}e^{-\frac{t^{2}}{2}}dt=\sqrt{2\pi}\Phi\left(-x\right),
\]
 where $\Phi\left(x\right)$ denotes the standard normal c.d.f. 
  In addition,      for $n\geq 0$, define  the integral
\[
I_{n}\left(k,\alpha,\beta,\delta\right)=\int_{k}^{\infty}e^{\alpha x}Hh_{n}\left(\beta x-\delta\right)dx.
\]

\begin{lemma}
$\left(i\right)$ If $\beta>0$ and $\alpha\neq0$, then for all $n\geq-1$, we have
\[
I_{n}\left(k,\alpha,\beta,\delta\right)=\frac{-e^{\alpha k}}{\alpha}\sum_{i=0}^{n}\left(\frac{\beta}{\alpha}\right)^{n-i}Hh_{i}\left(\beta k-\delta\right)+\left(\frac{\beta}{\alpha}\right)^{n+1}\frac{\sqrt{2\pi}}{\beta}e^{\frac{\alpha\delta}{\beta}+\frac{\alpha^{2}}{2\beta^{2}}}\Phi\left(-\beta k+\delta+\frac{\alpha}{\beta}\right).
\]
$\left(ii\right)$ If $\beta<0$ and $\alpha<0$, then for all $n\geq-1$, we have
\[
I_{n}\left(k,\alpha,\beta,\delta\right)=\frac{-e^{\alpha k}}{\alpha}\sum_{i=0}^{n}\left(\frac{\beta}{\alpha}\right)^{n-i}Hh_{i}\left(\beta k-\delta\right)-\left(\frac{\beta}{\alpha}\right)^{n+1}\frac{\sqrt{2\pi}}{\beta}e^{\frac{\alpha\delta}{\beta}+\frac{\alpha^{2}}{2\beta^{2}}}\Phi\left(\beta k-\delta-\frac{\alpha}{\beta}\right).
\]

In particular, 
\[
I_{-1}\left(k;\alpha,\beta,\delta\right)=\frac{\sqrt{2\pi}}{\beta}e^{\frac{\alpha\delta}{\beta}+\frac{\alpha^{2}}{2\beta^{2}}}\begin{cases}
\Phi\left(-\beta k+\delta+\frac{\alpha}{\beta}\right) & if\:\beta>0, \, \alpha\neq0,\\
-\Phi\left(\beta k-\delta-\frac{\alpha}{\beta}\right) & if\:\beta<0\,, \alpha<0.
\end{cases}
\]
 In addition, under either assumption (i) or (ii) over the parameters
$\alpha$ and $\beta$ given above,   the  functions ($I_{n}$) satisfy
the   recursive relation
\[
I_{n}\left(k;\alpha,\beta,\delta\right)=-\frac{e^{\alpha k}}{\alpha}Hh_{n}\left(\beta k-\delta\right)+\frac{\beta}{\alpha}I_{n-1}\left(k;\alpha,\beta,\delta\right).
\]
\end{lemma}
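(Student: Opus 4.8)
The plan is to establish the one-step recursion for $I_n$ first, then the base case $n=-1$ by a single Gaussian integral, and then deduce the two closed forms (i) and (ii) by induction on $n$.

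For the recursion I would integrate by parts in $I_n\left(k,\alpha,\beta,\delta\right)=\int_k^\infty e^{\alpha x}Hh_n\left(\beta x-\delta\right)dx$, taking the antiderivative $v=e^{\alpha x}/\alpha$ (this is where $\alpha\neq0$ enters) and using $\frac{d}{dx}Hh_n\left(\beta x-\delta\right)=-\beta\,Hh_{n-1}\left(\beta x-\delta\right)$, which follows from $Hh_n\left(x\right)=\int_x^\infty Hh_{n-1}\left(y\right)dy$. This yields $I_n=\left[\frac{e^{\alpha x}}{\alpha}Hh_n\left(\beta x-\delta\right)\right]_k^\infty+\frac{\beta}{\alpha}I_{n-1}$. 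The boundary term at $+\infty$ vanishes in both regimes: when $\beta>0$ one has $\beta x-\delta\to+\infty$, and from $Hh_n\left(z\right)=\frac{1}{n!}\int_z^\infty\left(t-z\right)^n e^{-t^2/2}dt$ one sees $Hh_n$ decays faster than any exponential, so $e^{\alpha x}Hh_n\left(\beta x-\delta\right)\to0$ for every $\alpha$; when $\beta<0$ one has $\beta x-\delta\to-\infty$, and the same representation gives $Hh_n\left(z\right)\sim\frac{\sqrt{2\pi}}{n!}\left|z\right|^n$, which grows only polynomially, so the hypothesis $\alpha<0$ is precisely what forces $e^{\alpha x}Hh_n\left(\beta x-\delta\right)\to0$. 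Hence $I_n=-\frac{e^{\alpha k}}{\alpha}Hh_n\left(\beta k-\delta\right)+\frac{\beta}{\alpha}I_{n-1}$, the stated recursive relation, valid for $n\geq0$ since $Hh_0'=-Hh_{-1}$.

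For the base case $n=-1$, since $Hh_{-1}\left(x\right)=e^{-x^2/2}$, I would complete the square, $\alpha x-\frac{1}{2}\left(\beta x-\delta\right)^2=-\frac{\beta^2}{2}\left(x-\frac{\alpha+\beta\delta}{\beta^2}\right)^2+\frac{\alpha^2}{2\beta^2}+\frac{\alpha\delta}{\beta}$, and substitute $u=\left|\beta\right|\left(x-\frac{\alpha+\beta\delta}{\beta^2}\right)$ to reduce $\int_k^\infty$ to a standard normal tail, obtaining $I_{-1}=\frac{\sqrt{2\pi}}{\left|\beta\right|}e^{\frac{\alpha\delta}{\beta}+\frac{\alpha^2}{2\beta^2}}\Phi\left(\left|\beta\right|\left(\frac{\alpha+\beta\delta}{\beta^2}-k\right)\right)$. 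Replacing $\left|\beta\right|$ by $\beta$ or $-\beta$ and simplifying the $\Phi$-argument then gives $\frac{\sqrt{2\pi}}{\beta}e^{\cdots}\Phi\left(-\beta k+\delta+\frac{\alpha}{\beta}\right)$ when $\beta>0$, and $-\frac{\sqrt{2\pi}}{\beta}e^{\cdots}\Phi\left(\beta k-\delta-\frac{\alpha}{\beta}\right)$ when $\beta<0$ — exactly the two ``in particular'' formulas, and the $n=-1$ instances of (i) and (ii).

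Finally, I would induct on $n\geq0$. Granting the closed form for $I_{n-1}$, multiply it by $\beta/\alpha$ and add $-\frac{e^{\alpha k}}{\alpha}Hh_n\left(\beta k-\delta\right)=-\frac{e^{\alpha k}}{\alpha}\left(\beta/\alpha\right)^0 Hh_n\left(\beta k-\delta\right)$: the finite sum $\sum_{i=0}^{n-1}\left(\beta/\alpha\right)^{n-1-i}Hh_i$, once multiplied by $\beta/\alpha$ and augmented by the new $i=n$ term, becomes $\sum_{i=0}^{n}\left(\beta/\alpha\right)^{n-i}Hh_i$, while the $\Phi$-term's prefactor $\left(\beta/\alpha\right)^n$ turns into $\left(\beta/\alpha\right)^{n+1}$; this is verbatim the asserted expression in (i), and likewise in (ii) with the overall sign and the modified $\Phi$-argument carried along unchanged. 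I expect no genuine obstacle here: the only care needed is the sign bookkeeping in the case $\beta<0$ and the a priori finiteness of each $I_n$, and that finiteness check is exactly what dictates the hypotheses ($\beta>0$, or $\beta<0$ together with $\alpha<0$) under which the identities are stated.
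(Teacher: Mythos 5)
Your proof is correct. The paper itself does not prove this lemma --- it is stated as a known fact with the proof deferred to \citet{Kou2002} --- and your argument (integration by parts using $Hh_n' = -Hh_{n-1}$ to get the recursion, with the boundary term at $+\infty$ killed by the Gaussian decay of $Hh_n$ when $\beta>0$ and by $\alpha<0$ against the polynomial growth of $Hh_n$ when $\beta<0$; completing the square for $I_{-1}$; then induction) is exactly the standard derivation used in that reference, with the hypotheses on $\alpha,\beta$ entering precisely where you say they do.
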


In our extended Kou model, we need to understand the distribution of the
sum of double exponentially distributed randomly timed and EA jumps. Hence, we consider the associated p.d.f.'s.

\begin{lemma}
Let $Z_{e}^{+}$, $Z_{e}^{-}$ be i.i.d. exponential r.v.'s, $Z_{e}^{+},\sim Exp(\eta_{1})$,
$Z_{e}^{-},\sim Exp(\eta_{2})$. In addition, let $J_{i}^{+},\ J_{i}^{-},\ Z_{e}^{+},\ Z_{e}^{-}$
be independent. Then,  we have
\begin{equation}
f_{\sum_{1}^{n}J_{i}^{+}+Z_{e}^{+}}\left(t\right)=\left(\frac{1}{\frac{1}{\lambda_{1}}-\frac{1}{\eta_{1}}}\right)^{n}\left(A_{n}\left(t\right)e^{-\lambda_{1}t}+B_{n}\left(t\right)e^{-\eta_{1}t}\right)\ ,\quad t>0,  \lambda_{1}\neq\eta_{1},\label{eq:f_++}
\end{equation}
 
\begin{equation}
f_{-\sum_{1}^{n}J_{i}^{-}+Z_{e}^{+}}\left(t\right)=\left(\frac{1}{\frac{1}{\lambda_{2}}+\frac{1}{\eta_{1}}}\right)^{n}C_{n}\left(t\right)e^{\lambda_{2}\hbox{min}\left(0,t\right)-\eta_{1}\max\left(0,t\right)}\ ,\quad t\in \R,  \lambda_{2}\neq\eta_{1},\label{eq:f_-+}
\end{equation}
 where $f_{Y}$ denotes the p.d.f.  of $Y$, and 
\begin{align*}
B_{n}\left(t\right)= & -\eta_{1}^{-\left(n-1\right)},A_{n}\left(t\right)=1+\sum_{i}^{n-1}\eta_{1}^{-\left(i-1\right)}\left(\frac{1}{\lambda_{1}}-\frac{1}{\eta_{1}}\right)^{n-i}\lambda_{1}^{n-i}\frac{\left(t\right)^{n-i}}{\left(n-i\right)!}, \quad t >0,
\end{align*}
  
\[
C_{n}\left(t\right)=\frac{C_{n-1}\left(t\right)}{\lambda_{2}}+\lambda_{2}^{n-1}\frac{\left({\max}\left(0,t\right)-t\right)^{n-1}}{(n-1)!}, \quad t\in \R.
\]
 In addition, they  satisfy the following recursive relations: 
\begin{equation}
f_{\sum_{1}^{n+1}J_{i}^{+}+Z_{e}^{+}}\left(t\right)=\frac{\lambda_{1}}{\left(\lambda_{1}-\eta_{1}\right)}\left(f_{\sum_{1}^{n}J_{i}^{+}+Z_{e}^{+}}\left(t\right)-\lambda_{1}^{n}\frac{\left(t\right)^{n}}{n!}\eta_{1}e^{-\lambda_{1}t}\right) ,\label{eq:frec_++}
\end{equation}
 
\begin{equation}
f_{-\sum_{1}^{n+1}J_{i}^{-}+Z_{e}^{+}}\left(t\right)=\frac{\lambda_{2}}{\lambda_{2}+\eta_{1}}\left(f_{-\sum_{1}^{n}J_{i}^{-}+Z_{e}^{+}}\left(t\right)+\lambda_{2}^{n}\frac{\left(\max\left(0,t\right)-t\right)^{n}}{n!}\eta_{1}e^{\lambda_{2}t}e^{-\left(\lambda_{2}+\eta_{1}\right)\max\left(0,t\right)}\right).\label{eq:frec_-+}
\end{equation}
\end{lemma}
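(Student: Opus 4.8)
The plan is to recognize the relevant sums of i.i.d.\ exponentials as Gamma variables and then compute the needed convolutions, using Laplace transforms to keep the algebra short. Since $J_1^+,\dots,J_n^+$ are i.i.d.\ $\mathrm{Exp}(\lambda_1)$, the sum $\sum_{i=1}^n J_i^+$ is $\mathrm{Gamma}(n,\lambda_1)$ with Laplace transform $(\lambda_1/(\lambda_1+s))^n$, and independently $Z_e^+\sim\mathrm{Exp}(\eta_1)$ has transform $\eta_1/(\eta_1+s)$; likewise $-\sum_{i=1}^n J_i^-$ is minus a $\mathrm{Gamma}(n,\lambda_2)$ variable, supported on $(-\infty,0)$, with bilateral Laplace transform $(\lambda_2/(\lambda_2-s))^n$ on the strip $-\eta_1<\operatorname{Re}(s)<\lambda_2$ that it shares with $Z_e^+$. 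Thus the transforms of the two densities in \eqref{eq:f_++} and \eqref{eq:f_-+} are, respectively, $\lambda_1^n\eta_1\,[(\lambda_1+s)^n(\eta_1+s)]^{-1}$ and $\lambda_2^n\eta_1\,[(\lambda_2-s)^n(\eta_1+s)]^{-1}$.

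I would first dispose of the two recursions \eqref{eq:frec_++}--\eqref{eq:frec_-+}, which are the cheapest step: take the Laplace transform of each side. The right-hand side of \eqref{eq:frec_++} transforms to $\frac{\lambda_1}{\lambda_1-\eta_1}\bigl(\hat f_n(s)-\eta_1\lambda_1^n(\lambda_1+s)^{-(n+1)}\bigr)$, since $\tfrac{t^n}{n!}e^{-\lambda_1 t}$ has transform $(\lambda_1+s)^{-(n+1)}$; pulling $\hat f_n$ out and using $\tfrac1{\eta_1+s}-\tfrac1{\lambda_1+s}=\tfrac{\lambda_1-\eta_1}{(\eta_1+s)(\lambda_1+s)}$ collapses this to $\lambda_1^{n+1}\eta_1[(\lambda_1+s)^{n+1}(\eta_1+s)]^{-1}=\hat f_{n+1}(s)$. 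The identity \eqref{eq:frec_-+} is verified the same way: the extra term there is supported on $(-\infty,0)$ (on $(0,\infty)$ its polynomial factor vanishes) with bilateral transform $\eta_1\lambda_2^n(\lambda_2-s)^{-(n+1)}$, and $\tfrac1{\eta_1+s}+\tfrac1{\lambda_2-s}=\tfrac{\lambda_2+\eta_1}{(\eta_1+s)(\lambda_2-s)}$ gives the claim. A possible kink of $f_n$ at $t=0$ is irrelevant here, as we are only matching transforms. Equivalently, one may avoid transforms: $f_{n+1}=f_n\ast(\lambda_1 e^{-\lambda_1\cdot})$, so $f_{n+1}(t)=\lambda_1 e^{-\lambda_1 t}\int_0^t e^{\lambda_1 x}f_n(x)\,dx$, and one integration by parts of this against the closed form of $f_n$ produces \eqref{eq:frec_++} (similarly on $\R$ for the second density).

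It then remains to establish the closed forms \eqref{eq:f_++} and \eqref{eq:f_-+}, for which two equivalent routes work. One can invert the transforms directly: partial-fraction $[(\lambda_1+s)^n(\eta_1+s)]^{-1}$ as $\tfrac1{(\lambda_1-\eta_1)^n(\eta_1+s)}-\sum_{k=1}^n\tfrac1{(\lambda_1-\eta_1)^{n-k+1}(\lambda_1+s)^k}$ and invert termwise, which after regrouping yields the form $(\text{prefactor})^n\bigl(A_n(t)e^{-\lambda_1 t}+B_n(t)e^{-\eta_1 t}\bigr)$ and lets one read off $A_n$ and $B_n$; the second transform is handled the same way, the order-$n$ pole at $s=\lambda_2$ being a genuine right-half-plane pole, so that inversion separates the $t>0$ and $t<0$ pieces and produces the factor $e^{\lambda_2\min(0,t)-\eta_1\max(0,t)}$ together with the polynomial-in-$t$ coefficients $C_n$. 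Alternatively, one takes \eqref{eq:frec_++}--\eqref{eq:frec_-+} as the definition, checks the $n=1$ base cases by the elementary convolution of two exponentials (e.g.\ $f_{J_1^++Z_e^+}(t)=\tfrac{\lambda_1\eta_1}{\lambda_1-\eta_1}(e^{-\eta_1 t}-e^{-\lambda_1 t})$), and verifies by induction that the stated expressions for $A_n$, $B_n$, $C_n$ solve these one-step recursions. I expect the only real work to be bookkeeping: keeping the $t>0$ versus $t<0$ cases straight for the density of $-\sum J_i^-+Z_e^+$ (whose support is all of $\R$), and matching the polynomial coefficients coming out of the repeated integrations by parts to the particular index conventions for $A_n$, $B_n$, $C_n$; everything else is elementary.
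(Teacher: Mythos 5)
Your proposal is correct and, at its core, reduces to the same argument as the paper: identify $\sum_{i=1}^n J_i^{\pm}$ as $\mathrm{Gamma}(n,\lambda_{1,2})$ variables, establish the one-step recursions \eqref{eq:frec_++}--\eqref{eq:frec_-+}, and then obtain the closed forms \eqref{eq:f_++}--\eqref{eq:f_-+} by induction from the two-exponential base case $f_{J_1^++Z_e^+}(t)=\tfrac{\lambda_1\eta_1}{\lambda_1-\eta_1}\left(e^{-\eta_1 t}-e^{-\lambda_1 t}\right)$. The only genuine methodological difference is how the recursions are verified: the paper stays in the time domain, writing each density as the convolution of the relevant Gamma density with the $\mathrm{Exp}(\eta_1)$ density and integrating by parts once to peel off the correction term (with the integral taken over $(\max(0,t),\infty)$ in the two-sided case), whereas your primary route matches bilateral Laplace transforms on the common strip and turns the verification into the two partial-fraction identities $\tfrac{1}{\eta_1+s}-\tfrac{1}{\lambda_1+s}=\tfrac{\lambda_1-\eta_1}{(\eta_1+s)(\lambda_1+s)}$ and $\tfrac{1}{\eta_1+s}+\tfrac{1}{\lambda_2-s}=\tfrac{\lambda_2+\eta_1}{(\eta_1+s)(\lambda_2-s)}$. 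Your transform computations check out, including the observation that the correction term in \eqref{eq:frec_-+} is supported on $(-\infty,0]$ with transform $\eta_1\lambda_2^n(\lambda_2-s)^{-(n+1)}$; to conclude the pointwise identities as stated you should invoke uniqueness of the transform together with continuity of the densities, which you implicitly do. You also correctly offer the paper's time-domain derivation as the transform-free alternative, so the two proofs buy essentially the same thing, with yours trading one integration by parts for an appeal to transform uniqueness. One caution on the step you defer as ``bookkeeping'': the sign conventions are genuinely delicate here, since the prefactor $\bigl(1/(\tfrac{1}{\lambda_1}-\tfrac{1}{\eta_1})\bigr)^{n}$ alternates in sign with $n$ when $\lambda_1>\eta_1$, and tracking this against $f_{n+1}=\tfrac{\lambda_1}{\lambda_1-\eta_1}(f_n-\cdots)$ is exactly where the printed index conventions for $A_n$ and $B_n$ require care; your direct partial-fraction inversion, which yields the density as $\lambda_1^n\eta_1\bigl[(\lambda_1-\eta_1)^{-n}e^{-\eta_1 t}-\sum_{k=1}^{n}\tfrac{t^{k-1}}{(k-1)!}(\lambda_1-\eta_1)^{-(n-k+1)}e^{-\lambda_1 t}\bigr]$ for $t>0$, is in fact the more reliable way to pin those coefficients down.
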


\begin{proof}
We first notice that  the positive  r.v., $J_{i}^{+}+Z_{e}^{+}$,  has the p.d.f.
\begin{align*}
 \int_{0}^{t}\lambda_{1}e^{-\lambda_{1}\left(t-x\right)}\eta_{1}e^{-\eta_{1}x}dx 
 & =\frac{1}{\frac{1}{\lambda_{1}}-\frac{1}{\eta_{1}}}\left(e^{-\lambda_{1}t}-e^{-\eta_{1}t}\right)\\
 & \equiv\left(\frac{1}{\frac{1}{\lambda_{1}}-\frac{1}{\eta_{1}}}\right)\left(A_1(t)e^{-\lambda_{1}t}+B_1(t)e^{-\eta_{1}t}\right), \quad t>0, \lambda_1 \neq \eta_1.
\end{align*}
 Now suppose   for a fixed $n\ge 1$, $f_{\sum_{1}^{n}J_{i}^{+}+Z_{e}^{+}}\left(t\right)=\left(\frac{1}{\frac{1}{\lambda_{1}}-\frac{1}{\eta_{1}}}\right)^{n}\left(A_n(t)e^{-\lambda_{1}t}+B_n(t)e^{-\eta_{1}t}\right)$,
we obtain 
\begin{align*}
f_{\sum_{1}^{n}J_{i}^{+}+Z_{e}^{+}}\left(t\right) & =\int_{0}^{t}\lambda_{1}^{n}\frac{\left(t-x\right)^{n-1}}{\left(n-1\right)!}e^{-\lambda_{1}\left(t-x\right)}\eta_{1}e^{-\eta_{1}x}dx\\
 & =\lambda_{1}^{n}\frac{\left(t\right)^{n}}{n!}\eta_{1}e^{-\lambda_{1}t}+\frac{\left(\lambda_{1}-\eta_{1}\right)}{\lambda_{1}}f_{\sum_{1}^{n+1}J_{i}^{+}+Z_{e}^{+}}\left(t\right),\\
\Longrightarrow\ \ f_{\sum_{1}^{n+1}J_{i}^{+}+Z_{e}^{+}}\left(t\right) & =\frac{\lambda_{1}}{\left(\lambda_{1}-\eta_{1}\right)}\left(f_{\sum_{1}^{n}J_{i}^{+}+Z_{e}^{+}}\left(t\right)-\lambda_{1}^{n}\frac{\left(t\right)^{n}}{n!}\eta_{1}e^{-\lambda_{1}t}\right)\\
 & =\left(\frac{1}{\frac{1}{\lambda_{1}}-\frac{1}{\eta_{1}}}\right)^{n+1}\left(A_{n+1}(t)e^{-\lambda_{1}t}+B_{n+1}(t)e^{-\eta_{1}t}\right) ,
\end{align*}
 where the coefficients satisfy 
\begin{align*}
B_{n+1} & =\frac{1}{\eta_{1}}B_n, \quad A_{n+1}=\frac{A_n}{\eta_{1}}-C_{n+1}, \quad C_{n+1}=\left(\frac{1}{\lambda_{1}}-\frac{1}{\eta_{1}}\right)^{n}\lambda_{1}^{n}\frac{\left(t\right)^{n}}{n!},\\
B_n= & -\eta_{1}^{-\left(n-1\right)}, \quad A_n=1+\sum_{i}^{n-1}\eta_{1}^{-\left(i-1\right)}\left(\frac{1}{\lambda_{1}}-\frac{1}{\eta_{1}}\right)^{n-i}\lambda_{1}^{n-i}\frac{\left(t\right)^{n-i}}{\left(n-i\right)!},
\end{align*} 
 and this yields (\ref{eq:f_++}). Next, we note that  the real-valued r.v. $-J_{i}^{-}+Z_{e}^{+}$ has the p.d.f.
\begin{align*}
\int_{\max\left(0,t\right)}^{\infty}\lambda_{2}e^{\lambda_{2}\left(t-x\right)}\eta_{1}e^{-\eta_{1}x}dx 
 =\frac{e^{\lambda_{2}\hbox{min}\left(0,t\right)-\eta_{1}\max\left(0,t\right)}}{\frac{1}{\lambda_{2}}+\frac{1}{\eta_{1}}}, \quad t\in \R, \lambda_2 \neq \eta_1.
\end{align*}
For a fixed $n \ge 1$, assume that  $f_{-\sum_{1}^{n}J_{i}^{-}+Z_{e}^{+}}\left(t\right)=\left(\frac{1}{\frac{1}{\lambda_{2}}+\frac{1}{\eta_{1}}}\right)^{n}C_n(t) e^{\lambda_{2}\hbox{min}\left(0,t\right)-\eta_{1}\max\left(0,t\right)}$, then we get 
\begin{align*}
f_{-\sum_{1}^{n}J_{i}^{-}+Z_{e}^{+}}\left(t\right) & =\int_{\max\left(0,t\right)}^{\infty}\lambda_{2}^{n}\frac{\left(-t+x\right)^{n-1}}{\left(n-1\right)!}e^{\lambda_{2}\left(t-x\right)}\eta_{1}e^{-\eta_{1}x}dx\\
 & =-\lambda_{2}^{n}\frac{\left(\max\left(0,t\right)-t\right)^{n}}{n!}\eta_{1}e^{\lambda_{2}t}e^{-\left(\lambda_{1}+\eta_{1}\right)\max\left(0,t\right)}+\frac{\left(\lambda_{2}+\eta_{1}\right)}{\lambda_{2}}f_{-\sum_{1}^{n+1}J_{i}^{-}+Z_{e}^{+}}\left(t\right),\\
\Longrightarrow f_{-\sum_{1}^{n+1}J_{i}^{-}+Z_{e}^{+}}\left(t\right) & =\frac{\lambda_{2}}{\lambda_{2}+\eta_{1}}\left(f_{-\sum_{1}^{n}J_{i}^{-}+Z_{e}^{+}}\left(t\right)+\lambda_{2}^{n}\frac{\left(\max\left(0,t\right)-t\right)^{n}}{n!}\eta_{1}e^{\lambda_{2}t}e^{-\left(\lambda_{1}+\eta_{1}\right)\max\left(0,t\right)}\right)\\
 & \equiv\left(\frac{1}{\frac{1}{\lambda_{2}}+\frac{1}{\eta_{1}}}\right)^{n+1}C_{n+1}\left(t\right)e^{\lambda_{2}\hbox{min}\left(0,t\right)-\eta_{1}\max\left(0,t\right)}.
\end{align*}
Matching terms yields  $C_{n+1}\left(t\right)=\frac{C_{n}\left(t\right)}{\lambda_{2}}+\lambda_{2}^{n}\frac{\left(\max\left(0,t\right)-t\right)^{n}}{n!}.$ 
\end{proof}
We now calculate the distribution of the sum of a normal r.v. and   double-exponentials.  
\begin{proposition}
\label{prop:densities}Let $W\sim N\left(0,\sigma^{2}\right)$. Then, we have the p.d.f.'s:
\begin{align}
f_{W+\sum_{i=1}^{n+1}J_{i}^{+}+Z_{e}^{+}}\left(t\right) & =  \left(\frac{\lambda_{1}}{\lambda_{1}-\eta_{1}}\right)^{n}\eta_{1}\frac{e^{\frac{\left(\sigma\eta_{1}\right)^{2}}{2}}}{\sqrt{2\pi}}e^{-\eta_{1}t}Hh_{0}\left(-\frac{t}{\sigma}+\sigma\eta_{1}\right)+\label{eq:fn_++}\\
 &  -\sum_{i=1}^{n}\left(\frac{\lambda_{1}}{\lambda_{1}-\eta_{1}}\right)^{n-i+1}\eta_{1}\frac{e^{\frac{\left(\sigma\lambda_{1}\right)^{2}}{2}}}{\sqrt{2\pi}}\left(\sigma^{i}\lambda_{1}^{i}\right)e^{-\lambda_{1}t}Hh_{i}\left(\frac{-t}{\sigma}+\sigma\lambda_{1}\right)\ ,\ t>0,\, \lambda_{1}\neq\eta_{1},\nonumber 
\end{align}
 
\begin{align}
f_{W-\sum_{i=1}^{n+1}J_{i}^{-}+Z_{e}^{+}}\left(t\right) & =  \left(\frac{\lambda_{2}}{\lambda_{2}+\eta_{1}}\right)^{n}\eta_{1}\frac{e^{\frac{\left(\sigma\eta_{1}\right)^{2}}{2}}}{\sqrt{2\pi}}e^{-\eta_{1}t}Hh_{0}\left(-\frac{t}{\sigma}+\sigma\eta_{1}\right)+\label{eq:fn_-+}\\
 &   +\sum_{i=1}^{n}\left(\frac{\lambda_{2}}{\lambda_{2}+\eta_{1}}\right)^{n-i+1}\eta_{1}\frac{e^{\frac{\left(\sigma\lambda_{2}\right)^{2}}{2}}}{\sqrt{2\pi}}\left(\sigma^{i}\lambda_{2}^{i}\right)e^{\lambda_{2}t}Hh_{i}\left(\frac{t}{\sigma}+\sigma\lambda_{2}\right)\ ,\ t\in \R, \, \lambda_{2}\neq\eta_{1}.\nonumber 
\end{align}
Moreover,   they  admit the  recursive
relations: 
\begin{align}
f_{W+\sum_{i=1}^{n+1}J_{i}^{+}+Z_{e}^{+}}\left(t\right) & =\frac{\lambda_{1}}{\lambda_{1}-\eta_{1}}\left(f_{W+\sum_{i=1}^{n}J_{i}^{+}+Z_{e}^{+}}\left(t\right)-\eta_{1}\frac{e^{\frac{\left(\sigma\lambda_{1}\right)^{2}}{2}}}{\sqrt{2\pi}}\left(\sigma^{n}\lambda_{1}^{n}\right)e^{-\lambda_{1}t}Hh_{n}\left(\frac{-t}{\sigma}+\sigma\lambda_{1}\right)\right),\label{eq:fnrec_++}
\end{align}
 
\begin{align}
f_{W-\sum_{i=1}^{n+1}J_{i}^{-}+Z_{e}^{+}}\left(t\right) & =\frac{\lambda_{2}}{\lambda_{2}+\eta_{1}}\left(f_{W-\sum_{i=1}^{n+1}J_{i}^{-}+Z_{e}^{+}}\left(t\right)+\eta_{1}\frac{e^{\frac{\left(\sigma\lambda_{2}\right)^{2}}{2}}}{\sqrt{2\pi}}\left(\sigma^{n}\lambda_{2}^{n}\right)e^{\lambda_{2}t}Hh_{n}\left(\frac{t}{\sigma}+\sigma\lambda_{2}\right)\right).\label{eq:fnrec_-+}
\end{align}
 \end{proposition}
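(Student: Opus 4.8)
The plan is to obtain each density in Proposition~\ref{prop:densities} by convolving the corresponding jump density from the preceding lemma with the Gaussian law of $W$, reducing everything to one elementary integral that produces the $Hh$ functions.

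The single computation to carry out first is: for $m\ge0$, $\lambda>0$, $\sigma>0$ and $t\in\mathbb{R}$,
\[
\int_{0}^{\infty}\frac{y^{m}}{m!}\,e^{-\lambda y}\,\frac{1}{\sqrt{2\pi}\,\sigma}\,e^{-\frac{(t-y)^{2}}{2\sigma^{2}}}\,dy
=\frac{e^{\sigma^{2}\lambda^{2}/2}}{\sqrt{2\pi}}\,\sigma^{m}\,e^{-\lambda t}\,Hh_{m}\!\left(-\frac{t}{\sigma}+\sigma\lambda\right),
\]
together with the mirror-image version (the substitution $y\mapsto -y$, relevant for the parts supported on the negative half-line)
\[
\int_{0}^{\infty}\frac{y^{m}}{m!}\,e^{-\lambda y}\,\frac{1}{\sqrt{2\pi}\,\sigma}\,e^{-\frac{(t+y)^{2}}{2\sigma^{2}}}\,dy
=\frac{e^{\sigma^{2}\lambda^{2}/2}}{\sqrt{2\pi}}\,\sigma^{m}\,e^{\lambda t}\,Hh_{m}\!\left(\frac{t}{\sigma}+\sigma\lambda\right).
\]
Both follow by completing the square in the exponent — for the first, $-\lambda y-\frac{(t-y)^{2}}{2\sigma^{2}}=-\frac{(y-(t-\sigma^{2}\lambda))^{2}}{2\sigma^{2}}-\lambda t+\frac{\sigma^{2}\lambda^{2}}{2}$ — then substituting $s=(y-(t-\sigma^{2}\lambda))/\sigma$ and recognizing $\frac{1}{m!}\int_{x}^{\infty}(s-x)^{m}e^{-s^{2}/2}\,ds=Hh_{m}(x)$ from the definition of the $Hh$ functions.

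Since $W$ is independent of the jumps, $f_{W+Y}=\phi(\cdot;0,\sigma)\ast f_{Y}$. For the first identity I would take $Y=\sum_{i=1}^{n+1}J_{i}^{+}+Z_{e}^{+}$, whose density \eqref{eq:f_++} is, on $\{y>0\}$, a finite linear combination of $\tfrac{y^{m}}{m!}e^{-\lambda_{1}y}$ (from $A_{n+1}$) and $e^{-\eta_{1}y}$ (from the constant $B_{n+1}$); convolving each summand with $\phi(\cdot;0,\sigma)$ via the first integral and re-indexing the $A_{n+1}$-sum produces \eqref{eq:fn_++}, the $e^{-\eta_{1}t}$-term being the $B_{n+1}$ contribution and the $e^{-\lambda_{1}t}Hh_{i}$-sum the $A_{n+1}$ contribution. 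For the second identity, $Y=-\sum_{i=1}^{n+1}J_{i}^{-}+Z_{e}^{+}$ is supported on all of $\mathbb{R}$ with a kink at the origin (see \eqref{eq:f_-+}), so I would split the convolution at $y=0$: the $y>0$ piece (where $C_{n+1}$ is constant and the exponent is $-\eta_{1}y$) gives the $e^{-\eta_{1}t}Hh_{0}$ term via the first integral, while the $y<0$ piece (where, after $y\mapsto -y$, $C_{n+1}$ is a polynomial against $e^{-\lambda_{2}y}$) gives the sum of $e^{\lambda_{2}t}Hh_{i}(t/\sigma+\sigma\lambda_{2})$ terms via the mirror integral. The two half-line contributions glue together seamlessly because $Hh_{0}(x)=\sqrt{2\pi}\,\Phi(-x)$ is entire.

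The recursions \eqref{eq:fnrec_++} and \eqref{eq:fnrec_-+} then require no new work: convolution with $\phi(\cdot;0,\sigma)$ is linear, so applying it to the jump-density recursions \eqref{eq:frec_++}, \eqref{eq:frec_-+} and evaluating the lone correction term $\lambda^{n}\tfrac{t^{n}}{n!}\eta_{1}e^{-\lambda t}$ by the integral above yields exactly the stated relations. Equivalently, one can run the whole argument as an induction on $n$: verify the base case by convolving $Z_{e}^{+}\sim\mathrm{Exp}(\eta_{1})$ with $N(0,\sigma^{2})$, then propagate with these recursions. The only genuine obstacle is the bookkeeping — keeping track of the prefactors $\tfrac{\lambda_{1}}{\lambda_{1}-\eta_{1}}$ and $\tfrac{\lambda_{2}}{\lambda_{2}+\eta_{1}}$, the factorials, the powers of $\sigma$ and $\lambda$, and the index shift in passing from $A_{n+1}$, $C_{n+1}$ to the $Hh$-sums — since all the analytic content is contained in the single Gaussian integral above.
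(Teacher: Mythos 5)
Your proposal is correct and follows essentially the same route as the paper: both obtain the densities by convolving the jump-sum densities of the preceding lemma with the Gaussian law of $W$, with the recursions \eqref{eq:fnrec_++}--\eqref{eq:fnrec_-+} inherited by linearity from \eqref{eq:frec_++}--\eqref{eq:frec_-+} and the explicit forms obtained by unrolling from the base case $f_{W+Z_e^+}$. The only difference is presentational: you isolate and verify the completed-square Gaussian integral that produces the $Hh_m$ functions, which the paper's proof uses implicitly without stating.
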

\begin{proof}  We start with the  p.d.f. for $W + Z_{e}^{+}$:
\[
f_{W+Z_{e}^{+}}\left(t\right)=\eta_{1}\frac{e^{\frac{\left(\sigma\eta_{1}\right)^{2}}{2}}}{\sqrt{2\pi}}e^{-\eta_{1}t}Hh_{0}\left(-\frac{t}{\sigma}+\sigma\eta_{1}\right).
\]
Using (\ref{eq:frec_++}), we also write 
\begin{align*}
f_{W+\sum_{i=1}^{n+1}J_{i}^{+}+Z_{e}^{+}}\left(t\right) & =\int_{-\infty}^{t}f_{\sum_{1}^{n+1}J_{i}^{+}+Z_{e}^{+}}\left(t-x\right)\frac{e^{-\frac{x{}^{2}}{2\sigma^{2}}}}{\sqrt{2\pi\sigma^{2}}}dx\\
 & =\frac{\lambda_{1}}{\lambda_{1}-\eta_{1}}\left(f_{W+\sum_{i=1}^{n}J_{i}^{+}+Z_{e}^{+}}\left(t\right)-\eta_{1}\frac{e^{\frac{\left(\sigma\lambda_{1}\right)^{2}}{2}}}{\sqrt{2\pi}}\left(\sigma^{n}\lambda_{1}^{n}\right)e^{-\lambda_{1}t}Hh_{n}\left(\frac{-t}{\sigma}+\sigma\lambda_{1}\right)\right),
\end{align*}
 which leads directly to 
\begin{eqnarray*}
f_{W+\sum_{i=1}^{n+1}J_{i}^{+}+Z_{e}^{+}}\left(t\right) & = & \left(\frac{\lambda_{1}}{\lambda_{1}-\eta_{1}}\right)^{n}\eta_{1}\frac{e^{\frac{\left(\sigma\eta_{1}\right)^{2}}{2}}}{\sqrt{2\pi}}e^{-\eta_{1}t}Hh_{0}\left(-\frac{t}{\sigma}+\sigma\eta_{1}\right)+\\
 &  & -\sum_{i=1}^{n}\left(\frac{\lambda_{1}}{\lambda_{1}-\eta_{1}}\right)^{n-i+1}\eta_{1}\frac{e^{\frac{\left(\sigma\lambda_{1}\right)^{2}}{2}}}{\sqrt{2\pi}}\left(\sigma^{i}\lambda_{1}^{i}\right)e^{-\lambda_{1}t}Hh_{i}\left(\frac{-t}{\sigma}+\sigma\lambda_{1}\right).
\end{eqnarray*}

To prove  (\ref{eq:fn_-+}), we apply  (\ref{eq:frec_-+}) to get  the recursive
expression 
\begin{align*}
f_{W-\sum_{i=1}^{n+1}J_{i}^{-}+Z_{e}^{+}}\left(t\right) & =\int_{\mathbb{R}}f_{-\sum_{i=1}^{n+1}J_{i}^{-}+Z_{e}^{+}}\left(t-x\right)\frac{e^{-\frac{x{}^{2}}{2\sigma^{2}}}}{\sqrt{2\pi\sigma^{2}}}dx\\
 & =\frac{\lambda_{2}}{\lambda_{2}+\eta_{1}}\left(f_{W-\sum_{i=1}^{n+1}J_{i}^{-}+Z_{e}^{+}}\left(t\right)+\eta_{1}\frac{e^{\frac{\left(\sigma\lambda_{2}\right)^{2}}{2}}}{\sqrt{2\pi}}\left(\sigma^{n}\lambda_{2}^{n}\right)e^{\lambda_{2}t}Hh_{n}\left(\frac{t}{\sigma}+\sigma\lambda_{2}\right)\right),
\end{align*}
 which can be written explicitly as  
\begin{eqnarray*}
f_{W-\sum_{i=1}^{n+1}J_{i}^{-}+Z_{e}^{+}}\left(t\right) & = & \left(\frac{\lambda_{2}}{\lambda_{2}+\eta_{1}}\right)^{n}\eta_{1}\frac{e^{\frac{\left(\sigma\eta_{1}\right)^{2}}{2}}}{\sqrt{2\pi}}e^{-\eta_{1}t}Hh_{0}\left(-\frac{t}{\sigma}+\sigma\eta_{1}\right)+\\
 &  & +\sum_{i=1}^{n}\left(\frac{\lambda_{2}}{\lambda_{2}+\eta_{1}}\right)^{n-i+1}\eta_{1}\frac{e^{\frac{\left(\sigma\lambda_{2}\right)^{2}}{2}}}{\sqrt{2\pi}}\left(\sigma^{i}\lambda_{2}^{i}\right)e^{\lambda_{2}t}Hh_{i}\left(\frac{t}{\sigma}+\sigma\lambda_{2}\right).
\end{eqnarray*}
\end{proof}

We can now calculate the tail probabilities that will allow us to
price the call option. 
\begin{prop}
Let $F_{X}(z)\equiv \Q\left\{ X\leq z\right\} $. Then,
\begin{align}
F_{W+\sum_{i=1}^{n+1}J_{i}^{+}+Z_{e}^{+}}(z) & =  \left(\frac{\lambda_{1}}{\lambda_{1}-\eta_{1}}\right)^{n}\eta_{1}\frac{e^{\frac{\left(\sigma\eta_{1}\right)^{2}}{2}}}{\sqrt{2\pi}}I_{0}\left(z;-\eta_{1},-\frac{1}{\sigma},-\eta_{1}\sigma\right)+\label{eq:fn_++-1}\\
 &   -\sum_{i=1}^{n}\left(\frac{\lambda_{1}}{\lambda_{1}-\eta_{1}}\right)^{n-i+1}\eta_{1}\frac{e^{\frac{\left(\sigma\lambda_{1}\right)^{2}}{2}}}{\sqrt{2\pi}}\left(\sigma^{i}\lambda_{1}^{i}\right)I_{i}\left(z;-\lambda_{1},-\frac{1}{\sigma},-\lambda_{1}\sigma\right)\ ,\quad z>0,\ \lambda_{1}\neq\eta_{1},\nonumber 
\end{align}
\begin{align}
F_{W-\sum_{i=1}^{n+1}J_{i}^{-}+Z_{e}^{+}} (z)& =  \left(\frac{\lambda_{2}}{\lambda_{2}+\eta_{1}}\right)^{n}\eta_{1}\frac{e^{\frac{\left(\sigma\eta_{1}\right)^{2}}{2}}}{\sqrt{2\pi}}I{}_{0}\left(z,-\eta_{1},-\frac{1}{\sigma},-\sigma\eta_{1}\right)+\label{eq:fn_-+-1}\\
 & +\sum_{i=1}^{n}\left(\frac{\lambda_{2}}{\lambda_{2}+\eta_{1}}\right)^{n-i+1}\eta_{1}\frac{e^{\frac{\left(\sigma\lambda_{2}\right)^{2}}{2}}}{\sqrt{2\pi}}\left(\sigma^{i}\lambda_{2}^{i}\right)I_{n} \left(z,\lambda_{2},\frac{1}{\sigma},-\sigma\lambda_{2}\right)\ ,\quad z\in \mathbb{R},\, \lambda_{2}\neq\eta_{1},\nonumber 
\end{align}
 In addition, these c.d.f.'s admit the following recursive
relations: 
\begin{align}
F_{W+\sum_{i=1}^{n+1}J_{i}^{+}+Z_{e}^{+}} (z)& =\frac{\lambda_{1}}{\left(\lambda_{1}-\eta_{1}\right)}\left(F_{W+\sum_{i=1}^{n}J_{i}^{+}+Z_{e}^{+}}(z)-\eta_{1}\frac{e^{\frac{\left(\sigma\lambda_{1}\right)^{2}}{2}}}{\sqrt{2\pi}}\left(\sigma^{n}\lambda_{1}^{n}\right)I_{n}\left(z;-\lambda_{1},-\frac{1}{\sigma},-\lambda_{1}\sigma\right)\right),\label{eq:fnrec_++-1}
\end{align}
 
\begin{align}
F_{W-\sum_{i=1}^{n+1}J_{i}^{-}+Z_{e}^{+}}(z) & =\frac{\lambda_{2}}{\lambda_{2}+\eta_{1}}\left(F_{W-\sum_{i=1}^{n+1}J_{i}^{-}+Z_{e}^{+}}(z)+\eta_{1}\frac{e^{\frac{\left(\sigma\lambda_{2}\right)^{2}}{2}}}{\sqrt{2\pi}}\left(\sigma^{n}\lambda_{2}^{n}\right)I{}_{n}\left(z,\lambda_{2},\frac{1}{\sigma},-\sigma\lambda_{2}\right)\right).\label{eq:fnrec_-+-1}
\end{align}
\end{prop}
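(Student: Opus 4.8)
\emph{Proof strategy.} The plan is to obtain all four identities by integrating the corresponding density expression of Proposition \ref{prop:densities} from $z$ to $\infty$ and then recognizing every integral that appears as an evaluation of the function $I_{n}\left(k;\alpha,\beta,\delta\right)=\int_{k}^{\infty}e^{\alpha x}Hh_{n}\left(\beta x-\delta\right)dx$, whose closed forms — case (i) $\beta>0,\ \alpha\neq0$ and case (ii) $\beta<0,\ \alpha<0$ — and recursion $I_{n}=-\frac{e^{\alpha k}}{\alpha}Hh_{n}\left(\beta k-\delta\right)+\frac{\beta}{\alpha}I_{n-1}$ were recorded in the lemma above. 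In this way the whole argument reduces to term-by-term integration together with a careful matching of the parameter triples $\left(\alpha,\beta,\delta\right)$.

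For the ``$+$'' family (\ref{eq:fn_++-1}) I would start from the explicit density (\ref{eq:fn_++}) and integrate it over $[z,\infty)$. Interchanging the finite sum $\sum_{i=1}^{n}$ with the integral is immediate, and each term is integrable near $+\infty$ because the factors $e^{-\eta_{1}t}$ and $e^{-\lambda_{1}t}$ decay exponentially while $Hh_{i}\left(-t/\sigma+\sigma\eta_{1}\right)$ and $Hh_{i}\left(-t/\sigma+\sigma\lambda_{1}\right)$ grow at most polynomially as $t\to\infty$. The leading term then produces $\int_{z}^{\infty}e^{-\eta_{1}t}Hh_{0}\left(-t/\sigma+\sigma\eta_{1}\right)dt=I_{0}\left(z;-\eta_{1},-1/\sigma,-\sigma\eta_{1}\right)$ and the $i$-th summand produces $I_{i}\left(z;-\lambda_{1},-1/\sigma,-\sigma\lambda_{1}\right)$; in both cases $\beta=-1/\sigma<0$ and $\alpha\in\{-\eta_{1},-\lambda_{1}\}<0$, so case (ii) of the lemma applies and the closed forms are legitimate. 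Carrying along the prefactors of (\ref{eq:fn_++}) yields (\ref{eq:fn_++-1}) exactly. The ``$-$'' family (\ref{eq:fn_-+-1}) is treated the same way, starting from (\ref{eq:fn_-+}); the only new point is that the summands now carry a factor $e^{\lambda_{2}t}$, but since the argument $t/\sigma+\sigma\lambda_{2}$ of $Hh_{i}$ tends to $+\infty$ as $t\to\infty$ and $Hh_{i}$ decays there faster than any exponential grows, the tail integrals still converge; the resulting integrals are $I_{0}\left(z;-\eta_{1},-1/\sigma,-\sigma\eta_{1}\right)$ (case (ii)) for the leading term and $I_{i}\left(z;\lambda_{2},1/\sigma,-\sigma\lambda_{2}\right)$ for the $i$-th summand, where now $\beta=1/\sigma>0$ and $\alpha=\lambda_{2}\neq0$, so case (i) applies.

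The recursive relations (\ref{eq:fnrec_++-1})--(\ref{eq:fnrec_-+-1}) then follow by integrating the density recursions (\ref{eq:fnrec_++})--(\ref{eq:fnrec_-+}) over $[z,\infty)$: the first term on each right-hand side reproduces the c.d.f. with one fewer random-time jump, and the correction term $\eta_{1}\frac{e^{(\sigma\lambda_{1})^{2}/2}}{\sqrt{2\pi}}\left(\sigma^{n}\lambda_{1}^{n}\right)e^{-\lambda_{1}t}Hh_{n}\left(-t/\sigma+\sigma\lambda_{1}\right)$ (resp. its $\lambda_{2}$ analogue) integrates, by the same identification, to $\eta_{1}\frac{e^{(\sigma\lambda_{1})^{2}/2}}{\sqrt{2\pi}}\left(\sigma^{n}\lambda_{1}^{n}\right)I_{n}\left(z;-\lambda_{1},-1/\sigma,-\sigma\lambda_{1}\right)$ (resp. $I_{n}\left(z;\lambda_{2},1/\sigma,-\sigma\lambda_{2}\right)$); one may equivalently check these directly from the explicit formulas (\ref{eq:fn_++-1})--(\ref{eq:fn_-+-1}) using the recursion for $I_{n}$ above. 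I do not expect any real obstacle: the only point that genuinely needs attention is the bookkeeping of the triples $(\alpha,\beta,\delta)$ in each term together with verifying the sign conditions ($\beta<0,\ \alpha<0$ in the first family, $\beta>0,\ \alpha\neq0$ for the $\lambda_{2}$ terms) under which the closed form of $I_{n}$ is valid; once this and the elementary tail estimate for the $Hh_{n}$ functions that justifies integrating term by term are in place, the formulas and their recursions drop out.
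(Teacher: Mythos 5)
Your proposal is correct and takes essentially the same approach as the paper, whose proof consists of the single remark that the expressions follow by integrating the densities of Proposition \ref{prop:densities} and invoking the definitions of the $Hh_n$ and $I_n$ functions; your term-by-term identification of the triples $(\alpha,\beta,\delta)$, the sign-condition checks for cases (i)/(ii), and the tail estimates justifying the interchange are exactly the details the paper leaves implicit. (Two cosmetic points, both defects of the paper's statement rather than of your argument: despite the definition $F_X(z)\equiv\Q\{X\le z\}$, the displayed formulas are the tail probabilities $\Q\{X>z\}$, which is what your integration over $[z,\infty)$ produces and what is used later in the pricing formula; and the subscript $I_n$ in the summand of the second explicit formula should read $I_i$, as your derivation yields.)
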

\begin{proof}
The above expressions can all be derived by integrating the corresponding
densities given in (\ref{eq:fn_++}), (\ref{eq:fn_-+}), (\ref{eq:fnrec_++}), (\ref{eq:fnrec_-+}) and taking into
account the definitions of the $Hh$ and $I$ functions. 
\end{proof}

We remark that although we did not provide the tail probability formulas $F_{\sigma W_{T}+\sum_{i=1}^{m}\hat{J}_{i}^{+}-\hat{Z}_{e}^{-}}$  and $F_{\sigma W_{T}-\sum_{i=1}^{m}\hat{J}_{i}^{-}-\hat{Z}_{e}^{-}}$ we point out that they can be   derived by symmetry. For example, we have
\[
F_{\sigma W_{T}+\sum_{i=1}^{m}\hat{J}_{i}^{+}-\hat{Z}_{e}^{-}}(s) = 1  - F_{\sigma W_{T}-\sum_{i=1}^{m}\hat{J}_{i}^{+}+\hat{Z}_{e}^{-}}(-s).
\] In fact, these c.d.f.'s appear in the pricing formula as the functions $T_{i,j}$'s,   defined in Section \ref{sect-app1}.  We now give  the formula for the call option price. Recall  the call price can be written as  
\begin{equation}
C\left(S\right) = S\mathbb{E}\left\{ e^{(-\frac{\sigma^2}{2}-\frac{\alpha}{T}-\kappa\zeta)T+\sigma W_{T}+\sum_{i=1}^{N_T}J_{i}+Z_{e}}\mathbf{1}_{\{S_{T}>K\}}\vert S_0=S\right\} -e^{-rT}K\Q\left\{ S_{T}>K\vert S_0=S\right\} \label{eq:price}.
\end{equation}
Using the results above, we can write 
\begin{eqnarray}
 &  & \Q\left\{S_{T}>K\vert S_0=S\right\} \label{eq:rightpiece}\\
 & = & \sum_{n=1}^{\infty}\Q\left\{ N_{T}=n\right\} \Q\left\{ \sigma W_{T}+\sum_{i}^{n}J_{i}+Z_{e}>\mbox{log}\left(\frac{K}{S}\right)-\left(r-\frac{\sigma^{2}}{2}-\frac{\alpha}{T}-\kappa\zeta\right)T\right\} \nonumber \\
 & = & \sum_{n=1}^{\infty}\frac{\kappa^{n}e^{-\kappa}}{n!}
\bigg[u\sum_{m=1}^{n}\left(P_{n,m}\Q\left\{ \sigma W_{T}+\sum_{i}^{m}J_{i}^{+}+Z_{e}^{+}>k\right\} +Q_{n,m}\Q\left\{ \sigma W_{T}-\sum_{i}^{m}J_{i}^{-}+Z_{e}^{+}>k\right\} \right)+\nonumber \\
 &  & +w\sum_{m=1}^{n}\left(P_{n,m}\Q\left\{ \sigma W_{T}+\sum_{i}^{m}J_{i}^{+}-Z_{e}^{-}>k\right\} +Q_{n,m}\Q\left\{ \sigma W_{T}-\sum_{i}^{m}J_{i}^{-}-Z_{e}^{-}>k\right\} \right)\bigg]\nonumber \\
 & = & \sum_{n=1}^{\infty}\frac{\kappa^{n}e^{-\kappa}}{n!}\left[u\sum_{m=1}^{n}\left(P_{n,m}T_{1,n}\left(k,\Theta\right)+Q_{n,m}T_{2,n}\left(k,\Theta\right)\right)+w\sum_{m=1}^{n}\left(P_{n,m}T_{3,n}\left(k,\Theta\right)+Q_{n,m}T_{4,n}\left(k,\Theta\right)\right)\right] \nonumber,
\end{eqnarray}
where $k\equiv\mbox{log}\left(\frac{K}{S}\right)-\left(r-\frac{\sigma^{2}}{2}-\kappa\zeta\right)T-\alpha$. 

It remains to compute 
\begin{align}
   \mathbb{E}&\left\{ e^{\sigma W_{T}+\sum_{i=1}^{N_T}J_{i}+Z_{e}}\mathbf{1}_{\{S_{T}>K\}}\vert S_0=S\right\} \label{eq:leftpiece} \\
 & =  \sum_{n=1}^{\infty}\frac{\mbox{\ensuremath{\left(\kappa T\right)}}^{n}e^{-\kappa T}}{n!}\bigg[u\sum_{m=1}^{n}P_{n,m}\mathbb{E}\left\{ e^{\sigma W_{T}+\sum_{i=1}^{m}J_{i}^{+}+Z_{e}^{+}}\mathbf{1}_{\{ \sigma W_{T}+\sum_{i=1}^{m}J_{i}^{+}+Z_{e}^{+}>k \}}\right\} + \label{eq:Eexp1}\\
 &   +u\sum_{k=1}^{n}Q_{n,k}\mathbb{E}\left\{ e^{\sigma W_{T}-\sum_{i=1}^{m}J_{i}^{-}+Z_{e}^{+}}\mathbf{1}_{\{\sigma W_{T}-\sum_{i=1}^{m}J_{i}^{-}+Z_{e}^{+}>k\}}\right\} + \label{eq:Eexp2}\\
 &   +w\sum_{k=1}^{n}P_{n,k}\mathbb{E}\left\{ e^{\sigma W_{T}+\sum_{i=1}^{m}J_{i}^{+}-Z_{e}^{-}}\mathbf{1}_{\{\sigma W_{T}+\sum_{i=1}^{m}J_{i}^{+}-Z_{e}^{-}>k\}}\right\} + \label{eq:Eexp3}\\
 &   +w\sum_{k=1}^{n}Q_{n,k}\mathbb{E}\left\{ e^{\sigma W_{T}-\sum_{i=1}^{m}J_{i}^{-}-Z_{e}^{-}}\mathbf{1}_{\{\sigma W_{T}-\sum_{i=1}^{m}J_{i}^{-}-Z_{e}^{-}\}}\right\}\bigg].\label{eq:Eexp4}
\end{align}
In order to  compute the expectation    in (\ref{eq:Eexp1}), we apply the p.d.f. from 
Prop. \ref{prop:densities} to write
\begin{align}
  \mathbb{E}&\left\{ e^{\sigma W_{T}+\sum_{i=1}^{m}J_{i}^{+}+Z_{e}^{+}}\mathbf{1}_{\{\sigma W_{T}+\sum_{i=1}^{m}J_{i}^{+}+Z_{e}^{+}>k\}}\right\} = \nonumber \\
 & =   \int_{k}^{\infty}\int_{\mathbb{R}}\frac{e^{t-x}e^{-\frac{\left(t-x\right){}^{2}}{2\sigma^{2}T}}}{\sqrt{2\pi\sigma^{2}T}}\int_{\mathbb{R}}e^{x-y}f_{\sum_{i=1}^{m}J_{i}^{+}}\left(x-y\right)e^{y}f_{Z_{e}^{+}}\left(y\right)dydxdt\nonumber \\
 & =   \int_{k}^{\infty}\int_{\mathbb{R}}e^{\frac{\sigma^{2}T}{2}}\frac{e^{-\frac{\left(t-x-\sigma^{2}T\right){}^{2}}{2\sigma^{2}T}}}{\sqrt{2\pi\sigma^{2}T}}\int_{\mathbb{R}}\left(\frac{\lambda_{1}}{\lambda_{1}-1}\right)^{m}f_{\sum_{i=1}^{m}\hat{J}_{i}^{+}}\left(x-y\right)\frac{\eta_{1}}{\eta_{1}-1}f_{\hat{Z}_{e}^{+}}\left(y\right)dydxdt\nonumber \\
& = e^{\frac{\sigma^{2}T}{2}}\left(\frac{\lambda_{1}}{\lambda_{1}-1}\right)^m\frac{\eta_{1}}{\eta_{1}-1}\Q\left(\sigma W_{T}+\sum_{i=1}^{m}\hat{J}_{i}^{+}+\hat{Z}_{e}^{+}>k-\sigma^2T\right) \nonumber \\
& = e^{\frac{\sigma^{2}T}{2}}\left(\frac{\lambda_{1}}{\lambda_{1}-1}\right)^m\frac{\eta_{1}}{\eta_{1}-1}F_{\sigma W_{T}+\sum_{i=1}^{m}\hat{J}_{i}^{+}+\hat{Z}_{e}^{+}}\left(k-\sigma^2T\right) \label{eq:Eexp1_s}, 
\end{align}
where $\hat{J}_{i}^{+}\sim Exp\left(\lambda_{1}-1\right)$ and $\hat{Z}_{e}^{+}\sim Exp\left(\eta_{1}-1\right)$.
Expressions analogous to (\ref{eq:Eexp1_s}) can be found for the  terms (\ref{eq:Eexp2}), (\ref{eq:Eexp3}) and (\ref{eq:Eexp4}).  Substituting these expressions and rearranging terms, we get
\begin{align*}
 &\mathbb{E}\left\{ e^{(-\frac{\sigma^2}{2}-\frac{\alpha}{T}-\kappa\zeta)T+\sigma W_{T}+\sum_{i=1}^{N}J_{i}+Z_{e}}\mathbf{1}_{\{S_{T}>K\}}\vert S_0=S\right\} \\
 & =   e^{-\alpha}\sum_{n=1}^{\infty}\frac{\mbox{\ensuremath{\left(\hat{\kappa}T\right)}}^{n}e^{-\hat{\kappa}T}}{n!}\left(\hat{u}\sum_{k=1}^{n}\left(\hat{P}_{n,k}T_{1,n}\left(k,\Theta\right)+\hat{Q}_{n,k}T_{2,n}\left(k,\Theta\right)\right)+\hat{w}\sum_{k=1}^{n}\left(\hat{P}_{n,k}T_{3,n}\left(\hat{k},\Theta\right)+\hat{Q}_{n,k}T_{4,n}\left(\hat{k},\Theta\right)\right)\right),
\end{align*}
 where $\hat{P}_{n,k}$, $\hat{Q}_{n,k}$ are calculated as $P_{n,k}$,
$Q_{n,k}$ but with the parameters $\hat{\eta}_{1,2},\hat{\lambda}_{1,2}$
in place of $\eta_{1,2}$ and $\lambda_{1,2}$. In addition,  the Poisson intensity parameter has also been transformed to $\hat{\kappa}\equiv (m+1)\kappa$, with $m=p\frac{\lambda_1}{\lambda_1-1}+q\frac{\lambda_2}{\lambda_2+1}-1$. Finally, substituting the expressions  for (\ref{eq:rightpiece}) and (\ref{eq:leftpiece}) into  (\ref{eq:price}) concludes the proof.

\subsection{Proof of Propositions \ref{prop:lower_bound} and \ref{prop:upper_bound}}
The first part of Proposition \ref{prop:lower_bound} follows from   Jensen's inequality, namely, 
 
\begin{eqnarray}
C\left(t,S\right) & \geq & \mathbb{E}\left\{ e^{-r\tau}\left(Se^{r\tau+Z_{e}}-K\right)^{+}\right\}  \notag\\
 & = & \int_{\mathbb{R}^{+}}C_{BS}\left(\tau,S;\frac{\hat{\sigma}}{\sqrt{\tau}},K,r\right)\G\text{\ensuremath{\left(d\hat{\sigma}\right)}}
  \geq  C_{BS}\left(\tau,S;\frac{\hat{\sigma}_{min}}{\sqrt{\tau}},K,r\right),\label{eqineq}
\end{eqnarray}
where $\tau\equiv T-t$. In \eqref{eqineq}, the  equality follows from the tower property of the conditional
expectation and the last inequality follows from the monotonicity
of $C_{BS}$ w.r.t. the volatility parameter $\sigma$. The second part of the proposition is proved
in a similar fashion.

To prove Proposition \ref{prop:upper_bound}, we first observe that 

\begin{eqnarray*}
C\left(t,S\right) & = & \int_{\mathbb{R}^{+}\times\mathbb{R}^{+}}C_{BS}\left(\tau,S;\sqrt{\frac{\tilde{\sigma}^{2}+\hat{\sigma}^{2}}{\tau}},K,r\right)\H\left(d\tilde{\sigma}\right)\G\left(d\hat{\sigma}\right).
\end{eqnarray*}
For an ATM-forward option, i.e. $K=e^{r\tau}S$, we notice
that the Black-Scholes price $C_{BS}$ is concave in its volatility parameter $\sigma$. Consequently, by     Jensen's inequality, we obtain
the upper bound
\[
C(t,S)\leq C_{BS}\left(\tau ,S;\sqrt{\int_{\mathbb{R}^{+}}\frac{\tilde{\sigma}^{2}}{\tau}\H\left(d\tilde{\sigma}\right)+\int_{\mathbb{R}^{+}}\frac{\hat{\sigma}^{2}}{\tau}\G\left(d\hat{\sigma}\right)},K,r\right).
\]

\subsection{Proof of Propositions \ref{prop:asy1} and \ref{prop:asy2}}\label{app-strike}

Propositions \ref{prop:asy1} and \ref{prop:asy2} are  applications of the more general
results developed in \citet{Benaim2008}. Denote by  $M\left(\omega\right)\equiv\mathbb{E}\left\{ e^{\omega X}\right\} $
the m.g.f. of the r.v. $X$, and with $F$ its c.d.f. As mentioned in
Section \ref{sec:IV}, if $r^{*}\equiv\mbox{inf}\left\{ \omega\ s.t.\ M\left(\omega\right)<\infty\right\} $
is finite, than  it follows  that $\mbox{lim sup}_{x\rightarrow\infty}\frac{-{\log}\left(1-F\left(x\right)\right)}{x}=r^{*}$.
In turn, if $F$ is well-behaved,
the lim sup can be replaced by a lim and the tail asymptotics $\frac{-{\log}\left(1-F\left(x\right)\right)}{x}\sim r^{*}x$,
along with the condition $r^{*}>1$ are sufficient to provide the asymptotics for the implied volatility
\[
\frac{I^{2}\left(t;K,T\right)\left(T-t\right)}{{\log}\left(\frac{K}{S_{t}}\right)}\sim\mbox{\ensuremath{\xi} {\ensuremath{\left(p^{*}\right)}}}, \quad \text{as}\,\, \ensuremath{K}\rightarrow\infty\,;
\] see Theorems 9 and 10 of \citet{Benaim2008} for this result and the associated technical conditions. A symmetric argument holds for the negative
tails of $F$ and of the implied volatility.
Therefore, given that the log stock price under the extended Kou and Heston models  admits
a m.g.f, it remains to prove  that $\frac{-{\log}\left(1-F\left(x\right)\right)}{x}\sim r^{*}x$.
Theorem 7 and Theorem 8 in \citet{Benaim2008} provide sufficient
conditions for models admitting a m.g.f. to ensure that such condition holds. 

The m.g.f. $M$ of $X\equiv\hbox{log}\left(\frac{S_T}{S_t}\right)$ under the extended Heston model (\ref{eq:heston_dyn}), given $\sigma_t=\sigma$, satisfies\footnote{For the derivation of the Heston m.g.f.,  see, e.g.  \citet{BanoRollin2009}.}
\begin{align*}
{\log}M\left(\omega\right) & =  C+\omega D-\frac{\nu\vartheta}{\zeta^{2}}\left(2{\log}\left(\frac{1-ge^{-d\left(T-t\right)}}{1-g}\right)+d(T-t)\right)+\frac{\nu-\rho\zeta\omega -d}{\zeta^{2}}\frac{1-e^{-d\left(T-t\right)}}{1-ge^{-d\left(T-t\right)}}\sigma^{2}+\psi_e(\omega),\\
\text{ with} ~~g & =  \frac{\nu-\rho\zeta\omega -d}{\nu-\rho\zeta\omega +d}, \quad
d  = \mbox{\ensuremath{\sqrt{\left(\nu-\rho\zeta\omega \right)^{2}+\zeta^{2}\left(\omega -\omega^{2}\right)}}}, \quad
\psi_e(\omega)  =  {\log}\left(u\frac{\eta_{1}}{\eta_{1}-\omega}+w\frac{\eta_{2}}{\eta_{2}+\omega}\right),
\end{align*}
where $C$ and $D$ are constants. Clearly, we have $r^{*}=\min\left\{ p,\eta_{1}\right\}$ , where $p$ is the smallest positive solution of $1-ge^{-d\left(T-t\right)}\vert _{\omega=p}=0.$
In turn, the last equation is equivalent to 
\[
\nu-\rho\zeta p+\mbox{\ensuremath{\sqrt{\left(\nu-\rho\zeta p\right)^{2}+\zeta^{2}\left(p-p^{2}\right)}}}\mbox{coth}\left(\frac{\left(T-t\right)}{2}\mbox{\ensuremath{\sqrt{\left(\nu-\rho\zeta p\right)^{2}+\zeta^{2}\left(p-p^{2}\right)}}}\right)=0.
\]
Notice that, when $r^{*}=p$, the dominating term for $\omega\rightarrow p$
is $\frac{\nu-\rho\zeta\omega i-d}{\zeta^{2}}\frac{1-e^{-d\left(T-t\right)}}{1-ge^{-d\left(T-t\right)}}\sigma^{2}$.
Using l'Hopital's rule, it follows that 
\[
\frac{p-\omega}{1-ge^{-d\left(T-t\right)}}\rightarrow \text{constant, \quad as}~ \omega\rightarrow p^{-}.
\]
This means that $\frac{\nu-\rho\zeta\omega-d}{\zeta^{2}}\frac{1-e^{-d\left(T-t\right)}}{1-ge^{-d\left(T-t\right)}}\sigma_{t}^{2}$
is regularly varying of index 1 as a function of $\frac{1}{p-\omega}$
and Criterion II of  Theorem 8 in  \citet{Benaim2008} is satisfied. When
$r^{*}=\eta_{1}$, we have 
\[
M\left(\eta_{1}-z\right)\sim u\eta_{1}z^{-1}\mbox{ as \ensuremath{z\rightarrow0^{+}}},
\]
and thus Criterion I of   Theorem 7  in  \citet{Benaim2008}  is satisfied.
A similar argument holds for the negative tail.

Now consider the m.g.f. of $X\equiv\hbox{log}\left(\frac{S_T}{S_t}\right)$ under the extended Kou model:
\[
{\log}M\left(\omega\right)=\mu\omega+\frac{\sigma^{2}\omega^{2}}{2}+\kappa\left(p\frac{\lambda_{1}}{\lambda_{1}-\omega}+\left(1-p\right)\frac{\lambda_{2}}{\lambda_{2}+\omega}-1\right)+{\log}\left(u\frac{\eta_{1}}{\eta_{1}-\omega}+\left(1-u\right)\frac{\eta_{2}}{\eta_{2}+\omega}\right),
\]
where $\mu$ is a constant. This m.g.f.   satisfies the tail asymptotics condition with  $r^{*}=\min\left\{ \lambda_{1},\eta_{1}\right\}$, and 
\[
\begin{cases}
{\log}M\left(\lambda_{1}-z\right)\sim\kappa p\lambda_{1}z^{-1}\mbox{ as \ensuremath{z\rightarrow0^{+}}} & \quad if\ \lambda_{1}\leq\eta_{1},\\
M\left(\eta_{1}-z\right)\sim u\eta_{1}z^{-1}\mbox{ as \ensuremath{z\rightarrow0^{+}}} & \quad if\ \lambda_{1}>\eta_{1}. 
\end{cases}
\]
Hence,  $M$ satisfies either Criterion I or II (depending on  whether $\lambda_1>\eta_1$ or not) of    
Theorems 7 and 8 in  \citet{Benaim2008}. A similar argument holds for the negative tail.

\subsection{Proof of Proposition \ref{prop:american_bounds}}

Let $A\left(t,S;u\right)$ be the American put price when the earnings announcement is  scheduled at time $T_{e}=u,\ t<u\leq T$. Our goal is to show that $A\left(t,S;u\right)\geq A\left(t,S;l\right)$, for $t< u\leq l$.
W.l.o.g., let $t=0$ and write $A\left(S;l\right)\equiv A\left(0,S;l\right)$. Let $X_s = \log(S_s/S_0)  - 1_{\{ s\ge T_e\}}Z_e$ be the log stock price excluding the EA jump, and denote by  $(\mathcal{F}^u_s)_{0 \le s \le T}$ (resp.  $(\mathcal{F}^l_s)_{0 \le s \le T}$) the filtration generated by $S$  with  $T_e=u$ (resp. $T_e = l$). Notice that $\mathcal{F}^l_s =  \mathcal{F}^u_s$ for $s<u$ or $s\ge l$, and  $\mathcal{F}^l_s\subset \mathcal{F}^u_s$, for $  u \le s < l$. Therefore,  the sets of   stopping times w.r.t to $\mathcal{F}^u$ and $\mathcal{F}^l$, denoted respectively by  $\mathcal{T}^u$ and $\mathcal{T}^l$, satisfy $\mathcal{T}^l\subset \mathcal{T}^u$. Thus, for any  candidate  stopping time $\tau\in \mathcal{T}^l$, we have 
\begin{align}
&\mathbb{E}\left\{ e^{-r\tau}\left(K-Se^{X_{\tau}+1_{\left\{ \tau\geq u\right\} }Z_{e}}\right)^{+} | \,\F^l_{l-}  \right\}\label{ineq0}\\
&=\mathbb{E}\left\{ 1_{\{ \tau <l \}}e^{-r\tau}\left(K-Se^{X_{\tau}+1_{\left\{ \tau\geq u\right\} }Z_{e}}\right)^{+} | \,\F^l_{l-}  \right\}  + 
\mathbb{E}\left\{ 1_{\{ \tau  \ge l\}}e^{-r\tau}\left(K-Se^{X_{\tau}+1_{\left\{ \tau\geq u\right\} }Z_{e}}\right)^{+} | \,\F^l_{l-}  \right\} \label{ineq1}  \\
&\ge \mathbb{E}\left\{ 1_{\{ \tau <l \}}e^{-r\tau}\left(K-Se^{X_{\tau}}\right)^{+} | \,\F^l_{l-}  \right\}  + 
\mathbb{E}\left\{ 1_{\{ \tau  \ge l\}}e^{-r\tau}\left(K-Se^{X_{\tau}+1_{\left\{ \tau\geq l\right\} }Z_{e}}\right)^{+} | \,\F^l_{l-}  \right\}\label{ineq2}  \\&= \mathbb{E}\left\{e^{-r\tau}\left(K-Se^{X_{\tau}+1_{\left\{ \tau\geq l\right\} }Z_{e}}\right)^{+} | \F^l_{l-} \right\}.
\label{ineq3}
\end{align}
The inequality  holds as follows.  In  the first term of \eqref{ineq1}, given the information in $\F^l_{l-}$ and on $\{\tau < l\}$ the values of $\tau$ and $X_\tau$ are known, we   apply  Jensen's inequality to get the first term in \eqref{ineq2}. In addition, since  $\tau \ge l$  implies that $\tau \ge u$,   the second terms of \eqref{ineq1} and \eqref{ineq2} are equal. The last equality follows from the fact that $1_{\{ \tau \ge l\}}=0$ on $\{ \tau <l\}$. 

In turn, taking expectations in \eqref{ineq0} and \eqref{ineq3} and maximizing  both sides over   $\mathcal{T}^l$,  we get  \[A\left(S;u\right)\geq \sup_{\tau  \in \mathcal{T}^l} \mathbb{E}\left\{ e^{-r\tau}\left(K-Se^{X_{\tau}+1_{\left\{ \tau\geq u\right\} }Z_{e}}\right)^{+}  \right\}  \ge  A\left(S;l\right),\]where the first inequality follows from the inclusion $ \mathcal{T}^l\subset \mathcal{T}^u$.  
%which concludes the proof because
%\begin{eqnarray*}
%A\left(S;l\right) & = & \sup_{0\leq\tau\leq T}\mathbb{E}\left\{ \mathbb{E}\left\{ e^{-r\tau}\left(K-Se^{X_{\tau}+1_{\left\{ \tau\geq l\right\} }Z_{e}}\right)^{+}\vert \mathcal{F}^X_{T}\right\} \right\} .
%\end{eqnarray*} 

\subsection{Barone-Adesi Approximation} Here we  give a sketch of the derivation of the approximation formula \eqref{eq:AM_DE_apx-1}. The     arguments are adapted from  \citet{Barone-Adesi1987} and \citet{Kou2004}. We start by writing
\[
A\left(t,S\right)=P_{E}\left(t,S\right)+\epsilon\left(t,S\right),
\]
where $P_{E}$ is the European put price with an EA jump, and $\epsilon$ is a correction
term. Note that $P_E$ is computed  using   the results in Proposition
\ref{prop:KOU_extension}. In the continuation region, $\epsilon$ must satisfy the same PIDE as that of  $P_{E}$ and $A$, namely
\begin{equation}
r\epsilon\left(t,S\right)-\partial_t\epsilon\left(t,S\right)-\mathcal{L}\epsilon\left(t,S\right)=0,\label{eq:BA_original}
\end{equation}
where the operator $\mathcal{L}$ is defined  in (\ref{eq:Kou_generator}). The idea of the approximation in \citet{Barone-Adesi1987} is to remove the term $\epsilon_{t}$
in \eqref{eq:BA_original}. This involves letting   $\epsilon\left(t,S\right)\equiv g\left(z,S\right)z$,
with  $z\equiv1-e^{-r\left(T-t\right)}$, substituting in the above PIDE, and  ignoring the term $\left(1-z\right)g_{z}$. This results in  the OIDE
\begin{equation}
\frac{r}{z}\epsilon\left(t,S\right)-\mathcal{L}\epsilon\left(t,S\right)=0.\label{eq:BA_OIDE}
\end{equation}
While (\ref{eq:BA_OIDE}) holds in the continuation region, in the exercise reason we have $\epsilon\left(t,S\right)=K-S-P_{E}\left(t,S\right)$.  Following  \citet{Kou2004}, we consider  the ansatz\[
\epsilon\left(t,S\right)=\begin{cases}
\gamma_{1}\left(t\right)S^{-\beta_{1}}+\gamma_{2}\left(t\right)S^{-\beta_{2}} & \quad,\quad S>\alpha\left(t\right),\\
K-S-P_{E}\left(t,S\right) & \quad,\quad S\leq\alpha\left(t\right),
\end{cases}
\]
where $\alpha\left(t\right)$ is the boundary at time $t$. One can directly verify that the ansatz solves 
the OIDE (\ref{eq:BA_OIDE}) if $\beta_{1,2}$ 
are   two positive solutions to  \ref{eq:betas}, and if 
\begin{equation}
\frac{K}{\lambda_{2}}-\frac{\alpha\left(t\right)}{1+\lambda_{2}}-\int_{-\infty}^{0}P_{E}\left(t,\alpha e^{y}\right)e^{\lambda_{2}y}dy=\gamma_{1}\left(t\right)\frac{\alpha^{-\beta_{1}}}{\lambda_{2}-\beta_{1}}+\gamma_{2}\left(t\right)\frac{\alpha^{-\beta_{2}}}{\lambda_{2}-\beta_{2}},\label{eq:condition}
\end{equation}
In turn, we   impose the continuous-fit and smooth-pasting conditions. The first condition yields (\ref{eq:alphat}) while the second  condition, altogether with  
(\ref{eq:condition}), yields the constants $\gamma_1$ and $\gamma_2$ in (\ref{eq:gamma1}) and (\ref{eq:gamma2}).

\bibliographystyle{apa}
\linespread{-1} 
\begin{small}

\bibliography{biblio}
\end{small}

\end{document}